\newtheorem{theorem}{Theorem}[section]
\newtheorem{lemma}[theorem]{Lemma}
\newtheorem{proposition}[theorem]{Proposition}
\theoremstyle{remark}
\newtheorem{definition}[theorem]{Definition}
\newtheorem{remark}[theorem]{Remark}
\newcommand{\idc}{\mathbf{1}}
\newcommand{\Px}{ \mathbb{P} }
\newcommand{\Qx}{ \mathbb{Q} }
\newcommand{\N}{ \mathbb{N} }
\newcommand{\Ex}{ \mathbb{E} }
\newcommand{\Hxx}{ \mathbb{H} }
\newcommand{\FxM}{\mathbb{F}^{\rm M}}
\newcommand{\FM}{\mathcal{F}^{\rm M}}
\newcommand{\FMi}{\mathcal{F}^{{\rm M}i}}
\def\esssup_#1{\underset{#1}{\mathrm{ess\,sup\, }}}
\def\essinf_#1{\underset{#1}{\mathrm{ess\,inf\, }}}
\def\argmax_#1{\underset{#1}{\mathrm{arg\,max\, }}}
\def\argmin_#1{\underset{#1}{\mathrm{arg\,min\, }}}
\newcommand{\Gx}{\mathbb{G}}
\newcommand{\Fx}{\mathbb{F} }
\newcommand{\Uadt}{{\cal U}_t^{ad}}
\newcommand{\Uad}{{\cal U}_0^{ad}}
\newcommand{\F}{\mathcal{F}}
\newcommand{\R}{\mathbb{R}}
\begin{document}

\begin{frontmatter}
\title{Risk-Sensitive Credit Portfolio Optimization under Partial Information and Contagion Risk}
\runtitle{Risk-Sensitive Portfolio Optimization}

\begin{aug}
\author[A]{\fnms{Lijun} \snm{Bo}\ead[label=e1]{lijunbo@xidian.edu.cn}},
\author[B]{\fnms{Huafu} \snm{Liao}\ead[label=e2]{mathuaf@nus.edu.sg}}
\and
\author[C]{\fnms{Xiang} \snm{Yu}\ead[label=e3]{xiang.yu@polyu.edu.hk}}
\address[A]{School of Mathematics and Statistics, Xidian University, Xi'an 710071, P.R. China,
\printead{e1}}

\address[B]{Department of Mathematics, National University of Singapore, Singapore 119076, Singapore,
\printead{e2}}

\address[C]{Department of Applied Mathematics, The Hong Kong Polytechnic University, Hung Hom, Kowloon, Hong Kong,
\printead{e3}}

\runauthor{Bo, Liao and Yu}
\end{aug}

\begin{abstract}
This paper investigates the finite horizon risk-sensitive portfolio optimization in a regime-switching credit market with physical and information-induced default contagion. It is assumed that the underlying regime-switching process has countable states and is unobservable. The stochastic control problem is formulated under partial observations of asset prices and sequential default events. By establishing a martingale representation theorem based on incomplete and phasing out filtration, we connect the control problem to a quadratic BSDE with jumps, in which the driver term is non-standard and carries the conditional filter as an infinite-dimensional parameter. By proposing some truncation techniques and proving a uniform a priori estimates, we obtain the existence of a solution to the BSDE using the convergence of solutions associated to some truncated BSDEs. The verification theorem can be concluded with the aid of our BSDE results, which in turn yields the uniqueness of the solution to the BSDE.
\end{abstract}

\begin{keyword}[class=MSC2020]
\kwd[Primary ]{60H10}
\kwd{91G40}
\kwd[; secondary ]{93E11}
\end{keyword}

\begin{keyword}
\kwd{Risk-sensitive control}
\kwd{default contagion}
\kwd{partial observations}
\kwd{BSDE with jumps}
\kwd{martingale representation theorem}
\kwd{uniqueness of the solution}
\end{keyword}

\end{frontmatter}

\section{Introduction}\label{sec:intro}

Optimal portfolio allocation under risk-sensitive criteria has been an important topic in quantitative finance. The problem formulation can integrate the expected growth rate, the penalty term from the asymptotic variance as well as the risk sensitivity parameter into the dynamic decision making. To name but a few recent works on this topic, Bielecki and Pliska~\cite{BieleckiPliska1999} identify that the risk-sensitive portfolio optimization is related to a mean-variance optimization problem; Nagai and Peng~\cite{NagaiPeng2002} study an infinite time risk-sensitive portfolio optimization problem with an unobservable stochastic factor process; El-Karoui and Hamad\`{e}ne~\cite{ElKaroui03} study the risk-sensitive control and the associated game problems on stochastic functional games; Hansen, et al.~\cite{HansenSargent2006} reformulate it as a robust criteria in which perturbations are penalized by a relative entropy; Hansen and Sargent~\cite{HansenSargent07} solve a decision-making problem with hidden states and relate the prior distribution on the states to a risk-sensitive operator; Davis and LIeo~\cite{DavisLleo11,DavisLleo13} utilize the HJB equation approach to study the risk-sensitive portfolio optimization problem in the jump diffusion model with full information and without default contagion; Andruszkiewicz, et al.~\cite{ADL2016} consider the risk-sensitive asset management involving an observable regime switching process over finite states; Birge, et al.~\cite{BirgeBoCapponi2018} examine a risk-sensitive credit asset management problem with an observable stochastic factor; Bo, et al.~\cite{BoLiaoYu2019} recently investigate a risk-sensitive portfolio optimization problem with both default contagion and regime switching over countable states.

This paper aims to study the risk-sensitive portfolio optimization among multiple credit risky assets. Similar to \cite{BoLiaoYu2019}, the default contagion is considered in the sense that the default intensities of surviving names depend on the default events of all other assets as well as regime states. In particular, the regime switching process is described by a continuous time Markov chain with countable states and the default events of risky assets are depicted via some pure jump indicators. The joint impacts on the optimal portfolio by contagion risk and changes of market and credit regimes can be analyzed in an integrated fashion. One reason to consider possibly countable states is that the Markov chain is usually used to approximate the dynamics of stochastic factors. The standard discretization of sample space leads to countable states of Markov chain (see, e.g., \cite{AngTimmermann12}), therefore our theoretical results can support the numerical implementations of some credit portfolio optimization with stochastic factor processes.

As opposed to \cite{BoLiaoYu2019}, we further recast the problem into a more practical setting when the regime-switching process is not observable, in which the filtering procedure becomes necessary. Consequently, the contagion risk comes from two distinct sources: the ``physical" contagion that is from our way to model default intensity as a function depending on all other default indicators and the ``information-induced" contagion that is generated by our estimation of the regime transition probability of the incoming default using observations of past default events. Despite abundant existing work in portfolio optimization under a hidden Markov chain, see among \cite{PhamQuenez2001}, \cite{SassHaussmann04}, \cite{BauRieder2007}, \cite{BrangerKraft2014}, \cite{LimQuenez15}, \cite{BoCapponiSICON2017}, \cite{XiongZhou2007} and many others, this paper appears as the first one considering risk-sensitive control with both default contagion and partial observations based on countable regimes states. Comparing with \cite{BoLiaoYu2019}, the countable regime states results in an infinite-dimensional filter process and we confront a more complicated infinite-dimensional system of coupled nonlinear PDEs due to default contagion and the infinite-dimensional filter process in Proposition \ref{prop:dynamicspM}. We are lack of adequate tools to tackle this infinite-dimensional system by means of standard PDE theories such as operator method or fixed point method (see, e.g., \cite{Cerrai2001} and \cite{DelongK2008}). On the other hand, BSDE approach has become a powerful tool in financial applications with default risk or incomplete information; see Jiao, et al.~\cite{JiaoKhaPham} in the context of utility maximization under contagion risk and complete information, and Papanicolaou~\cite{Papanicolaou2019} on stochastic control under partial observations without default jumps. In the present paper, we choose to employ the BSDE method to tackle the risk-sensitive control problem and it is interesting to see that the associated BSDE in \eqref{eq:regulizeBSDE} has a non-standard driver term that deserves some careful investigations.

The mathematical contribution of this paper is twofold. Firstly, a new martingale representation theorem is established under partial and phasing-out information. Secondly, we extend the study of quadratic BSDE with jumps by considering a random driver induced from our control problem. More detailed explanations are summarized as below:
\begin{itemize}
\item[(i)] Regarding the aspect of partial observations, we are interested in the incomplete information filtration that possesses a phasing out feature due to sequential defaults of multiple assets. That is, the information of the Brownian motion will be terminated after the associated risky asset defaults. This assumption can better match with the real life situation that the investor can no longer perceive any information from the asset once it exits the market. We therefore focus on the filtration $\FxM$ defined in \eqref{eq:marketinforhatF} that is generated by stopped Brownian motions and the default indicator processes, and a new martingale representation theorem under $\FxM$, i.e., Theorem \ref{thm:hatFrep}, is needed. By applying the changing of measure and technical modifications of some arguments in Frey and Schmidt~\cite{FrSch12} together with the approximation scheme and Monotone Class Theorem, we can conclude Theorem \ref{thm:hatFrep}, which is an interesting new result.

\item[(ii)] There are many existing works on quadratic BSDE with jumps. Morlais~\cite{Morlais09} studies the existence of solution to the BSDE with jumps arising from an exponential utility maximization problem with a bounded terminal condition. Morlais~\cite{Morlais} extends the work when the jump measure satisfies the infinite-mass. Kazi-Tani, et al.~\cite{ChaoZhou2015} apply a fixed point method to study the quadratic BSDE with jumps given a small $L^{\infty}$-terminal condition. Antonelli and Mancini~\cite{Fabio2016} further refines the results of the previous work by considering a generator depending on all components and unbounded terminal conditions. All aforementioned work crucially rely on the same quadratic-exponential structure of the driver term, namely quadratic growth in the Brownian component and exponential growth with respect to the jump term, which entails a priori estimates of the solution. On the contrary, the random driver in our quadratic BSDE \eqref{eq:regulizeBSDE} does not satisfy this property, which results from the risk sensitive preference engaging contagion dependence and the filtering process, see Remark \ref{remarkdiff} for detailed explanations. Consequently, the existence of solution can not follow from the same analysis in the literature. This is the main motivation for us to conduct this research, which not only can contribute to the risk sensitive portfolio optimization under default contagion, but will also enrich the study of quadratic BSDE with jumps by allowing some non-standard random drivers.

Note that Ankirchner, et al.~\cite{SCA2010} consider a quadratic BSDE driven by Brownian motion and a compensated default process, and the quadratic-exponential structure is not postulated therein. Nevertheless, the arguments in \cite{SCA2010} also can not be adopted in our setting because \cite{SCA2010} only considers a single default jump and their BSDE can eventually be split into two BSDE problems without jumps, see Remark \ref{remarkdiff} for the detailed comparison. To overcome some new difficulties caused by the random driver, we follow a two-step procedure. In the first step, we propose some tailor-made truncations on the driver term to make it Lipschitz uniformly in time and in sample path such that the existence and uniqueness of the solution can easily follow. The challenging part is to derive a uniform a priori estimates for all truncated solutions, in which the bounded estimate of the jump solution of the truncated quadratic BSDE will become helpful when the random driver does not exhibit the standard structure. In the second step, we adopt and modify some approximation arguments in Kobylanski~\cite{Koby2000} to fit into our setting with jumps and verify that the limiting process from step one solves the original BSDE in an appropriate space. We believe that the analysis of BSDE \eqref{eq:regulizeBSDE} can be further extended to tackle more general random drivers that stem from other default contagion models.
\end{itemize}

The rest of the paper is organized as follows. Section \ref{sec:model} introduces the model of credit risky assets with regime-switching under partial information. Section \ref{sec:filer-MRT} focuses on the filter process and proves a new martingale representation theorem. Section~\ref{sec:risk-senpartialinf} relates the risk-sensitive portfolio optimization problem under partial information to a quadratic BSDE with jumps. Section \ref{sec:BSDE} is devoted to the proof of the existence of solution to the BSDE problem. In Section \ref{sec:optimum}, the verification theorem is concluded by using our BSDE results, which further implies the uniqueness of the solution to the BSDE problem. The technical proofs of some auxiliary results are reported in Appendix \ref{app:proof1}.

\section{The model}\label{sec:model}

We first introduce the market model consisting of credit risky assets with default contagion and regime-switching. Let $(\Omega,\mathcal{F},\mathbb{F},\Px)$ be a complete filtered probability space with the filtration $\Fx=(\F_t)_{t\geq0}$ satisfying the usual conditions. We consider $n$ defaultable risky assets and one riskless bond, whose dynamics are $\Fx$-adapted processes and are defined via three components:
\begin{itemize}
  \item {\it Hidden regime-switching process.} The hidden regime-switching process $I$ is described by a continuous time Markov chain with the generator matrix $Q=(q_{ij})_{1\leq i,j\leq m}$, where $2\leq m\leq+\infty$. The state space of the regime-switching process $I$, denoted by $S_I=\{1,2,\ldots,m\}$, may contain countably many states. It is assumed henceforth that the information of the regime-switching process $I$ is not observable by the investor.
  \item {\it Default indicator process.} Let $H=(H_i(t);~i=1,\ldots,n)_{t\geq0}$ denote the default indicator process with the state space ${S}_H=\{0,1\}^n$. It is assumed that the bivariate process $(I(t),H(t))_{t\geq0}$ is a Markov process with the state space $S_I\times S_H$, and moreover $(I(t))_{t\geq0}$ and $(H(t))_{t\geq0}$ do not jump simultaneously. With a stochastic rate ${\bf1}_{\{H_i(t)=0\}}\lambda_{i}(I(t),H(t))={\bf1}_{\{H_i(t)=0\}} \lambda_i(I(t),(H_1(t),\ldots,H_{i-1}(t),0,H_{i+1}(t),\ldots,H_n(t)))$, the default indicator process $H$ transits from a state
      \[
      H(t):=(H_1(t),\ldots,H_{i-1}(t),H_i(t),H_{i+1}(t),\ldots,H_n(t))
      \]
      in which the risky asset $i$ is alive ($H_i(t)=0$) to the neighbor state
      \[
      {H}^i(t):=(H_1(t),\ldots,H_{i-1}(t),1-H_i(t),H_{i+1}(t),\ldots,H_n(t))
      \]
      in which the asset $i$ has defaulted. The default contagion is allowed to occur among $n$ risky assets in view that the default intensity of the $i$-th asset depends on the default state $H_j(t)$ for all $j\neq i$ in the market on the event $\{H_i(t) = 0\}$. From its construction, simultaneous defaults are precluded because transitions from $H(t)$ can only occur to a state differing from $H(t)$ in exactly one of the entries (see \cite{BoCapponiSIAM2018}). The intensity function $\lambda_i(k,z)$ is assumed to be strictly positive for all $z\in S_H$. The default intensity of the $i$-th risky asset may change either if (i) a risky asset in the portfolio defaults (counterparty risk effect), or (ii) there are transitions in the macro-economic environment (regime switching). The default time of the $i$-th risky asset with the initial time $t\geq0$ is then given by
\begin{align}\label{eq:default-time}
\tau_i^t:=\inf\{s\geq t;\ H_i(s)=1\},\qquad i=1,\ldots,n.
\end{align}
For simplicity, we set $\tau_i:=\tau_i^0$. Our default model belongs to a rich class of interacting Markovian intensity models, introduced by Frey and Runggaldier~\cite{FreyRung2010}. The Dynkin's formula yields that the process of pure jumps
\begin{align}\label{eq:default-mart}
\Upsilon_i(t):=H_i(t)-\int_0^{t\wedge\tau_i}\lambda_i(I(s),H(s))ds,\qquad t\geq0
\end{align}
is a $(\Px,\Fx)$-martingale, $i=1,\ldots,n$. Let us also denote $\Upsilon=(\Upsilon_i(t);\ i=1,\ldots,n)_{t\geq0}^{\top}$.
\item {\it Pre-default price dynamics.}\quad The price process of the riskless bond $B(t)$ is given by $dB(t)= rB(t)dt$ with $B(0)=1$, where $r\geq0$ is the interest rate. Let $W=(W_i(t);\ i=1,\ldots,n)_{t\geq0}^{\top}$ be an $n$-dimensional Brownian motion. The pre-default price dynamics of $n$ risky assets are given by
\begin{align}\label{eq:P}
dP(t) = {\rm diag}(P(t)) \{(\mu(I(t))+\lambda(I(t),H(t))) dt + \sigma dW(t)\},
\end{align}
where $P(t)=(P_i(t);\ i=1,\ldots,n)^{\top}$. 
For each regime state $k\in S_I$, $\mu(k)$ is an $\R^n$-valued column vector, and $\lambda(k,z)=(\lambda_i(k,z);\ i=1,\ldots,n)^{\top}$ stands for the vector of {default intensities}. The volatility $\sigma={\rm diag}((\sigma_i)_{i=1,\ldots,n})$ is an $\R^{n\times n}$-valued constant diagonal matrix. Here we assume $\sigma_i>0$, $i=1,\ldots,n$, and the inverse of $\sigma$ is denoted by $\sigma^{-1}$.
\end{itemize}

Taking the default into consideration, we can write the price process $\tilde{P}_i(t)$ of the $i$-th defaultable asset by $\tilde{P}_i(t)=(1-H_i(t))P_i(t)$. Integration by parts yields that
\begin{align}\label{eq:tildeP}
d\tilde{P}(t) = {\rm diag}(\tilde{P}(t-))\{\mu(I(t))dt +  \sigma dW(t)-d\Upsilon(t)\}.
\end{align}

Recall that the information of the hidden regime-switching process $I$ is not accessible by the investor, who can only observe public prices of risky assets continuously and the default events of assets (i.e., the information generated by $\tilde{P}$ and $H$). It is our first task to formulate the model dynamics under partial information filtration. To this end, for an adapted process $X=(X(t))_{t\geq0}$, let $\F_t^X=\sigma(X(s);\ s\leq t)$ be the natural filtration generated by $X$. We introduce the auxiliary process ${W}^o=({W}_1^o(t),\ldots,{W}_n^o(t))_{t\geq0}^{\top}$ defined by
\begin{align}\label{eq:tildeWio}
{W}_i^o(t) &:=\sigma_i^{-1}\int_0^{t}(\mu_i(I(s))+\lambda_i(I(s),H(s)))ds+W_i(t),\quad t\geq0,
\end{align}
for $i=1,\ldots,n$. Let $W^{o,\tau}=(W_1^{o,\tau}(t),\ldots,W_n^{o,\tau}(t))_{t\geq0}^{\top}$ be the stopped process of $W^{o}$ by the default times $(\tau_1,\ldots\tau_n)$ in the sense that
\begin{align}\label{eq:tildeWi}
W_i^{o,\tau}(t):=W_i^o(t\wedge\tau_i), \quad t\geq0,\quad  \text{for $i=1,\ldots,n$.}
\end{align}
In view of \eqref{eq:P} and \eqref{eq:tildeP}, the available market information filtration ${\Fx}^{\rm M}:=(\FM_t)_{t\geq0}$ satisfies that
\begin{align}\label{eq:marketinforhatF}
\FM_t:=\mathcal{F}^{\tilde{P}}_t\vee\mathcal{F}^{H}_t={\F}_t^{{W^{o,\tau}}}\vee{\F}_t^H,\quad t\geq0,
\end{align}
where $({\F}_t^{{W^{o,\tau}}})_{t\geq0}$ and $({\F}_t^H)_{t\geq0}$ are the filtration generated by $W^{o,\tau}$ and $H$ respectively, i.e., ${\F}_t^{{W^{o,\tau}}}=\bigvee_{i=1}^n \F_t^{W_i^{o,\tau}}$ and ${\F}_t^H=\bigvee_{i=1}^n \F_t^{H_i}$ for $t\geq0$.

From this point onwards, the next assumption is imposed especially when the number of regime states is infinite, i.e., $m=+\infty$.
\begin{itemize}
  \item[({\bf H})] For $(i,k,z)\in\{1,\ldots,n\}\times S_I\times S_H$, there exist positive constants $\varepsilon$ and $C$ independent of $k$ such that $\varepsilon\leq|\lambda_i(k,z)|+|\mu_i(k)|\leq C$.
\end{itemize}
Note that if the number of regime states is finite, the assumption ({\bf H}) holds trivially by taking
\begin{align*}
\varepsilon:=\min_{(i,k,z)}\{\lambda_i(k,z)|+|\mu_i(k)|\}\ \ \text{and}\ \ C:=\max_{(i,k,z)}\{\lambda_i(k,z)|+|\mu_i(k)|\}.
\end{align*}

\section{Filter processes and martingale representation}\label{sec:filer-MRT}

The goal of this section is to establish a martingale representation theorem for the filter process of the hidden regime-switching process $I=(I(t))_{t\geq0}$ given the partial information $\FxM$ defined by \eqref{eq:marketinforhatF}. This result can simplify our risk-sensitive portfolio optimization problem, which will be elaborated in the next section.

For $k\in S_I$, we introduce the filter process of the hidden regime-switching process $I$ by
\begin{align}\label{eq:projectFMa}
p_k^{\rm M}(t)&:=\Px(I(t)=k|\FM_t),\quad t\geq0.
\end{align}
The state space of $p^{\rm M}=(p_k^{\rm M}(t);\ k\in S_I)^{\top}_{t\geq0}$ is denoted by $S_{p^{\rm M}}$. When $m<+\infty$, it is shown in Lemma B.1 in Capponi, et al.~\cite{capponi} that $S_{p^{\rm M}}=\{p\in(0,1)^m;~\sum_{i=1}^mp_i=1\}$. In our BSDE approach, it is not important if the boundary point in the infinite-dimensional state space $S_{p^{\rm M}}$ can be achieved or not.

Let us also introduce the enlarged filtration $\breve{\Fx}:={\Fx}^{{W^o}}\vee{\Fx}^H$. We first apply a well-known martingale representation (see, e.g., Proposition 7.1.3 in Bielecki and Rutkowski~\cite{Bielecki2002}) of the filter process under the filtration $\breve{\Fx}$. Consider $W^{\rm M}=(W_1^{\rm M}(t),\ldots,W_n^{\rm M}(t))_{t\geq0}^{\top}$ defined by
\begin{align}\label{eq:WM-M}
W_i^{\rm M}(t) &:=W_i^{o,\tau}(t)-\sigma_i^{-1}\int_0^{t\wedge\tau_i}({\mu}_i^{\rm M}(p^{\rm M}(s))+\lambda_i^{\rm M}(p^{\rm M}(s),H(s)))ds,~~ i=1,\ldots,n,
\end{align}
in which we define
\begin{align}\label{eq:filtersM}
\mu^{\rm M}(p):= \sum_{k\in S_I}\mu(k)p_k,\quad \lambda^{\rm M}(p,z):=\sum_{k\in S_I}\lambda(k,z)p_k,\quad (p,z)\in S_{p^{\rm M}}\times{S}_H.
\end{align}
Note that $\mu^{\rm M}(p^{\rm M}(t))$ and $\lambda^{\rm M}(p^{\rm M}(t),z)$ are conditional expectations of $\mu(I(t))$ and $\lambda(I(t),z)$ given the filtration $\FM_t$. The assumption ({\bf H}) guarantees that $\mu^{\rm M}(p)$ and $\lambda^{\rm M}(p,z)$ defined in \eqref{eq:filtersM} are finite. Therefore, it is not difficult to verify that, under ({\bf H}), the process $W^{\rm M}=(W_i^{\rm M}(t);\ i=1,\ldots,n)_{t\geq0}^{\top}$ is a continuous $(\Px,\FxM)$-martingale. Also, we can show that, for $i=1,\ldots,n$, the pure jump process defined by
\begin{align}\label{eq:MMi}
\Upsilon_i^{\rm M}(t)&:=H_i(t)-\int_0^t\lambda_i^{\rm M}(p^{\rm M}(s),H(s))ds,\quad t\geq0
\end{align}
is a $(\Px,\FxM)$-martingale.

First, we have the next auxiliary result.
\begin{lemma}\label{lem:stopmeasrbl}
For $t\geq0$ and $i=1,\ldots,n$, let us denote $\breve{\F}_t^i:=\F_t^{W_i^o}\vee\F_t^{H_i}$ and $\FMi_t:=\F_t^{W_i^{o,\tau}}\vee\F_t^{H_i}$. For any bounded $\R$-valued r.v. $\xi\in\breve{\F}^i_t$, we have $\xi{\bf 1}_{\{\tau_i\geq t\}}\in\FMi_t$.
\end{lemma}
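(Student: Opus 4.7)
\medskip

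\noindent\textbf{Proof plan.} The whole statement is a measurability identity; the key observation is that on the event $\{\tau_i\ge t\}$ the stopped path $s\mapsto W_i^{o,\tau}(s)=W_i^o(s\wedge\tau_i)$ coincides with $s\mapsto W_i^o(s)$ for every $s\le t$. The plan is therefore to (i) check that the indicator $\mathbf{1}_{\{\tau_i\ge t\}}$ already lies in $\FMi_t$, (ii) verify the claim on a multiplicative class of cylinder functions that generates $\breve{\F}_t^i$, and (iii) extend to all bounded $\breve{\F}_t^i$-measurable random variables by the functional Monotone Class Theorem.

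For step (i), since $H_i(s)=\mathbf{1}_{\{\tau_i\le s\}}$ and $\tau_i$ is an $\F^{H_i}$-stopping time, the set $\{\tau_i>t-\tfrac{1}{n}\}=\{H_i(t-\tfrac{1}{n})=0\}\in\F^{H_i}_{t-1/n}$. Intersecting over $n\ge 1$ yields $\{\tau_i\ge t\}\in\F^{H_i}_t\subset\FMi_t$, so multiplication by $\mathbf{1}_{\{\tau_i\ge t\}}$ preserves $\FMi_t$-measurability.

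For step (ii), consider the multiplicative class
\[
\mathcal{M}_t:=\Bigl\{\,f\bigl(W_i^o(s_1),\ldots,W_i^o(s_k)\bigr)\,g\bigl(H_i(u_1),\ldots,H_i(u_l)\bigr)\;:\;s_j,u_j\in[0,t],\ f,g\ \text{bounded Borel}\,\Bigr\}.
\]
For $\xi\in\mathcal{M}_t$, on $\{\tau_i\ge t\}$ we have $s_j\le t\le\tau_i$ and thus $W_i^o(s_j)=W_i^o(s_j\wedge\tau_i)=W_i^{o,\tau}(s_j)$, so
\[
\xi\,\mathbf{1}_{\{\tau_i\ge t\}}
=f\bigl(W_i^{o,\tau}(s_1),\ldots,W_i^{o,\tau}(s_k)\bigr)\,g\bigl(H_i(u_1),\ldots,H_i(u_l)\bigr)\,\mathbf{1}_{\{\tau_i\ge t\}},
\]
which is the product of an $\F_t^{W_i^{o,\tau}}$-measurable factor and an $\F_t^{H_i}$-measurable factor, hence $\FMi_t$-measurable. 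Note that $\mathcal{M}_t$ is stable under pointwise multiplication and $\sigma(\mathcal{M}_t)=\breve{\F}_t^i$.

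For step (iii), let $\mathcal{H}_t$ be the collection of bounded $\breve{\F}_t^i$-measurable random variables $\xi$ such that $\xi\,\mathbf{1}_{\{\tau_i\ge t\}}\in\FMi_t$. Then $\mathcal{H}_t$ is a linear space containing the constants, is closed under bounded monotone convergence (pointwise limits preserve measurability), and contains $\mathcal{M}_t$. The functional Monotone Class Theorem then yields $\mathcal{H}_t\supseteq$ all bounded $\sigma(\mathcal{M}_t)$-measurable random variables, which is exactly the desired conclusion. There is no genuine obstacle here; the only point requiring a little care is the pathwise identity $W_i^o(s)=W_i^{o,\tau}(s)$ on $\{\tau_i\ge t\}$ for $s\le t$, which is precisely what allows us to trade the unstopped Brownian for its stopped version after multiplying by $\mathbf{1}_{\{\tau_i\ge t\}}$.
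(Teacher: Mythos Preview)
Your proof is correct and follows essentially the same approach as the paper: both rely on the pathwise identity $W_i^o(s)=W_i^{o,\tau}(s)$ on $\{\tau_i\ge t\}$ for $s\le t$ and then conclude via the functional Monotone Class Theorem. The only cosmetic difference is that you combine the $W_i^o$- and $H_i$-cylinder functions into a single multiplicative class and apply the Monotone Class Theorem once, whereas the paper first handles $\F_t^{W_i^o}$, then $\F_t^{H_i}$, and applies the theorem twice.
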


\begin{proof}
Denote ${\cal L}$ the family of all bounded $\R$-valued r.v.s in the sense that
\begin{align*}
{\cal L}:=\{\xi\in\breve{B}_t^i;~\xi{\bf1}_{\{\tau_i\geq t\}}\in\FMi_t\},
\end{align*}
where $\breve{B}_t^i$ stands for all bounded $\R$-valued r.v.s that are $\breve{\F}_t^i$-measurable. The class ${\cal L}$ is nonempty as all constants are in ${\cal L}$. Moreover, it holds that
\begin{itemize}
    \item[(i)] Let $\xi_k\in\mathcal{L}$ for $k\geq1$ such that $\lim_{k\to\infty}\xi_k=\xi$, then $\xi{\bf1}_{\{\tau_i\geq t\}}=\lim_{k\to\infty}\xi_k{\bf1}_{\{\tau_i\geq t\}}\in\FMi_t$.
    \item[(ii)] Let $\xi_i\in{\cal L}$ with $i=1,2$. Then, for all $a,b\in\R$,
    $\{a\xi_1+b\xi_2\}{\bf1}_{\{\tau_i\geq t\}}=a\xi_1{\bf1}_{\{\tau_i\geq t\}}+b\xi_2{\bf1}_{\{\tau_i\geq t\}}\in\FMi_t$.
\end{itemize}
We define another class of r.v.s by
\begin{align}\label{eq:multiplicativeclass}
{\cal M}:=\left\{\prod_{\ell=1}^k{\bf1}_{\{[W_i^o(t_{\ell})]^{-1}(A_{\ell})\}};\ 0\leq t_1<\ldots<t_k\leq t,\ A_{\ell}\in{\cal B}(\R),\ \ell=1,\ldots,k\in\N\right\}.
\end{align}
It is not difficult to see that ${\cal M}$ is a multiplicative class, and it holds that $\F_t^{W_i^o}=\sigma({\cal M})$. Furthermore, each $\xi\in{\cal M}$ admits the form that
\begin{align*}
\xi=\prod_{\ell=1}^k{\bf1}_{\{[W_i^o(t_{\ell})]^{-1}(A_{\ell})\}},\ \mbox{where}\ 0\leq t_1<\ldots<t_k\leq t,\ A_{\ell}\in{\cal B}(\R),\ \ell=1,\ldots,k.
\end{align*}
Therefore, we obtain that
\begin{align*}
\xi{\bf1}_{\{\tau_i\geq t\}}=\prod_{\ell=1}^k{\bf1}_{\{[W_i^o(t_{\ell})]^{-1}(A_{\ell})\}}{\bf1}_{\{\tau_i\geq t\}}=\prod_{\ell=1}^k{\bf1}_{\{[W_i^{o,\tau}(t_{\ell})]^{-1}(A_{\ell})\}}{\bf1}_{\{\tau_i\geq t\}}\in \FMi_t.
\end{align*}
This implies that ${\cal M}\subset{\cal L}$. Monotone Class Theorem entails that ${\cal L}$ contains all bounded $\sigma({\cal M})$-measurable r.v.s. On the other hand, we have $\F_t^{H_i}\subset{\cal L}$ by definition. We next consider
\begin{align*}
\breve{{\cal M}}:=\left\{{\bf1}_{A}(\omega){\bf1}_{B}(\omega);\ A\in\F_t^{W_i^o},\ B\in\F_t^{H_i}\right\}.
\end{align*}
It holds that $\breve{{\cal M}}$ is a multiplicative class and $\breve{\F}_t^i=\sigma(\breve{{\cal M}})$. Moreover, for any $\eta\in\breve{{\cal M}}$, $\eta$ admits the form that $\eta={\bf1}_{A}{\bf1}_{B}$, where $A\in\F_t^{W_i^o}$ and $B\in\F_t^{H_i}$. It has been proved that both ${\bf1}_{A}$ and ${\bf1}_{B}$ are in ${\cal L}$, and hence
\begin{align*}
\eta{\bf1}_{\{\tau_i\geq t\}}={\bf1}_{A}{\bf1}_{B}{\bf1}_{\{\tau_i\geq t\}}=({\bf1}_{A}{\bf1}_{\{\tau_i\geq t\}})({\bf1}_{B}{\bf1}_{\{\tau_i\geq t\}})\in\FMi_t,
\end{align*}
which shows that $\eta\in{\cal L}$. By Monotone Class Theorem again, it holds that ${\cal L}$ contains all bounded $\breve{\F}_t^i$-measurable r.v.s.
\end{proof}

We next present the main result of this section.
\begin{theorem}\label{thm:hatFrep}
Let $T>0$ be a terminal horizon and $L=(L_t)_{t\in[0,T]}$ be a real-valued $(\Px,\FxM)$-square integrable martingale with bounded jumps. There exist $\FxM$-predictable and square integrable $\alpha^{\rm M}=(\alpha_1^{\rm M}(t),\ldots,\alpha_n^{\rm M}(t))_{t\in[0,T]}^{\top}$ and $\beta^{\rm M}=(\beta_1^{\rm M}(t),\ldots,\beta_n^{\rm M}(t))_{t\in[0,T]}^{\top}$ such that, for all $t\in[0,T]$,
\begin{align}\label{eq:L-mart-rep_FM}
L_t=L_0+\sum_{i=1}^n\int_0^t\alpha_i^{\rm M}(s)dW_i^{\rm M}(s)+\sum_{i=1}^n\int_0^t\beta_i^{\rm M}(s)d\Upsilon_i^{\rm M}(s).
\end{align}
Here, the $(\Px,\FxM)$-martingales $W^{\rm M}$ and $\Upsilon^{\rm M}$ are given by \eqref{eq:WM-M} and \eqref{eq:MMi}.
\end{theorem}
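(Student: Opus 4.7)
The strategy is to combine a Girsanov-type change of measure, a standard predictable martingale representation on the enlarged filtration $\breve{\Fx}=\Fx^{W^o}\vee\Fx^H$, and a projection down to the reduced filtration $\FxM$ using Lemma \ref{lem:stopmeasrbl}, with a Monotone Class Theorem to conclude. In the first step I would introduce an equivalent reference measure $\Qx\sim\Px$ on $\F_T$ via an exponential Girsanov density designed so that, under $\Qx$, $W^o$ is an $n$-dimensional Brownian motion on $\breve{\Fx}$ and each $H_i$ has a deterministic reference intensity (say, unit intensity on $[0,\tau_i]$), with the driving noise conditionally independent of the hidden chain $I$. Under $(\Qx,\breve{\Fx})$ the predictable representation property with respect to $W^o$ and the reference compensated jump processes $\tilde\Upsilon_i(t):=H_i(t)-t\wedge\tau_i$ follows from Proposition 7.1.3 of \cite{Bielecki2002}; transferring back to $\Px$ via the Radon--Nikodym density process yields the analogous representation on $(\Px,\breve{\Fx})$ in terms of the $\Px$-innovations associated with $W^o$ and $H$ on $\breve{\Fx}$.

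In the second step, given a bounded-jump $(\Px,\FxM)$-square integrable martingale $L$, I would view $L$ as a $(\Px,\breve{\Fx})$-martingale, apply the previous step to obtain $\breve{\Fx}$-predictable integrands, and then take $\FxM$-optional projections of these integrands. The phasing-out structure forces the projected Brownian integrand $\alpha_i^{\rm M}$ to vanish on $\{t>\tau_i\}$, while on $\{\tau_i\geq t\}$ Lemma \ref{lem:stopmeasrbl} shows that every bounded $\breve{\F}^i_t$-measurable functional agrees with an $\FMi_t$-measurable one after multiplication by ${\bf 1}_{\{\tau_i\geq t\}}$, so the projected integrands remain $\FxM$-predictable and square-integrable without losing pre-default information. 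Substituting the definitions \eqref{eq:WM-M} and \eqref{eq:MMi} then converts the reference increments into the $\Px$-innovations $dW_i^{\rm M}$ and $d\Upsilon_i^{\rm M}$, yielding \eqref{eq:L-mart-rep_FM} for martingales whose terminal value lies in the multiplicative class $\breve{{\cal M}}$ used in the proof of Lemma \ref{lem:stopmeasrbl}. Extension to all bounded-jump $(\Px,\FxM)$-square integrable martingales is then carried out by a Monotone Class argument on the terminal value $L_T$: linearity and bounded monotone convergence from $\breve{{\cal M}}$ to bounded $\FM_T$-measurable random variables (the bounded-jump hypothesis controls the jump integrals in the limit), followed by $L^2$-density and the isometry of the stochastic integrals to reach arbitrary $L^2$ terminal values, all while preserving the $\FxM$-predictability and square-integrability of $\alpha^{\rm M}$ and $\beta^{\rm M}$.

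The main obstacle is the interplay between the phasing-out feature of $\FxM$ and the predictable representation property. The ``natural'' representation lives on $\breve{\Fx}$, where post-default information about $W^o_i$ is still present, whereas in $\FxM$ that information is irretrievably lost after $\tau_i$; the difficulty is not to obtain \emph{some} representation but to verify that the projected integrands are $\FxM$-predictable, vanish on $\{t>\tau_i\}$ in the Brownian part, and still reproduce $L$ in the $L^2$ sense. Lemma \ref{lem:stopmeasrbl} supplies the bridge at the level of individual random variables; lifting it to predictable integrands without breaking either square-integrability or the stochastic-integral identity is the technical heart of the proof, and is precisely where the ``technical modifications'' of the arguments of \cite{FrSch12} alluded to in the introduction are required.
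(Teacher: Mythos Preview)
Your outline contains a genuine gap at the second step. You write ``view $L$ as a $(\Px,\breve{\Fx})$-martingale, apply the previous step to obtain $\breve{\Fx}$-predictable integrands, and then take $\FxM$-optional projections of these integrands.'' But a $(\Px,\FxM)$-martingale need not be a $(\Px,\breve{\Fx})$-martingale: $\FxM\subset\breve{\Fx}$, and enlarging the filtration destroys the martingale property unless an immersion hypothesis holds, which is precisely what fails here (after $\tau_i$ the larger filtration sees $W_i^o$-information that $\FxM$ does not). Even if you replace $L$ by $\breve L_t:=\Ex[L_T\mid\breve\F_t]$ and represent $\breve L$, taking $\FxM$-optional projections of the integrands does \emph{not} yield a representation of $L_t=\Ex[L_T\mid\FM_t]$: optional projection and stochastic integration do not commute, and recovering the $\FxM$-conditional expectation of the stochastic integrals is exactly the filtering problem the theorem is meant to solve. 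So the argument is circular.

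The paper avoids this trap by never working on the full enlarged filtration $\breve{\Fx}$. Instead it first proves a representation \emph{directly on} $\FxM$ under the reference measure $\Px^*$ (Lemma~\ref{lem:martrepPstar}), built up asset by asset: for each $i$ one observes $\FMi_T\subset\breve{\F}^i_{T\wedge\tau_i}$, so the single-asset $\breve{\Fx}^i$-representation of any bounded $\xi_i\in\FMi_T$ can be stopped at $\tau_i$, and Lemma~\ref{lem:stopmeasrbl} shows the stopped integrands $\breve\alpha_i(t)\mathbf{1}_{\{\tau_i\ge t\}}$ are $\FMi$-predictable. Under $\Px^*$ the single-asset driving martingales $(W_i^{o,\tau},\Upsilon_i^*)$ are orthogonal across $i$, so It\^o's product rule gives the representation for products $\prod_i\xi_i$, and the Monotone Class Theorem extends to all of $\FM_T$. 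Only after the $\FxM$-representation under $\Px^*$ is in hand does one transfer to $\Px$ by Girsanov/integration by parts. The key structural point you are missing is the use of the \emph{single-asset} filtrations $\breve{\Fx}^i$ (where the stopping trick and Lemma~\ref{lem:stopmeasrbl} apply cleanly) together with orthogonality under $\Px^*$, rather than the full $\breve{\Fx}$ followed by a projection that cannot be carried out.
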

Note that the observable information $\FxM$ is generated by $W^{o,\tau}$ and $H$, where $W^{o,\tau}$ is a stopped Brownian motion under $\Px$. Our proof of the theorem can be outlined as two steps: Firstly, we prove a martingale representation w.r.t. $\FxM$ using an auxiliary probability measure $\Px^*$, under which the observed $W^{o,\tau}$ has zero drift and $H$ has the unit default intensity. Secondly, we change the measure and establish the martingale representation under the original probability measure $\Px$.

Fix $t\in[0,T]$ and let $u\in[t,T]$. We introduce
\begin{align}\label{eq:density-Gam}
\Gamma^t(u):=\sum_{i=1}^n\int_t^u(\lambda_i^{-1}(s-)-1)d\Upsilon_i(s)-\sum_{i=1}^n\sigma_i^{-1}\int_t^{u\wedge\tau_i^t}(\mu_i(s)+\lambda_i(s))dW_i(s),
\end{align}
where the simplified notations $\mu_i(t):=\mu_i(I(t))$ and $\lambda_i(t):=\lambda_i(I(t),H(t))$ are used. We then define
\begin{align}\label{eq:Pstar}
    \frac{d\Px^*}{d\Px}\big|_{\F_T}={\cal E}(\Gamma^0)_T,
\end{align}
where ${\cal E}$ denotes the Dol\'eans-Dade exponential and $\Gamma^0=(\Gamma^0(t))_{t\in[0,T]}$. The assumption ({\bf H}) guarantees that $\Px^*\sim\Px$ is a probability measure. Moreover, $W^o$ is an $\Fx$-Brownian motion under $\Px^*$, while the observed process $W^{o,\tau}$ is a stopped $\Fx$-Brownian motion. The $\Fx$-intensity of $H$ is $1$, that is, for $i=1,\ldots,n$, we have that
\begin{align}\label{eq:Mstar}
\Upsilon_i^*(t):=H_i(t)-\int_0^t(1-H_i(s))ds,\quad t\in[0,T]
\end{align}
is an $\Fx$-martingale of pure jumps (It is in fact also an $\FxM$-martingale). The next result serves as the first step to prove Theorem \ref{thm:hatFrep}.
\begin{lemma}\label{lem:martrepPstar}
Let $L=(L_t)_{t\in[0,T]}$ be a real-valued $(\Px^*,\FxM)$-square integrable martingale with bounded jumps. There exist $\FxM$-predictable processes $\alpha^{\rm M}=(\alpha_1^{\rm M}(t),\ldots,\alpha_n^{\rm M}(t))_{t\in[0,T]}^{\top}$ and $\beta^{\rm M}=(\beta_1^{\rm M}(t),\ldots,\beta_n^{\rm M}(t))_{t\in[0,T]}^{\top}$ such that, for all $t\in[0,T]$,
\begin{align}\label{eq:martrepPstar}
L_t=L_0+\sum_{i=1}^n\int_0^t\alpha_i^{\rm M}(s)dW_i^{o,\tau}(s)+\sum_{i=1}^n\int_0^t\beta_i^{\rm M}(s)d\Upsilon_i^*(s).
\end{align}
\end{lemma}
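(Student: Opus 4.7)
The plan is to establish the representation in three steps, using the enlarged filtration $\breve{\Fx}=\Fx^{W^o}\vee\Fx^H$ as a stepping stone and then projecting onto the observable filtration $\FxM$. Under $\Px^*$ the processes have a particularly tractable structure: $W^o$ is an $\Fx$-Brownian motion, and $\Upsilon^*_i$ is an $\Fx$-martingale corresponding to a counting process with unit intensity on $\{H_i=0\}$, with the continuous and jump components being orthogonal.

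First, I would invoke a standard martingale representation on $(\Px^*,\breve{\Fx})$ for the pair $(W^o,\Upsilon^*)$. Since the $n$-dimensional Brownian motion $W^o$ and the pure-jump martingales $\Upsilon^*_i$ together enjoy the predictable representation property under $\Px^*$ (a consequence of the independence/orthogonality between continuous and jump components after the Girsanov change of measure), any $(\Px^*,\breve{\Fx})$-square integrable martingale $\tilde L$ admits
\begin{align*}
\tilde L_t=\tilde L_0+\sum_{i=1}^n\int_0^t\tilde\alpha_i(s)dW^o_i(s)+\sum_{i=1}^n\int_0^t\tilde\beta_i(s)d\Upsilon^*_i(s),
\end{align*}
with $\breve{\Fx}$-predictable square-integrable integrands. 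Second, I would check that the given $(\Px^*,\FxM)$-martingale $L$ is actually also a $(\Px^*,\breve{\Fx})$-martingale. This rests on the fact that under $\Px^*$ the increments of $W^o_i$ after $\tau_i$ are independent of $\FxM$ (the stopped process $W^{o,\tau}_i$ captures all the $\FxM$-relevant information from $W^o_i$), so conditional expectations of $L_t$ with respect to $\breve{\F}_s$ and $\FM_s$ coincide for $s<t$. Applying the representation from the first step then gives
\begin{align*}
L_t=L_0+\sum_{i=1}^n\int_0^t\tilde\alpha_i(s)dW^o_i(s)+\sum_{i=1}^n\int_0^t\tilde\beta_i(s)d\Upsilon^*_i(s).
\end{align*}

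Third, I would upgrade the $\breve{\Fx}$-predictable integrands to $\FxM$-predictable ones. The two key observations are: (i) $\Upsilon^*_i$ has no increments past $\tau_i$ because the compensator involves the factor $1-H_i(s)$, so only the restriction of $\tilde\beta_i$ to $[0,\tau_i]$ matters; (ii) on $[0,\tau_i]$ we have $dW^o_i=dW^{o,\tau}_i$, so one may replace $\tilde\alpha_i$ by $\tilde\alpha_i\mathbf{1}_{\{\cdot\leq\tau_i\}}$ without changing the integral and rewrite the integral against $W^{o,\tau}_i$. I would then take the predictable projection of $\tilde\alpha_i\mathbf{1}_{\{\cdot\leq\tau_i\}}$ and $\tilde\beta_i\mathbf{1}_{\{\cdot\leq\tau_i\}}$ onto $\FxM$. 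Lemma \ref{lem:stopmeasrbl}, combined with a Monotone Class argument applied to the multiplicative class \eqref{eq:multiplicativeclass} appearing in its proof, guarantees that these projections agree pathwise with the original integrands on $\{s\leq\tau_i\}$, so the stochastic integrals are preserved. Defining $\alpha^{\rm M}_i$ and $\beta^{\rm M}_i$ as these projections yields the desired \eqref{eq:martrepPstar}.

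The main obstacle I expect is precisely in the third step: producing $\FxM$-predictable versions of the integrands without altering the value of the stochastic integrals. The representation in $\breve{\Fx}$ legitimately uses information from $W^o$ beyond $\tau_i$, which is not accessible to an $\FxM$-observer; Lemma \ref{lem:stopmeasrbl} is tailor-made to resolve this issue on $\{\tau_i\geq t\}$, but one must carefully handle the measurability-versus-predictability distinction and the interaction of the projections across indices $i$, which is why the argument must be pushed through a multiplicative-class/monotone-class scheme rather than treated pointwise.
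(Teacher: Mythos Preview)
Your route through the full enlarged filtration $\breve{\Fx}=\Fx^{W^o}\vee\Fx^H$ is different from the paper's, and it has two genuine gaps.

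First, Step~2 asserts that any $(\Px^*,\FxM)$-martingale is automatically a $(\Px^*,\breve{\Fx})$-martingale. This is an immersion property that is \emph{not} free: it amounts to showing that $\breve{\F}_s$ and $\FM_T$ are conditionally independent given $\FM_s$ under $\Px^*$. Your justification (``the increments of $W^o_i$ after $\tau_i$ are independent of $\FxM$'') is neither proved nor obviously true here; after the Girsanov change the law of $(W^o,H)$ under $\Px^*$ is not explicitly identified as a product law in the paper, and orthogonality of $W^o$ and $\Upsilon^*$ does not give independence. Second, Step~3 misapplies Lemma~\ref{lem:stopmeasrbl}: that lemma is a \emph{single-index} statement, namely $\xi\in\breve{\F}^i_t\Rightarrow \xi\,\mathbf{1}_{\{\tau_i\geq t\}}\in\FMi_t$. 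Your integrands $\tilde\alpha_i$, coming from the representation in $\breve{\Fx}$, are $\breve{\Fx}$-predictable and may depend on the full path of $W^o_j$ for every $j$; multiplying by $\mathbf{1}_{\{\cdot\leq\tau_i\}}$ does nothing to remove the post-$\tau_j$ information of $W^o_j$ for $j\neq i$, so you cannot conclude $\FxM$-measurability from Lemma~\ref{lem:stopmeasrbl}. Nor does the lemma say anything about predictable projections agreeing pathwise.

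The paper avoids both issues by never passing through $\breve{\Fx}$. It works index-by-index: for each factor $\xi_i\in\FMi_T$ it uses the classical representation in the single-index filtration $\breve{\Fx}^i=\Fx^{W^o_i}\vee\Fx^{H_i}$ (where Lemma~\ref{lem:stopmeasrbl} does apply to render the integrands $\FMi$-predictable after stopping), and then combines the factors by It\^o's product formula, exploiting that the driving martingales $W^{o,\tau}_i,\Upsilon^*_i$ for different $i$ are orthogonal. The Monotone Class Theorem then upgrades from the multiplicative class $\{\prod_i\xi_i:\xi_i\in\FMi_T\}$ to all of $\FM_T$. If you want to salvage your approach, you would need either a direct proof of the immersion $\FxM\hookrightarrow\breve{\Fx}$ under $\Px^*$, or a quadratic-covariation argument (using that $[L,W^{o,\tau}_i]$ is $\FxM$-adapted) to extract $\FxM$-predictable integrands; but neither shortcut is what you wrote.
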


\begin{proof}
Let ${\cal L}$ be the family of all bounded $\FM_T$-measurable r.v.s that can be represented by stochastic integrals w.r.t. $W^{o,\tau}$ and $\Upsilon^*$, i.e., $\xi\in{\cal L}$ if and only if there exist $\FxM$-predictable processes $(\alpha,\beta)$ such that
\begin{align}\label{eq:MrepZ}
\xi=\Ex^*[\xi]+\sum_{i=1}^n\int_0^T\alpha_i(s)dW_i^{o,\tau}(s)+\sum_{i=1}^n\int_0^T\beta_i(s)d\Upsilon_i^*(s).
\end{align}
Here, $\Ex^*$ denotes the expectation under $\Px^*$.

It is easy to see that all constants are in ${\cal L}$ and ${\cal L}$ is a vector space. Moreover, let us consider nonnegative increasing r.v.s $(\xi_k)_{k\geq1}\subset{\cal L}$ such that $\lim_{k\rightarrow\infty}\xi_k=\xi$ a.s. and $\xi$ is bounded. Then, Bounded Convergence Theorem implies that $\xi_k\to\xi$, in $L^2(\Omega)$, as $k\to\infty$. Hence, for each $k\geq1$, there exist $\FxM$-predictable processes $(\alpha^{(k)},\beta^{(k)})$ such that $\xi_k$ admits \eqref{eq:MrepZ}. It follows that, for all distinct $k,l\geq1$,
\begin{align*}
    \xi_k-\xi_l&=\Ex^*[\xi_k-\xi_l]+\sum_{i=1}^n\int_0^T(\alpha_i^{(k)}(s)-\alpha_i^{(l)}(s))dW_i^{o,\tau}\nonumber\\
    &\quad+\sum_{i=1}^n\int_0^T(\beta_i^{(k)}(s)-\beta_i^{(l)}(s))d\Upsilon_i^*(s).
\end{align*}
Therefore, it holds that
\begin{align*}
4\Ex^*[|\xi_k-\xi_l|^2]
&\geq\int_0^T\Ex^*[|\alpha^{(k)}(s)-\alpha^{(l)}(s)|^2+|\beta^{(k)}(s)-\beta^{(l)}(s)|^2]ds.
\end{align*}
This implies that $(\alpha^{(k)},\beta^{(k)})_{k\geq1}$ is a Cauchy sequence in $L^2(\Omega\times[0,T])$, and there exist $\FxM$-predictable processes $(\alpha^*,\beta^*)$ such that $(\alpha^{(k)},\beta^{(k)})\to(\alpha^*,\beta^*)$ in $L^2(\Omega\times[0,T])$, as $k\to\infty$. Let us define
\begin{align*}
\tilde{\xi}:=\Ex^*[\xi]+\sum_{i=1}^n\int_0^T\alpha_i^*(s)dW_i^{o,\tau}(s)+\sum_{i=1}^n\int_0^T\beta_i^*(s) d\Upsilon_i^*(s).
\end{align*}
It follows that $\xi_k\rightarrow\tilde{\xi}$ in $L^2(\Omega)$, as $k\to\infty$. The uniqueness of $L^2$-limit gives that $\xi=\tilde{\xi}$ and hence $\xi\in{\cal L}$.

We next define a multiplicative class of r.v.s by
\begin{align}\label{eq:mulclass222}
{\cal M}:=\left\{\prod_{i=1}^n\xi_i;\  \xi_i\in\FMi_T\ \mbox{is bounded for}~i=1,\ldots,n\right\}.
\end{align}
It is easy to see that $\FM_T=\sigma({\cal M})$. Consider bounded r.v.s $\xi_i\in\FMi_T$, $i=1,\ldots,n$. As $\FMi_T\subset\breve{\F}_T^i$ for $i=1,\ldots,n$, the classical martingale representation under $\breve{\F}_T^i$ (see, e.g., Proposition 7.1.3 of \cite{Bielecki2002}) gives the existence of $\breve{\Fx}^i$-predictable processes $\breve{\alpha}_i=(\breve{\alpha}_i(t))_{t\in[0,T]}$ and $\breve{\beta}_i=(\breve{\beta}_i(t))_{t\in[0,T]}$ such that
\begin{align*}
\xi_i=\Ex^*[\xi_i]+\int_0^T\breve{\alpha}_i(s)dW_i^{o}(s)+\int_0^T\breve{\beta}_i(s)d\Upsilon_i^*(s).
\end{align*}
For $i=1,\ldots,n$, and $t\in[0,T]$, it holds that $W^{o,\tau}_i(t)$, $H_i(t)\in\breve{\F}^i_{t\wedge\tau_i}$, hence $\FMi_T\subset\breve{\F}^i_{T\wedge\tau_i}$. Then
\begin{align*}
\xi_i&=\Ex^*[\xi_i|\breve{\F}^i_{T\wedge\tau_i}]=\Ex^*[\xi_i]+\int_0^{T\wedge\tau_i}\breve{\alpha}_i(s)dW_i^{o}(s)+\int_0^{T\wedge\tau_i}\breve{\beta}_i(s)d\Upsilon_i^*(s)\notag\\
&=\Ex^*[\xi_i]+\int_0^{T\wedge\tau_i}\breve{\alpha}_i(s)dW_i^{o,\tau}(s)+\int_0^{T\wedge\tau_i}\breve{\beta}_i(s)d\Upsilon_i^*(s).
\end{align*}
By virtue of Lemma \ref{lem:stopmeasrbl}, we have that both $\alpha_i(t):=\breve{\alpha}_i(t){\bf1}_{\{\tau_i\geq t\}}$ and $\beta_i(t):=\breve{\beta}_i(t){\bf1}_{\{\tau_i\geq t\}}$ are $\FMi_t$-predictable for $t\in[0,T]$ as ${\bf1}_{\{\tau_i\geq t\}}$ is $\FMi_t$-predictable. Therefore, each $\xi_i\in\FMi_T$ enjoys the representation given by
\begin{align*}
\xi_i=\Ex^*[\xi_i]+\int_0^T\alpha_i(s)dW_i^{o,\tau}(s)+\int_0^T\beta_i(s)d\Upsilon_i^*(s),\quad i=1,\ldots,n.
\end{align*}
For $i=1,\ldots,n$ and $t\in[0,T]$, we define $\FxM$-predictable processes by
\begin{align*}
    \alpha_i^{\rm M
    }(t):=\prod_{k\neq i}\xi_k^{\rm M}(t-)\alpha_i(t),\qquad \beta_i^{\rm M
    }(t):=\prod_{k\neq i}\xi_k^{\rm M}(t-)\beta_i(t),
\end{align*}
where
\begin{align*}
    \xi_i^{\rm M
    }(t) &:=\Ex^*[\xi_i]+\int_0^t\alpha_i(s)dW_i^{o,\tau}(s)+\int_0^t\beta_i(s)d\Upsilon_i^*(s).
\end{align*}
It\^o's formula gives that
\begin{align}\label{eq:multimartrep000}
\prod_{i=1}^n\xi_i=\Ex^*\left[\prod_{i=1}^n\xi_i\right]+\sum_{i=1}^n\int_0^T\alpha_i^{\rm M}(s) dW_i^{o,\tau}(s)+\sum_{i=1}^n\int_0^T\beta_i^{\rm M}(s)d\Upsilon_i^*(s).
\end{align}
The representation \eqref{eq:multimartrep000} then implies that ${\cal M}\subset{\cal L}$ and Monotone Class Theorem yields that ${\cal L}$ contains all bounded $\FM_T$-measurable r.v.s. Note that the jumps of $\Upsilon^{*}$ are bounded. We can hence apply the localization techniques to $L$ and obtain the desired martingale representation under $\Px^*$ as stated in \eqref{eq:martrepPstar}.
\end{proof}

We then continue to complete the proof of Theorem~\ref{thm:hatFrep}.

\noindent\textsc{Proof of Theorem~\ref{thm:hatFrep}.}\quad For fixed $t\in[0,T]$ and any $u\in[t,T]$, we define
\begin{align}\label{eq:GamM}
    \Gamma^{{\rm M},t}(u)&:=\sum_{i=1}^n\int_t^u(\lambda^{\rm M}_i(s-)^{-1}-1)d\Upsilon_i^{\rm M}(s)-\sum_{i=1}^n\sigma_i^{-1}\int_t^u(\mu_i^{\rm M}(s)+\lambda_i^{\rm M}(s))dW_i^{\rm M}(s).
\end{align}
In view of the assumption ({\bf H}), the process $\psi(u):={\cal E}(\Gamma^{{\rm M},t})_u$, $u\in[t,T]$, is an $\FxM$-martingale that satisfies the representation
\begin{align*}
d\psi(u)=\psi(u-)\left\{\sum_{i=1}^n(\lambda^{\rm M}_i(u-)^{-1}-1)d\Upsilon_i^{\rm M}(u)-\sum_{i=1}^n\sigma_i^{-1}(\mu_i^{\rm M}(u)+\lambda_i^{\rm M}(u))dW_i^{\rm M}(u)\right\}.
\end{align*}

Consider an arbitrary bounded r.v. $\xi\in\FM_T$. The process $\zeta^{\rm M, *}(t):=\Ex^*[\psi(T)^{-1}\xi|\FM_t]$ for $t\in[0,T]$ is a square integrable $(\Px^*,\FxM)$-martingale by ({\bf H}). By Lemma \ref{lem:martrepPstar}, there exist $\FxM$-predictable processes $\alpha^{\rm M
    }=(\alpha_1^{\rm M
    }(t),\ldots,\alpha_n^{\rm M
    }(t))_{t\in[0,T]}^{\top}$ and $\beta^{\rm M
    }=(\beta_1^{\rm M
    }(t),\ldots,\beta_n^{\rm M
    }(t))_{t\in[0,T]}^{\top}$ such that
\begin{align*}
\zeta^{\rm M, *}(T)=\psi(T)^{-1}\xi=\Ex^*[\psi(T)^{-1}\xi]+\sum_{i=1}^n\int_0^T\alpha_i^{\rm M}(s) dW_i^{o,\tau}(s)+\sum_{i=1}^n\int_0^T\beta_i^{\rm M}(s)d\Upsilon_i^*(s).
\end{align*}
Therefore, we deduce that
\begin{align}\label{eq:martrepadd}
\xi=\psi(T)\Ex^*[\psi(T)^{-1}\xi]+\psi(T)\sum_{i=1}^n\int_0^T\alpha_i^{\rm M}(s) dW_i^{o,\tau}(s)+\psi(T)\sum_{i=1}^n\int_0^T\beta_i^{\rm M}(s)d\Upsilon_i^*(s).
\end{align}
On the other hand, we first have that
\begin{align}\label{eq:etaTEstar}
    \psi(T)\Ex^*[\psi(T)^{-1}\xi]&=\Ex^*[\psi(T)^{-1}\xi]+\Ex^*[\psi(T)^{-1}\xi]\sum_{i=1}^n\int_0^T\psi(s-)(\lambda^{\rm M}_i(s-)^{-1}-1)d\Upsilon_i^{\rm M}(s)\nonumber\\
   &\quad-\Ex^*[\psi(T)^{-1}\xi]\sum_{i=1}^n\int_0^T\psi(s)\sigma_i^{-1}(\mu_i^{\rm M}(s)+\lambda_i^{\rm M}(s))dW_i^{\rm M}(s).
\end{align}
Integration by parts yields that
\begin{align}\label{eq:parts1}
    &\psi(T)\sum_{i=1}^n\int_0^T\alpha_i^{\rm M}(s) dW_i^{o,\tau}(s)=\sum_{i=1}^n\int_0^T\psi(s)\alpha_i^{\rm M}(s) dW_i^{\rm M}(s)\nonumber\\
    &\qquad+\sum_{j=1}^n\int_0^T\psi(s-)\left(\sum_{i=1}^n\int_0^s\alpha_i^{\rm M}(u) dW_i^{o,\tau}(u)\right)(\lambda^{\rm M}_j(s-)^{-1}-1)d\Upsilon_j^{\rm M}(s)\nonumber\\
    &\qquad-\sum_{j=1}^n\int_0^T\psi(s)\left(\sum_{i=1}^n\int_0^s\alpha_i^{\rm M}(u) dW_i^{o,\tau}(u)\right)\sigma_j^{-1}(\mu_j^{\rm M}(s)+\lambda_j^{\rm M}(s))dW_j^{\rm M}(s),
\end{align}
and
\begin{align}\label{eq:parts2}
    &\psi(T)\sum_{i=1}^n\int_0^T\beta_i^{\rm M}(s) d\Upsilon_i^{*}(s)=\sum_{i=1}^n\int_0^T\psi(s-)\beta_i^{\rm M}(s) \lambda^{\rm M}_j(s-)^{-1} d\Upsilon_i^{\rm M}(s)\nonumber\\
    &\qquad+\sum_{j=1}^n\int_0^T\psi(s-)\left(\sum_{i=1}^n\int_0^{s-}\beta_i^{\rm M}(u) d\Upsilon_i^{*}(u)\right)(\lambda^{\rm M}_j(s-)^{-1}-1)d\Upsilon_j^{\rm M}(s)\nonumber\\
    &\qquad-\sum_{j=1}^n\int_0^T\psi(s)\left(\sum_{i=1}^n\int_0^s\beta_i^{\rm M}(u) d\Upsilon_i^{*}(u)\right)\sigma_j^{-1}(\mu_j^{\rm M}(s)+\lambda_j^{\rm M}(s))dW_j^{\rm M}(s).
\end{align}
By \eqref{eq:martrepadd}-\eqref{eq:parts2}, we deduce that any bounded r.v. $\xi\in\FM_T$ admits the representation as a stochastic integral w.r.t $\Px$-martingales $W^{\rm M}$ and $\Upsilon^{\rm M}$. As the jumps of $\Upsilon^{*}$ are bounded, the localization technique can be applied to $L$ and the desired martingale representation under $\Px$ in \eqref{eq:L-mart-rep_FM} follows. \hfill$\Box$

As a by-product of Theorem~\ref{thm:hatFrep}, the dynamics of the filter $p_k^{\rm M}$ can be explicitly characterized. This result is useful by itself and the proof is deferred to Appendix \ref{app:proof1}.
\begin{proposition}\label{prop:dynamicspM}
Let $k\in S_I$ and $t\in[0,T]$. Under the assumption {\rm({\bf H})}, the filter process $p_k^{\rm M}$ defined in \eqref{eq:projectFMa} admits that
\begin{align}\label{eq:dynamicspM000}
&dp_k^{\rm M}(t)=\sum_{j\in S_I}q_{jk}p_j^{\rm M}(t)dt+p_k^{\rm M}(t-)\sum_{i=1}^n\left\{\frac{\lambda_i(k,H(t-))}{\sum_{l\in S_I}\lambda_i\big(l,H(t-)\big)p_l^{\rm M}(t-)}-1\right\}d\Upsilon_i^{\rm M}(t)\\
&+p_k^{\rm M}(t)\sum_{i=1}^n\left\{\sigma_i^{-1}(\mu_i(k)+\lambda_i(k,H(t)))-\sum_{l\in S_I}p_l^{\rm M}(t)\sigma_i^{-1}(\mu_i(l)+\lambda_i(l,H(t)))\right\}dW_i^{\rm M}(t).\nonumber
\end{align}
Here, the $(\Px,\FxM)$-martingales $W^{\rm M}$ and $\Upsilon^{\rm M}$ are given by \eqref{eq:WM-M} and \eqref{eq:MMi}.
\end{proposition}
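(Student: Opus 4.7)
The plan is to combine a classical filtering argument with the martingale representation theorem of Theorem~\ref{thm:hatFrep}. Since the indicator $\mathbf{1}_{\{I(t)=k\}}$ is bounded, Dynkin's formula for the countable-state chain $I$ yields
\begin{align*}
\mathbf{1}_{\{I(t)=k\}}=\mathbf{1}_{\{I(0)=k\}}+\int_0^t\sum_{j\in S_I}q_{jk}\,\mathbf{1}_{\{I(s)=j\}}\,ds+M_k(t),
\end{align*}
where $M_k$ is a bounded-jump $(\Px,\Fx)$-martingale. Projecting both sides onto $\FM_t$ and applying a Fubini argument (legitimate under ({\bf H}), which bounds the intensities uniformly and hence ensures integrability) gives the decomposition $p_k^{\rm M}(t)=p_k^{\rm M}(0)+\int_0^t\sum_j q_{jk}p_j^{\rm M}(s)\,ds+N_k^{\rm M}(t)$ where $N_k^{\rm M}$ is a square-integrable $(\Px,\FxM)$-martingale with bounded jumps inherited from $\mathbf{1}_{\{I(\cdot)=k\}}$. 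Theorem~\ref{thm:hatFrep} then produces predictable $\alpha_i^k,\beta_i^k$ so that $N_k^{\rm M}=\sum_i\int\alpha_i^k\,dW_i^{\rm M}+\sum_i\int\beta_i^k\,d\Upsilon_i^{\rm M}$, reducing the problem to identifying these coefficients.

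The jump coefficient $\beta_i^k$ is read off at default times. Since $W^{\rm M}$ is continuous and the processes $\Upsilon_i^{\rm M}$ have disjoint jump times (no simultaneous defaults), $\beta_i^k(\tau_i)=\Delta p_k^{\rm M}(\tau_i)$. A conditional Bayes rule on the event that the next default is that of asset $i$, using the intensity structure ${\bf1}_{\{H_i(t)=0\}}\lambda_i(I(t),H(t))$, gives
\begin{align*}
p_k^{\rm M}(\tau_i)=\frac{p_k^{\rm M}(\tau_i-)\,\lambda_i(k,H(\tau_i-))}{\sum_{l\in S_I}p_l^{\rm M}(\tau_i-)\lambda_i(l,H(\tau_i-))},
\end{align*}
and promoting this to a predictable version yields exactly the jump term $p_k^{\rm M}(t-)\{\lambda_i(k,H(t-))/\lambda_i^{\rm M}(p^{\rm M}(t-),H(t-))-1\}$ appearing in \eqref{eq:dynamicspM000}.

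For the diffusion coefficient $\alpha_i^k$, I would use the innovations approach: compute $\langle N_k^{\rm M},W_i^{\rm M}\rangle$ in two ways. On one hand, by orthogonality of $W^{\rm M}$ and $\Upsilon^{\rm M}$ the representation gives $d\langle N_k^{\rm M},W_i^{\rm M}\rangle_t=\alpha_i^k(t)\,dt$ on $\{t\le\tau_i\}$. On the other hand, since $W_i^{\rm M}$ differs from $W_i^{o,\tau}$ by an absolutely continuous $\FxM$-adapted drift, $\langle N_k^{\rm M},W_i^{\rm M}\rangle=\langle p_k^{\rm M},W_i^{o,\tau}\rangle$, and writing $W_i^{o,\tau}$ via \eqref{eq:tildeWio}--\eqref{eq:tildeWi} in its $\Fx$-semimartingale form $dW_i^{o,\tau}=dW_i+\sigma_i^{-1}(\mu_i(I(\cdot))+\lambda_i(I(\cdot),H(\cdot))){\bf1}_{\{\cdot\le\tau_i\}}dt$, a direct bracket computation using $\langle \mathbf{1}_{\{I(\cdot)=k\}},W_i\rangle=0$ and the tower property identifies
\begin{align*}
\alpha_i^k(t)=p_k^{\rm M}(t)\sigma_i^{-1}\!\left\{\mu_i(k)+\lambda_i(k,H(t))-\mu_i^{\rm M}(p^{\rm M}(t))-\lambda_i^{\rm M}(p^{\rm M}(t),H(t))\right\},
\end{align*}
on $\{t\le\tau_i\}$, which is automatically enforced by $dW_i^{\rm M}$ being carried on $[0,\tau_i]$. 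The main obstacle is making the last bracket identification rigorous under the phasing-out filtration $\FxM$: $W_i^{o,\tau}$ is only a stopped Brownian motion and certain $\breve{\Fx}$-adapted integrands used in the innovations computation must be transferred to $\FxM$-adapted ones. This is exactly where Lemma~\ref{lem:stopmeasrbl} is invoked, combined with the auxiliary measure $\Px^*$ from \eqref{eq:Pstar} under which $W^{o,\tau}$ becomes a stopped $(\Px^*,\FxM)$-Brownian motion, so that the covariation computation can be carried out there and transferred back via Girsanov.
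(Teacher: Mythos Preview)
Your overall architecture (Dynkin for $\mathbf{1}_{\{I(t)=k\}}$, projection onto $\FM_t$, then Theorem~\ref{thm:hatFrep} to reduce to identifying $\alpha_i^k,\beta_i^k$) matches the paper exactly. Your Bayes-rule reading of $\beta_i^k$ at default times is a legitimate alternative to the paper's approach, which instead takes $H_i$ as a test process and compares finite-variation parts of $(\zeta_k H_i)^{\rm M}$ and $p_k^{\rm M}H_i$; both lead to the same formula, though your version would still need a careful justification of the conditional-law-at-$\tau_i$ identity under $\FxM$.

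The diffusion identification, however, has a real gap. You want to compute $\langle p_k^{\rm M},W_i^{o,\tau}\rangle$ ``from the $\Fx$-side'' using $\langle\mathbf{1}_{\{I(\cdot)=k\}},W_i\rangle=0$, but this is circular: the quadratic covariation is between $p_k^{\rm M}$ and $W_i^{o,\tau}$, not between $\zeta_k=\mathbf{1}_{\{I(\cdot)=k\}}$ and $W_i^{o,\tau}$, and projection onto $\FxM$ does not preserve brackets (indeed $\langle\zeta_k,W_i^{o,\tau}\rangle=0$ while $\langle p_k^{\rm M},W_i^{o,\tau}\rangle=\int_0^{\cdot}\alpha_i^k\,ds\neq 0$). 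Both computations of the bracket therefore return the same unknown $\int\alpha_i^k\,ds$ and nothing is identified. The paper avoids this by working with the \emph{product} $\zeta_k W_i^{o,\tau}$ rather than the bracket: it expands $\zeta_k W_i^{o,\tau}$ by It\^o under $\Fx$ (where $[\zeta_k,W_i^{o,\tau}]=0$ is used), projects the resulting canonical decomposition onto $\FxM$, and compares the \emph{finite-variation parts} with those of $p_k^{\rm M}W_i^{o,\tau}$ expanded by It\^o under $\FxM$. It is this comparison of drifts---not brackets---that produces $\alpha_i^k$; this is the standard FKK/innovations derivation.

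Finally, your appeal to Lemma~\ref{lem:stopmeasrbl} and the auxiliary measure $\Px^*$ is misdirected here: the paper carries out the entire proof of Proposition~\ref{prop:dynamicspM} under $\Px$, using only Theorem~\ref{thm:hatFrep} and the test-process comparison; no Girsanov transfer is needed.
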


Note that in the price dynamics \eqref{eq:tildeP}, the volatility matrix $\sigma$ is assumed to be diagonal, i.e.,  all defaultable assets are driven by independent Brownian motions. This assumption can actually be relaxed as shown in the next remark.
\begin{remark}
Consider the price dynamics of the $i$-th defaultable asset given by
\begin{align}\label{eq:tildePisigmaij}
d\tilde{P}_i(t) = \tilde{P}_i(t-)\left\{\mu_i(I(t))dt +  \sum_{j=1}^n\sigma_{ij} dW_j(t)-d\Upsilon_i(t)\right\},\quad i=1,\ldots,n,
\end{align}
where the volatility matrix $\sigma=(\sigma_{ij})\in\R^{n\times n}$ is non-diagonal. We next transform \eqref{eq:tildePisigmaij} into the one with a diagonal volatility matrix, but noises are no longer independent. More precisely, define $\tilde{W}_i(t):=\tilde{\sigma}_i^{-1}\sum_{k=1}^n\sigma_{ik}W_k(t)$ for $t\in[0,T]$, where $\tilde{\sigma}_i:=\sqrt{\sum_{k=1}^n\sigma^2_{ik}}$ for $i=1,\ldots,n$. Then, for $i=1,\ldots,n$, $\tilde{W}_i=(\tilde{W}_i(t))_{t\in[0,T]}$ is a Brownian motion satisfying the correlation $\langle\tilde{W}_i,\tilde{W}_j\rangle_t=\tilde{\sigma}_i^{-1}\tilde{\sigma}_j^{-1}\sum_{k=1}^n\sigma_{ik}\sigma_{jk}t$ for $i\neq j$. 
The price process \eqref{eq:tildePisigmaij} can be written that
\begin{align}\label{eq:tildePtldesigma}
d\tilde{P}(t) = {\rm diag}(\tilde{P}(t-))\{\mu(I(t))dt +  \tilde\sigma d\tilde W(t)-d\Upsilon(t)\},
\end{align}
where $\tilde\sigma:={\rm diag}(\tilde\sigma_1,\ldots,\tilde\sigma_n)$ is still diagonal and $\tilde W=(\tilde{W}_1,\ldots,\tilde{W}_n)^{\top}$ is an $n$-dimensional correlated Brownian motion.
That is, we can still consider the price dynamics \eqref{eq:tildeP} with correlated Brownian motions $(W_1,\ldots,W_n)$. Note that we can still define $W^o$ and $W^{o,\tau}$ as in \eqref{eq:tildeWio} and \eqref{eq:tildeWi} that for $i=1,\ldots,n$,
\begin{align*}
{W}_i^o(t) &:=\sigma_i^{-1}\int_0^{t}(\mu_i(I(s))+\lambda_i(I(s),H(s)))ds+W_i(t),~ W_i^{o,\tau}(t):=W_i^o(t\wedge\tau_i),~~ t\geq0.
\end{align*}
By the approximation argument and Monotone Class Theorem, Lemma \ref{lem:stopmeasrbl} still holds. However, it will be difficult to prove Lemma \ref{lem:martrepPstar} and Theorem \ref{thm:hatFrep} when $(W_1,\ldots,W_n)$ are not independent.
Indeed, recall that the proof of Lemma \ref{lem:stopmeasrbl} is based on the filtration generated by the price process and the default event of every asset $i$ (i.e., the sub-filtration $\FMi_t:=\F_t^{W_i^{o,\tau}}\vee\F_t^{H_i}$ for $t\geq0$). When $(W_1,\ldots,W_n)$ are independent, we first establish the martingale representation result under each sub-filtration $\FMi_T$. That is, any bounded r.v.s $\xi_i\in\FMi_T$, $i=1,\ldots,n$, admits the representation that
\begin{align*}
\xi_i=\Ex^*[\xi_i]+\int_0^T\alpha_i(s)dW_i^{o,\tau}(s)+\int_0^T\beta_i(s)d\Upsilon_i^*(s),
\end{align*}
where $\alpha_i$ and $\beta_i$ are $(\FMi_t)_{t\in[0,T]}$-predictable. Then, integration by parts can be applied to yield a general representation result under the filtration $\FM_T$, as the underlying driving martingales $(W_i^{o,\tau},\Upsilon_i^*)$ are orthogonal for $i=1,\ldots,n$, and hence Lemma  \ref{lem:martrepPstar} can be proved by the approximation scheme and Monotone Class Theorem.

On the other hand, if $(W_1,\ldots,W_n)$ are {\it not} independent, the orthogonality of these martingales does not hold. But we can still make the same conclusion using an alternative argument. For $i=1,\ldots,n$, under each $\FMi$, it first follows from the same techniques used in Lemma \ref{lem:stopmeasrbl}, Theorem \ref{thm:hatFrep}, and Lemma \ref{lem:martrepPstar}  with independent $(W_1,\ldots,W_n)$ that for any real-valued $\Fx^{{\rm M}i}=(\FMi_t)_{t\in[0,T]}$-square integrable $(\Px,\Fx^{{\rm M}i})$-martingale $L=(L_t)_{t\in[0,T]}$ with bounded jumps, there exist $\Fx^{{\rm M}i}$-predictable and square integrable processes $\alpha^{\rm M}_i$ and $\beta^{\rm M}_i$ such that
\begin{align}\label{eq:L-mart-rep_FM}
L_t=L_0+\int_0^t\alpha_i^{\rm M}(s)dW_i^{\rm M}(s)+\int_0^t\beta_i^{\rm M}(s)d\Upsilon_i^{\rm M}(s),\quad t\in[0,T].
\end{align}

We next prove Theorem  \ref{thm:hatFrep} using Jacod-Yor Theorem (see, e.g., Theorem IV.57 in \cite{Protter2005} or Theorem III.4.29 in \cite{JacodShir03}). To this end, let us consider a filtered probability space $(\Omega,\mathcal G,\mathbb G,P)$. Let $\mathcal H^2$ be the space of $(P,\Gx)$-special semimartingales with finite $\mathcal H^2$-norm. The $\mathcal H^2$-norm for a special semimartingale with canonical decompostion $X=N+A$\footnote{$N$ (resp. $A$) is a local $P$-martingale (resp. a predictable process of finite variation under $P$).} is defined by
\begin{align*}
  \|X\|_{\mathcal H^2}:=\left\|[N,N]^{1/2}_{T}\right\|_{L^2}+\left\|\int_0^{T}\left|dA_s\right|\right\|_{L^2}.
\end{align*}
Let ${\cal A}\subset{\cal H}^2$, which contains constant martingales. Denote by ${\cal S}({\cal A})$ the stable subspace of stochastic integrals generated by ${\cal A}$, and ${\cal M}({\cal A})$ the space of probability measures making all elements of ${\cal A}$ square integrable martingales. We consider the space ${\cal A}=\{W^{\rm M}_1,\ldots,W^{\rm M}_n,\Upsilon_1^{\rm M},\ldots,\Upsilon_n^{\rm M}\}$ and $\mathcal G=\FM_T$. It is easy to see that $\Px\in\mathcal M({\cal A})$. By Theorem IV.57 in Protter~\cite{Protter2005}, to show the martingale representation property is equivalent to show that $\mathbb P$ is an extremal point of ${\cal M}({\cal A})$, i.e., for any given probability measures $\Qx,\mathbb{K}\in{\cal M}({\cal A})$ satisfying
\begin{align}\label{QKP}
  \lambda\mathbb Q+(1-\lambda)\mathbb K=\mathbb P\quad\text{for some}~\lambda\in[0,1],
\end{align}
it holds that $\mathbb Q=\mathbb K=\mathbb P$. For $i=1,\ldots,n$, let us consider
\begin{align*}
{\cal G}_i= \FMi_T,\quad  {\cal A}_i=\left\{W^{\rm M}_i,\Upsilon_i^{\rm M}\right\}.
\end{align*}
Let $\mathbb P_i$, $\mathbb Q_i$ and $\mathbb K_i$ be the restriction of $\mathbb P$, $\mathbb Q$ and $\mathbb K$ on $\mathcal G_i$, respectively. Consequently, $\mathbb P_i$, $\mathbb Q_i$ and $\mathbb K_i\in\mathcal M(\mathcal A_i)$ for $i=1,\ldots,n$, and $\mathbb P_i$ is an extremal point of ${\cal M}({\cal A}_i)$. On the other hand, it follows from \eqref{QKP} that
\begin{align*}
  \lambda\mathbb Q_i+(1-\lambda)\mathbb K_i=\mathbb P_i\quad\text{for some}~\lambda\in[0,1].
\end{align*}
As $\mathbb P_i$, $\mathbb Q_i$ and $\mathbb K_i$ are the restriction of $\mathbb P$, $\mathbb Q$ and $\mathbb K$ on $\mathcal G_i$, it holds that $\mathbb Q_i=\mathbb K_i=\mathbb P_i$ for $i=1,\ldots,n$. Recall that ${\F}_T^{\rm M}=\bigvee_{i=1}^n \FMi_T$ and $\mathbb Q=\mathbb K=\mathbb P$ on $\FMi_T$ for $i=1,\ldots,n$, we have that $\mathbb Q=\mathbb K=\mathbb P$ on $\mathcal G$, which verifies Theorem \ref{thm:hatFrep} when $(W_1,\ldots,W_n)$ are {\it not} independent.
\end{remark}

\section{Risk-sensitive control under partial information}\label{sec:risk-senpartialinf}

We start to formulate the risk-sensitive portfolio optimization under the partial information $\FxM$. Let us first introduce the preliminary value function and transform it into an equivalent objective functional using the martingale representation result in Section~\ref{sec:filer-MRT} and changing of measure. This formulation, together with the appropriate set of admissible trading strategies, can link the control problem to a non-standard quadratic BSDE with jumps.

Let ${\pi}=({\pi}_i(t);\ i=1,\ldots,n)_{t\in[0,T]}^{\top}$ be an $\FxM$-predictable process, which represents the vector of proportions of wealth invested in $n$ defaultable assets $\tilde{P}$ under partial observations. The resulting wealth process $X^{{\pi}}=(X^{{\pi}}(t))_{t\in[0,T]}$ evolves as
\begin{align}\label{eq:wealth}
dX^{{\pi}}(t)
=&X^{{\pi}}(t-){\pi}(t)^\top\{(\mu(I(t))-re_n)dt+\sigma dW(t)-d\Upsilon(t)\}+rX^{{\pi}}(t)dt,
\end{align}
where $e_n=(1,1,\ldots,1)^{\top}$ is the $n$-dimensional identity column vector. As the price of the $i$-th asset jumps to zero when it defaults by \eqref{eq:tildeP}, the corresponding fraction of wealth held by the investor in this asset stays at zero after it defaults. It consequently follows that ${\pi}_i(t)=(1-H_i(t-)){\pi}_i(t)$ for $i=1,\ldots,n$.

We next introduce the admissible set of all candidate dynamic investment strategies in our framework.
\begin{definition}\label{def:addmisibleUad}
For $t\in[0,T]$, $\Uadt$ denotes the set of admissible controls  ${\pi}(u)=({\pi}_i(u);\ i=1,\ldots,n)^{\top}$, $u\in[t,T]$, which are $\FxM$-predictable processes such that SDE~\eqref{eq:wealth} admits a unique positive strong solution with $X^{{\pi}}(t)=x\in\R_+$ and $({\cal E}(\Lambda^{\pi,t})_u)_{u\in[t,T]}$ is a true $(\Px^*,\FxM)$-martingale, where $\Px^*$ is given by \eqref{eq:Pstar} and $\Lambda^{\pi,t}$ is defined later by \eqref{eq:Lambdapit}. It also follows that the process $\pi$ should take values in $U:=(-\infty,1)^n$. \end{definition}

\begin{remark}
The constraint on admissible investment strategies with the martingale property is by no means restrictive. It will be shown in Section \ref{sec:optimum} that the first-order condition leads to the optimal solution $\pi^*\in\Uadt$ as $({\cal E}(\Lambda^{\pi^*,t})_u)_{u\in[t,T]}$ can be verified to be a $(\Px^*,\FxM)$-martingale. This additional constraint on admissibility can facilitate our future transformation of the original control problem into a simplified form.
\end{remark}

For ${\pi}\in\Uadt$, the wealth process can be rewritten equivalently by
\begin{align}\label{eq:wealth2}
X^{{\pi}}(T) =&X^{{\pi}}(t)\exp\Bigg\{\int_t^T[r+{\pi}(s)^\top(\mu(I(s))-re_n)]ds+\int_t^T{\pi}(s)^\top\sigma dW(s)\nonumber\\
&-\frac{1}{2}\int_t^T{\pi}(s)^\top\sigma\sigma^\top{\pi}(s) ds+\sum_{i=1}^n\int_t^T\ln(1-{\pi}_i(s))d\Upsilon_i(s)\\
&+\sum_{i=1}^n\int_t^T\lambda_i(I(s),H(s))(1-H_i(s))[{\pi}_i(s)+\ln(1-{\pi}_i(s))]ds\Bigg\}.\notag
\end{align}
Given ${\pi}\in\Uad$ and $(X^{{\pi}}(0),H(0))=(x,z)\in\R_+\times{S}_H$, the risk-sensitive objective functional is defined by
\begin{align}\label{eq:J0}
\tilde{J}({\pi};x,z) := -\frac{2}{\theta}\ln\Ex\left[\exp\left(-\frac{\theta}{2}\ln X^{{\pi}}(T)\right)\right].
\end{align}
The investor seeks to maximize $\tilde{J}$ over all admissible strategies ${\pi}\in{\cal U}_0^{ad}$. We only focus on the case when $\theta\in(0,\infty)$, which corresponds to a risk sensitive attitude. 
For $(X^{{\pi}}(0),H(0))=(x,z)\in\R_+\times{S}_H$, the value function of the control problem is given by
\begin{align}\label{eq:value}
\tilde{V}(x,z) &:=\sup_{{\pi}\in\Uad}\left\{-\frac{2}{\theta}\ln\Ex\left[\exp\left(-\frac{\theta}{2}\ln X^{{\pi}}(T)\right)\right]\right\}\nonumber\\
&=\sup_{{\pi}\in\Uad}\left\{-\frac{2}{\theta}\ln\Ex\left[\left(X_0^{{\pi}}\right)^{-\frac{\theta}{2}}\left(\frac{X^{{\pi}}(T)}{X^{\pi}(0)}\right)^{-\frac{\theta}{2}}\right]\right\}\nonumber\\
&=\ln x-\frac{2}{\theta}\inf_{{\pi}\in\Uad}\left\{\ln\Ex\left[\left(\frac{X^{{\pi}}(T)}{X^{\pi}(0)}\right)^{-\frac{\theta}{2}}\right]\right\}\nonumber\\
&=\ln x-\frac{2}{\theta}\ln\left\{\inf_{{\pi}\in\Uad}\Ex\left[\left(\frac{X^{{\pi}}(T)}{X^{{\pi}}(0)}\right)^{-\frac{\theta}{2}}\right]\right\}.
\end{align}
The control problem is then transformed to $\inf_{{\pi}\in\Uad}\Ex[({X^{{\pi}}(T)}/X^{{\pi}}(0))^{-\frac{\theta}{2}}]$. Hence, for $(t,p,z)\in[0,T]\times S_{p^{\rm M}}\times S_H$, it is equivalent to study the dynamic minimization problem
\begin{align}\label{eq:Jvaluefcn}
V(t,p,z):=\inf_{{\pi}\in\Uadt}J({\pi};t,p,z)
:=\inf_{{\pi}\in\Uadt}\Ex_{t,p,z}\left[\left(\frac{X^{{\pi}}(T)}{X^{{\pi}}(t)}\right)^{-\frac{\theta}{2}}\right],
\end{align}
where $\Ex_{t,p,z}[\cdot]:=\Ex[\cdot|p^{\rm M}(t)=p,H(t)=z]$ and $\frac{X^{{\pi}}(T)}{X^{{\pi}}(t)}$ can be expressed by \eqref{eq:wealth2}.

We next rewrite the objective functional $J$ in \eqref{eq:Jvaluefcn} under $\Px^*$. First, it is easy to see that \eqref{eq:wealth2} is equivalent to
\begin{align}\label{thetaX}
\left(\frac{X^{{\pi}}(T)}{X^{{\pi}}(t)}\right)^{-\frac\theta2}&=\exp\Bigg\{-\frac\theta2\int_t^Tr(1-{\pi}(s)^\top e_n)ds-\frac\theta2\int_t^T{\pi}(s)^\top\sigma dW^{o,\tau}(s)\notag\\
&\quad+\frac{\theta}{4}\int_t^T{\pi}(s)^\top\sigma\sigma^\top{\pi}(s)ds-\frac\theta2\sum_{i=1}^n\int_t^T\ln(1-{\pi}_i(s))dH_i(s)\Bigg\},
\end{align}
where the last equality holds by virtue of ${\pi}_i(t)=(1-H_i(t-)){\pi}_i(t)$. We note that all terms in~\eqref{thetaX} are $\FxM$-adapted. By \eqref{eq:Jvaluefcn}, the objective functional is reformulated to
\begin{align}\label{eq:Junob}
J({\pi};t,q,z)=\Ex_{t,p,z}\left[\left(\frac{X^{{\pi}}(T)}{X^{\pi}(t)}\right)^{-\frac{\theta}{2}}\right]
=\Ex_{t,p,z}^{*}\left[\eta^{-1}(t,T)\left(\frac{X^{{\pi}}(T)}{X^{{\pi}}(t)}\right)^{-\frac{\theta}{2}}\right].
\end{align}
Here, the density process is defined by $\eta(t,u):={\cal E}(\Gamma^t)_u$ with $\Gamma^t$ given in \eqref{eq:density-Gam} and $u\geq t$, and $\Ex^*$ denotes the expectation operator under $\Px^*$ given in \eqref{eq:Pstar}.  Note that $\eta(t,T)$ is not necessarily $\FxM$-adapted due to the presence of $I$ in $\eta(t,T)$. In order to transform the objective functional $J$ in a fully observable form, let us introduce
\begin{align}\label{eq:hateta}
\eta^{\rm M}(t,u):=\Ex[\eta(t,u)|\FM_u],\ \ u\in[t,T].
\end{align}

\begin{lemma}\label{lem:dyanmicsetaM}
Let the assumption {\rm({\bf H})} hold. We have that
\begin{align}\label{eq:hatetaexpo}
\eta^{\rm M}(t,u)&={\cal E}\left(\phi^t\right)_u,\ \ \ u\in[t,T],
\end{align}
where we define
\begin{align*}
\phi^t(\cdot):=&\ \sum_{i=1}^n\int_t^{\cdot}(\lambda_i^{\rm M}(p^{\rm M}(s-),H(s-))^{-1}-1)d\Upsilon_i^{\rm M}(s)\nonumber\\
&\ -\sum_{i=1}^n\int_t^{\cdot}\sigma_i^{-1}(1-H_i(s))(\mu_i^{\rm M}(p^{\rm M}(s))+\lambda_i^{\rm M}(p^{\rm M}(s),H(s))dW_i^{\rm M}(s).
\end{align*}
\end{lemma}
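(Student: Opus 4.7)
The strategy is to derive the dynamics of $\eta^{\rm M}(t,\cdot)$ via the martingale representation of Theorem~\ref{thm:hatFrep} and recognize the resulting SDE as the defining equation of the Dol\'eans--Dade exponential $\mathcal{E}(\phi^t)$. First, under assumption~({\bf H}), the Brownian integrand $\sigma_i^{-1}(\mu_i+\lambda_i)\mathbf{1}_{\{s\leq\tau_i^t\}}$ of $\Gamma^t$ is uniformly bounded and its jump sizes $\lambda_i^{-1}(s-)-1$ are uniformly bounded and bounded away from $-1$. Consequently $\eta(t,\cdot)=\mathcal{E}(\Gamma^t)$ is a strictly positive true $(\Px,\Fx)$-martingale on $[t,T]$ with $\eta(t,t)=1$ and bounded jumps. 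By the tower property, $\eta^{\rm M}(t,\cdot)$ is a strictly positive square-integrable $(\Px,\FxM)$-martingale with $\eta^{\rm M}(t,t)=1$ and bounded jumps. Theorem~\ref{thm:hatFrep} then produces $\FxM$-predictable processes $\alpha^{\rm M}_i$ and $\beta^{\rm M}_i$ with
\begin{equation*}
\eta^{\rm M}(t,u) = 1 + \sum_{i=1}^n \int_t^u \alpha_i^{\rm M}(s)\,dW_i^{\rm M}(s) + \sum_{i=1}^n \int_t^u \beta_i^{\rm M}(s)\,d\Upsilon_i^{\rm M}(s),\qquad u\in[t,T].
\end{equation*}

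The task is then to identify
\[
\beta_i^{\rm M}(s)=\eta^{\rm M}(t,s-)\bigl(\lambda_i^{\rm M}(p^{\rm M}(s-),H(s-))^{-1}-1\bigr),\quad
\alpha_i^{\rm M}(s)=-\eta^{\rm M}(t,s)\sigma_i^{-1}(1-H_i(s))\bigl(\mu_i^{\rm M}(p^{\rm M}(s))+\lambda_i^{\rm M}(p^{\rm M}(s),H(s))\bigr),
\]
by matching jump sizes and continuous quadratic covariations against the explicit SDE $d\eta(t,u)=\eta(t,u-)\,d\Gamma^t(u)$. For the jump integrand, $\tau_i$ is the only jump time of $\Upsilon_i^{\rm M}$ with $\Delta\Upsilon_i^{\rm M}(\tau_i)=1$, so $\beta_i^{\rm M}(\tau_i)=\Delta\eta^{\rm M}(t,\tau_i)$; combining $\Delta\eta(t,\tau_i)=\eta(t,\tau_i-)(\lambda_i(I(\tau_i-),H(\tau_i-))^{-1}-1)$ with a Kallianpur--Striebel-type projection and the definition $\lambda_i^{\rm M}(p,z)=\sum_{k}\lambda_i(k,z)p_k$ yields the claimed form. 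For the continuous integrand, I would compute $\langle\eta^{\rm M}(t,\cdot),W_i^{\rm M}\rangle^c$ via integration by parts: since $d[W_i^{\rm M},\Gamma^t]^c=-\sigma_i^{-1}(\mu_i+\lambda_i)(1-H_i(s))\,ds$, testing $\int_t^T\gamma(s)\,dW_i^{\rm M}(s)\cdot\eta^{\rm M}(t,T)$ against arbitrary $\FxM$-predictable $\gamma$ reduces the identification of $\alpha_i^{\rm M}$ to verifying that the $\FxM$-predictable projection of $\eta(t,s)(\mu_i(I(s))+\lambda_i(I(s),H(s)))$ equals $\eta^{\rm M}(t,s)(\mu_i^{\rm M}(p^{\rm M}(s))+\lambda_i^{\rm M}(p^{\rm M}(s),H(s)))$.

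Substituting the identified integrands into the martingale representation gives exactly $d\eta^{\rm M}(t,u)=\eta^{\rm M}(t,u-)\,d\phi^t(u)$ with $\eta^{\rm M}(t,t)=1$, whose unique strong solution is the Dol\'eans--Dade exponential $\mathcal{E}(\phi^t)_u$, establishing~\eqref{eq:hatetaexpo}. The main technical obstacle is precisely the factorization of the projection in the previous paragraph: $\eta(t,s)$ and $I(s)$ are not conditionally independent given $\FM_s$, so the identity cannot be obtained by naive separation of the conditional expectation. The rigorous argument proceeds via a Kallianpur--Striebel change of measure under $\Px^*$, together with a Monotone Class Theorem approximation in the spirit of Lemma~\ref{lem:stopmeasrbl} and Theorem~\ref{thm:hatFrep}, to pass the projection through the product.
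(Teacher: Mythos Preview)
Your overall architecture matches the paper's: apply Theorem~\ref{thm:hatFrep} to the $(\Px,\FxM)$-martingale $\eta^{\rm M}(t,\cdot)$, identify the integrands $\alpha_i^{\rm M},\beta_i^{\rm M}$, and recognize the resulting SDE as that of $\mathcal{E}(\phi^t)$. The divergence is in how the integrands are identified, and this is where your proposal has a real gap.

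You attempt to identify $\alpha_i^{\rm M}$ and $\beta_i^{\rm M}$ by matching jumps and continuous covariations of $\eta^{\rm M}$ directly against those of $\eta$. As you correctly note, this forces you to prove the factorization
\[
\Ex\bigl[\eta(t,s)\bigl(\mu_i(I(s))+\lambda_i(I(s),H(s))\bigr)\,\big|\,\FM_s\bigr]
=\eta^{\rm M}(t,s)\bigl(\mu_i^{\rm M}(p^{\rm M}(s))+\lambda_i^{\rm M}(p^{\rm M}(s),H(s))\bigr),
\]
and an analogous jump identity. These are \emph{not} consequences of conditional independence, and your proposed resolution via a Kallianpur--Striebel change of measure and Monotone Class approximation is left as a sketch; in fact it is not clear that this route yields the factorization without essentially reproving the lemma. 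The same comment applies to your jump identification, where ``a Kallianpur--Striebel-type projection'' is invoked but not executed.

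The paper avoids this difficulty entirely by the innovation/test-process method. It chooses the $\FxM$-adapted processes $W_i^{o,\tau}$ and $\phi_i=H_i-(\cdot\wedge\tau_i)$ as test processes. Since $W_i^{o,\tau}$ is a $(\Px^*,\Fx)$-martingale, $\eta(t,\cdot)W_i^{o,\tau}$ is a $(\Px,\Fx)$-martingale, and because $W_i^{o,\tau}$ is $\FxM$-adapted one has the trivial identity $(\eta W_i^{o,\tau})^{\rm M}=\eta^{\rm M}W_i^{o,\tau}$. Applying integration by parts to $\eta^{\rm M}W_i^{o,\tau}$ and setting its finite-variation part to zero yields $\alpha_i^{\rm M}=-\sigma_i^{-1}\eta^{\rm M}(\mu_i^{\rm M}+\lambda_i^{\rm M})$ immediately, with no factorization needed. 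The same device with $\phi_i$ gives $\beta_i^{\rm M}$. In effect, the paper's method \emph{proves} your factorization identity as a by-product rather than assuming it; your route would work only after supplying an independent proof of that identity, which is the substantive content of the lemma.
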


\begin{proof}
It follows by definition that, for $u\in[t,T]$,
\begin{align*}
d\eta(t, u)&=\eta(t, u-)\Bigg\{\sum_{i=1}^n(\lambda_i(I(u-),H(u-))^{-1}-1)d\Upsilon_i(u)\nonumber\\
&\quad-\sum_{i=1}^n\sigma_i^{-1}(1-H_i(u))(\mu_i(I(u))+\lambda_i(I(u),H(u)))dW_i(u)\Bigg\}.
\end{align*}
As in the proof of Proposition~\ref{prop:dynamicspM}, we still choose $W_i^{o,\tau}$ to be the test process for $i=1,\ldots,n$. Noting that $W_i^{o,\tau}$ is a stopped $\Fx$-Brownian motion under $\Px^*$, we obtain that $\eta^{\rm M}=(\eta^{\rm M}(t,u))_{u\in[t,T]}$ and $(\eta W_i^{o,\tau})^{\rm M}=(\Ex[\eta(t,u) W_i^{o,\tau}(u)|\F_u^{\rm M}])_{u\in[t,T]}$ are both square-integrable $\FxM$-martingales under $\Px$. In light of Theorem~\ref{thm:hatFrep}, there exist $\FxM$-predictable processes $\alpha^{\rm M}=(\alpha_1^{\rm M}(t),\ldots,\alpha_n^{\rm M}(t))_{t\in[0,T]}^{\top}$ and $\beta^{\rm M}=(\beta_1^{\rm M}(t),\ldots,\beta_n^{\rm M}(t))_{t\in[0,T]}^{\top}$ such that, for $u\in[t,T]$,
\begin{align}\label{eq:hatetarep}
\eta^{\rm M}(t,u)=1+\sum_{i=1}^n\int_t^u\alpha_i^{\rm M}(s)dW^{\rm M}_i(s)+\sum_{i=1}^n\int_t^u\beta_i^{\rm M}(s)d\Upsilon_i^{\rm M}(s).
\end{align}
On the other hand, integration by parts gives that
\begin{align*}
&\eta^{\rm M}(t, u)W_i^{o,\tau}(u)=W_i^{o,\tau}(t)+\int_t^uW_i^{o,\tau}(s)d\eta^{\rm M}(t, s)+\int_t^u\eta^{\rm M}(t,s)dW_i^{\rm M}(s)\nonumber\\
&\quad+\sigma_i^{-1}\int_t^u\eta^{\rm M}(t,s)(1-H_i(s))(\mu_i^{\rm M}(s)+\lambda_i^{\rm M}(s))ds+\int_t^u(1-H_i(s))\alpha_i^{\rm M}(s)ds.
\end{align*}
Note that the $\FxM$-adapted finite variation part in the canonical decomposition of $(\eta W_i^{o,\tau})^{\rm M}$ vanishes. Using the equality $(\eta W_i^{o,\tau})^{\rm M}=\eta^{\rm M}W_i^{o,\tau}$ and comparing their finite variation parts, we deduce that
\begin{align}\label{eq:alphaMeta}
\alpha_i^{\rm M}(s)=-\sigma_i^{-1}\eta^{\rm M}(t,s)(\mu_i^{\rm M}(s)+\lambda^{\rm M}_i(s)),\quad t\leq s\leq\tau_i^t.
\end{align}

We next choose a test process $\phi_i(t):=H_i(t)-t\wedge\tau_i$ for $t\in[0,T]$ to identify $\beta^{\rm M}$ in \eqref{eq:hatetarep}. By Girsanov's theorem, $\eta\phi_i$ is a $(\Px,\Fx)$-martingale. Then, the $\FxM$-adapted finite variation part of $(\eta\phi_i)^{\rm M}$ vanishes. Moreover, integration by parts yields that
\begin{align*}
&\eta^{\rm M}(t,u)\phi_i(u)
=\phi_i(t)+\int_t^u\phi_i(s-)d\eta^{\rm M}(t,s)+\int_t^u(\eta^{\rm M}(t,s-)+\beta_i^{\rm M}(s-))d\Upsilon_i^{\rm M}(s)
\notag\\
&\quad+\sigma_i^{-1}\int_t^u\eta^{\rm M}(t,s)(1-H_i(s))(\lambda^{\rm M}_i(s)-1)ds+\int_t^u(1-H_i(s))\lambda^{\rm M}_i(s)\beta_i^{\rm M}(s)ds.
\end{align*}
Comparing the finite variation parts of processes $(\eta\phi_i)^{\rm M}=(\Ex[\eta(t,u)\phi_i(u)|\F_u^{\rm M}])_{u\in[t,T]}$ and $\eta^{\rm M}\phi_i=(\eta^{\rm M}(t,u)\phi_i(u))_{u\in[t,T]}$, we have that
\begin{align}\label{eq:betaMeta}
\beta_i^{\rm M}(s)=\eta^{\rm M}(t,s-)(\lambda_i^{\rm M}(s-)^{-1}-1),\quad t\leq s\leq\tau_i^t.
\end{align}
The proof is completed by plugging $\alpha^{\rm M}$ in \eqref{eq:alphaMeta} and $\beta^{\rm M}$ in \eqref{eq:betaMeta} back into \eqref{eq:hatetarep}.
\end{proof}

We next give the reformulation of the objective functional $J$ in \eqref{eq:Junob} under partial information $\FxM$. The proof is deferred to Appendix \ref{app:proof1}.
\begin{lemma}\label{lem:Xtpi2}
Let the assumption {\rm({\bf H})} hold and $\Px^*$ be the probability measure defined in \eqref{eq:Pstar}. Then, for $(\pi;t,p,z)\in\Uadt\times[0,T]\times S_{p^{\rm M}}\times S_H$, it holds that
\begin{align}\label{eq:Qtu0}
J(\pi;t,p,z)=\Ex_{t,p,z}\left[\left(\frac{X^{\pi}(T)}{X^{\pi}(t)}\right)^{-\frac{\theta}{2}}\right]=\Ex^{*}_{t,p,z}\left[e^{Q^{\pi,t}(T)}\right].
\end{align}
Here, the $\FxM$-adapted process $Q^{\pi,t}(u)$ for $u\in[t,T]$ is defined by
\begin{align}\label{eq:Qtu}
Q^{\pi,t}(u)&:=-\frac{r\theta}2(u-t)+\sum_{i=1}^n\int_t^u\left\{\sigma_i^{-1}(\mu_i^{\rm M}(s)+\lambda_i^{\rm M}(s))-\frac{\theta\sigma_i}{2}\pi_i(s)\right\}dW_i^{o,\tau}(s)\notag\\
&\quad-\sum_{i=1}^n\int_t^u\left\{\frac{\theta}{2}\ln(1-\pi_i(s))-\ln(\lambda_i^{\rm M}(s-))\right\}d\Upsilon^*_i(s)\notag\\
&\quad+\sum_{i=1}^n\int_t^{u\wedge\tau_i^t}\left\{1-\lambda_i^{\rm M}(s)+\ln(\lambda_i^{\rm M}(s))-\frac{1}{2}\sigma_i^{-2}(\mu^{\rm M}_i(s)+\lambda^{\rm M}_i(s))^2\right\}ds\notag\\
&\quad+\sum_{i=1}^n\int_t^{u\wedge\tau_i^t}\left\{\frac{r\theta}{2}\pi_i(s)+\frac{\theta\sigma_i^2}{4}\pi_i^2(s)-\frac{\theta}{2}\ln(1-\pi_i(s))\right\}ds,
\end{align}
where $\Upsilon^*=(\Upsilon_1^*(t),\ldots,\Upsilon_n^*(t))_{t\in[0,T]}^{\top}$ is defined by \eqref{eq:Mstar}.
\end{lemma}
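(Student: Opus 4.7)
From~(\ref{eq:Junob}), $J(\pi;t,p,z) = \Ex^*_{t,p,z}[\eta^{-1}(t,T)(X^\pi(T)/X^\pi(t))^{-\theta/2}]$; the plan is to first pass the Radon–Nikodym density from $\F_T$ down to $\FM_T$ using Lemma~\ref{lem:dyanmicsetaM}, and then to match the resulting pathwise exponential with $e^{Q^{\pi,t}(T)}$ by a direct Doléans-Dade computation.

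The key observation enabling the first step is that $(X^\pi(T)/X^\pi(t))^{-\theta/2}$ is $\FM_T$-measurable by~(\ref{thetaX}). Since $\Ex[\eta(0,T)|\FM_T]=\eta^{\rm M}(0,T)$, the restriction of $d\Px^*/d\Px$ to the observable filtration $\FM_T$ is precisely $\eta^{\rm M}(0,T)$, and the multiplicativity of Doléans-Dade exponentials together with Lemma~\ref{lem:dyanmicsetaM} gives $\eta^{\rm M}(0,T)=\eta^{\rm M}(0,t)\,\eta^{\rm M}(t,T)$. Applying the conditional Bayes formula on $\FM_t$ then yields
\begin{align*}
J(\pi;t,p,z)=\Ex^*_{t,p,z}\left[(\eta^{\rm M}(t,T))^{-1}\left(\frac{X^\pi(T)}{X^\pi(t)}\right)^{-\theta/2}\right],
\end{align*}
so it suffices to prove the pathwise identity $(\eta^{\rm M}(t,T))^{-1}(X^\pi(T)/X^\pi(t))^{-\theta/2} = e^{Q^{\pi,t}(T)}$.

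For this identity I expand $\eta^{\rm M}(t,T)=\mathcal{E}(\phi^t)_T$ via the Doléans-Dade logarithm
\begin{align*}
\log\eta^{\rm M}(t,T)=\phi^t(T)-\tfrac12\langle\phi^{t,c}\rangle_T+\sum_{t<s\leq T}\bigl[\log(1+\Delta\phi^t(s))-\Delta\phi^t(s)\bigr],
\end{align*}
noting that at a default of asset $i$ one has $1+\Delta\phi^t(s)=\lambda_i^{\rm M}(s-)^{-1}$, so the jump correction contributes $-\log\lambda_i^{\rm M}(s-)-(\lambda_i^{\rm M}(s-)^{-1}-1)$ at each such jump. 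I then switch back to the $\Px^*$-observable martingales using $dW_i^{\rm M}=dW_i^{o,\tau}-\sigma_i^{-1}(\mu_i^{\rm M}+\lambda_i^{\rm M})(1-H_i(s-))ds$ from~(\ref{eq:WM-M}) and $d\Upsilon_i^{\rm M}=dH_i-\lambda_i^{\rm M}(1-H_i)ds$ together with $dH_i=d\Upsilon_i^*+(1-H_i)ds$ from~(\ref{eq:Mstar}). Adding the logarithm of~(\ref{thetaX}) and rewriting every $(1-H_i(s))ds$ integrand as an integral on $[t,T\wedge\tau_i^t]$, the four groups of terms (the deterministic $-r\theta(T-t)/2$, the $dW_i^{o,\tau}$-integrals, the $d\Upsilon_i^*$-integrals, and the drifts on $[t,T\wedge\tau_i^t]$) line up term-by-term with the four groups in~(\ref{eq:Qtu}); in particular, the constant $1$ inside the drift $1-\lambda_i^{\rm M}(s)+\log\lambda_i^{\rm M}(s)$ is produced exactly by the Itô correction $-\Delta\phi^t(s)$ combined with the $\Px$-compensator of $\Upsilon^{\rm M}$.

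The main obstacle is the bookkeeping in this last step: one must carefully track the $(1-H_i(s))$ indicators that arise from stopping $W_i^{o,\tau}$ at $\tau_i$, from the $\Px$-compensator of $\Upsilon_i^{\rm M}$, and from the $\Px^*$-compensator of $\Upsilon_i^*$, and verify that the Itô corrections from inverting $\mathcal{E}(\phi^t)$ combine with the $\Px\leftrightarrow\Px^*$ compensator change to reproduce exactly the drift $1-\lambda_i^{\rm M}(s)+\log\lambda_i^{\rm M}(s)$ on $[t,T\wedge\tau_i^t]$ without any residual term.
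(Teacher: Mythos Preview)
Your proposal is correct and follows the same route as the paper: the paper's proof is the two-line observation that $(X^\pi(T)/X^\pi(t))^{-\theta/2}$ is $\FM_T$-measurable by~\eqref{thetaX}, so the change of measure can be carried out with the projected density $\eta^{\rm M}(t,T)^{-1}$ from Lemma~\ref{lem:dyanmicsetaM}, after which ``a direct computation using~\eqref{thetaX} and~\eqref{eq:hatetaexpo}'' yields~\eqref{eq:Qtu0}. Your write-up is precisely this direct computation spelled out---the Dol\'eans--Dade logarithm of $\mathcal{E}(\phi^t)_T$, the switch from $(W^{\rm M},\Upsilon^{\rm M})$ to $(W^{o,\tau},\Upsilon^*)$, and the term-by-term matching---and your drift bookkeeping producing $1-\lambda_i^{\rm M}+\ln\lambda_i^{\rm M}$ is correct.
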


We can now introduce a quadratic BSDE with jumps associated to the control problem \eqref{eq:Jvaluefcn}. Let $(t,p,z)\in[0,T]\times S_{p^{\rm M}}\times S_H$, and $(p^{\rm M}(t),H(t))=(p,z)$. Consider the following BSDE defined on the filtered probability space $(\Omega,\F,\FxM,\Px^*)$ with $\Px^*$ given in \eqref{eq:Pstar} that
\begin{align}\label{eq:formalBSDE}
\left\{\begin{aligned}
dY(u)&=f(p^{\rm M}(u),H(u),Z(u),V(u))du+Z(u)^{\top}dW^{o,\tau}(u)+V(u)^{\top}d\Upsilon^*(u),~u\in[t,T);\\[0.4em]
Y(T)&=0,
\end{aligned}\right.
\end{align}
where, for $(p,z,\xi,v)\in S_{p^{\rm M}}\times S_{H}\times\R^n\times\R^n$, the driver term of BSDE is given by
\begin{align}\label{eq:deff}
f(p,z,\xi,v):=\sup_{\pi\in(-\infty,1)^n}h(\pi;p,z,\xi,v),
\end{align}
in which $h(\pi;p,z,\xi,v)$ is given by
\begin{align}\label{eq:defh}
h(\pi;p,z,\xi,v)&:=h_L(p,z,\xi,v)+\sum_{i=1}^nh_i(\pi_i;p,z,\xi_{i},v_{i}).
\end{align}
Here, $h_L(p,z,\xi,v)$ is a linear strategy-independent function in $(\xi,v)$, which is defined by
\begin{align}\label{eq:defh2}
h_L(p,z,\xi,v)&:=-\sum_{i=1}^n(1-z_i)\xi_i\sigma_i^{-1}(\mu^M_i(p)+\lambda^M_i(p,z))+\sum_{i=1}^n(1-z_i)v_i+\frac{r\theta}2,
\end{align}
and for $i=1,\ldots,n$,
\begin{align}\label{eq:defh2}
h_i(\pi_i;p,z,\xi_i,v_i)&:=(1-z_i)\Bigg\{-\frac\theta4\sigma_i^2\pi_i^2+\frac\theta2\left(\mu_i^{\rm M}(p)+\lambda^{\rm M}_i(p,z)-r\right)\pi_i-\frac12\left|\frac\theta2\sigma_i\pi_i-\xi_i\right|^2\nonumber\\
&\quad+\lambda_i^{\rm M}(p,z)-\lambda_i^{\rm M}(p,z)(1-\pi_i)^{-\frac\theta2}e^{v_i}\Bigg\}.
\end{align}
The functions $\mu^{\rm M}(p)$ and $\lambda^{\rm M}(p,z)$ are given in \eqref{eq:filtersM}. From this point onwards, we will write the first component $Y(u)$ of the solution of the BSDE~\eqref{eq:formalBSDE} as $Y(u;t,p,z)$ to emphasize its dependence on the initial data $(p,z)$ at time $t$.

The preliminary relationship between the value function and the solution of BSDE~\eqref{eq:formalBSDE} is built in the first verification result on the optimality as below.
\begin{lemma}\label{lem:BSDE-valuefcn}
Let the assumption {\rm({\bf H})} hold and $(Y,Z,V)$ be a solution of BSDE~\eqref{eq:formalBSDE} given the initial data $(p^{\rm M}(t),H(t))=(p,z)\in S_{p^{\rm M}}\times S_H$ at time $t$. Then, for any $\pi\in\Uadt$, it holds that $J(\pi;t,p,z)\geq e^{Y(t;t,p,z)}$. Moreover, if there exists a process $\pi^*\in\Uadt$ such that $d\Px^*\otimes du$-a.e.
\begin{align}\label{eq:optimal}
h(\pi^*(u);p^{\rm M}(u-),H(u-),Z(u),V(u))=f(p^{\rm M}(u-),H(u-),Z(u),V(u)),
\end{align}
for $u\in[t,T]$, and $\pi^*$ is an optimal strategy for the risk sensitive control problem \eqref{eq:value}.
\end{lemma}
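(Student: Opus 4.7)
The plan is to couple the BSDE solution $Y$ with the exponent $Q^{\pi,t}$ of Lemma~\ref{lem:Xtpi2} and read off the desired bound from a (local) sub/supermartingale. By Lemma~\ref{lem:Xtpi2}, $J(\pi;t,p,z)=\Ex^{*}_{t,p,z}[e^{Q^{\pi,t}(T)}]$, so I set
\begin{align*}
A(u):=Y(u;t,p,z)+Q^{\pi,t}(u)-Q^{\pi,t}(t),\qquad M(u):=e^{A(u)},\qquad u\in[t,T].
\end{align*}
Since $Y(T)=0$ and $Q^{\pi,t}(t)=0$, one has $M(t)=e^{Y(t;t,p,z)}$ and $M(T)=e^{Q^{\pi,t}(T)}$, whence $J(\pi;t,p,z)=\Ex^{*}_{t,p,z}[M(T)]$. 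Both assertions of the lemma reduce to showing $\Ex^{*}_{t,p,z}[M(T)]\geq M(t)$ in general, with equality whenever $\pi$ realises the pointwise supremum in \eqref{eq:optimal}.

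The first step is to compute the $\Px^{*}$-drift of $M$ by It\^o's formula with jumps. Adding the BSDE \eqref{eq:formalBSDE} for $Y$ and the decomposition \eqref{eq:Qtu} of $Q^{\pi,t}$, the semimartingale $A$ has drift $(f+b_{Q})du$, Brownian coefficients $Z_i+c_{Q,i}$ against each $dW_i^{o,\tau}$, and jump coefficients $V_i+d_{Q,i}$ against each $d\Upsilon_i^{*}$, where $c_{Q,i}=\sigma_i^{-1}(\mu_i^{\rm M}+\lambda_i^{\rm M})-\tfrac{\theta}{2}\sigma_i\pi_i$, $d_{Q,i}=-\tfrac{\theta}{2}\ln(1-\pi_i)+\ln\lambda_i^{\rm M}(u-)$, and $b_{Q}$ is the finite-variation integrand appearing in \eqref{eq:Qtu}. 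Recalling that $d\langle W_i^{o,\tau}\rangle_{u}=(1-H_i(u))du$ and that $H_i$ has $\Px^{*}$-intensity $(1-H_i)$ by \eqref{eq:Mstar}, It\^o's formula gives
\begin{align*}
\frac{dM(u)}{M(u-)}=\Bigl\{f+b_{Q}+\tfrac12\sum_{i=1}^{n}(1-H_i)(Z_i+c_{Q,i})^{2}+\sum_{i=1}^{n}(1-H_i)\bigl[e^{V_i+d_{Q,i}}-1-(V_i+d_{Q,i})\bigr]\Bigr\}du+dN(u),
\end{align*}
for some $\Px^{*}$-local martingale $N$.

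The crucial algebraic identity is that this bracket coincides with $f(p^{\rm M}(u-),H(u-),Z(u),V(u))-h(\pi(u);p^{\rm M}(u-),H(u-),Z(u),V(u))$. Its verification is a lengthy but routine computation: completing the square in $Z_i+c_{Q,i}$ reproduces $-\tfrac12|\tfrac{\theta}{2}\sigma_i\pi_i-Z_i|^{2}$ from $h_i$ together with the linear contribution $Z_i\sigma_i^{-1}(\mu_i^{\rm M}+\lambda_i^{\rm M})$ from $h_L$; the logarithmic terms $\ln\lambda_i^{\rm M}$ and $\tfrac{\theta}{2}\ln(1-\pi_i)$ in $b_{Q}$ cancel pairwise against their counterparts inside $V_i+d_{Q,i}$; the exponential contribution produces $\lambda_i^{\rm M}(1-\pi_i)^{-\theta/2}e^{V_i}$; and the remaining constants $\pm\lambda_i^{\rm M}$, $\pm V_i$, $\pm\tfrac{r\theta}{2}$ match accordingly. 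By \eqref{eq:deff}, $f=\sup_\pi h\geq h(\pi;\cdots)$, so $M$ is a $\Px^{*}$-local submartingale on $[t,T]$, and a $\Px^{*}$-local martingale whenever $\pi=\pi^{*}$ satisfies \eqref{eq:optimal}.

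To conclude, localise by an $\FxM$-stopping-time sequence $(\tau_n)\uparrow T$ and take $\Ex^{*}_{t,p,z}$ to obtain $\Ex^{*}_{t,p,z}[M(T\wedge\tau_n)]\geq M(t)$, with equality when $\pi=\pi^{*}$. The uniform integrability needed to pass to the limit is supplied precisely by the admissibility assumption in Definition~\ref{def:addmisibleUad} that $(\mathcal{E}(\Lambda^{\pi,t})_u)_{u\in[t,T]}$ is a genuine $(\Px^{*},\FxM)$-martingale, since $e^{A}$ is dominated by this exponential together with controlled contributions coming from $Y$. Sending $n\to\infty$ then delivers $J(\pi;t,p,z)=\Ex^{*}_{t,p,z}[M(T)]\geq e^{Y(t;t,p,z)}$, and the local martingale case yields equality for $\pi^{*}$, so $\pi^{*}\in\Uadt$ attains the infimum in \eqref{eq:Jvaluefcn} and is therefore optimal. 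I expect the main obstacle to be precisely this final uniform-integrability upgrade: because of the quadratic-exponential growth carried by $e^{A}$, one must align the admissibility constraint on $\mathcal{E}(\Lambda^{\pi,t})$ with the exact Brownian and jump coefficients entering the drift identity, so that the localising stopping times can safely be removed in the limit.
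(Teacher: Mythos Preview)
Your Itô computation and the drift identification are exactly what the paper does: the process $M(u)=e^{Y(u)+Q^{\pi,t}(u)}$ satisfies $dM/M_{-}=(f-h(\pi;\cdot))\,du+d\Lambda^{\pi,t}$ with $\Lambda^{\pi,t}$ as in \eqref{eq:Lambdapit}, and hence
\[
M(s)=M(t)\,\mathcal{E}(\Lambda^{\pi,t})_s\,\exp\!\Bigl(\int_t^s\bigl(f(u)-h(\pi(u);u)\bigr)\,du\Bigr).
\]
This is precisely the paper's \eqref{control Lambda}. The gap is in your last paragraph. You write that $e^{A}$ is ``dominated by'' $\mathcal{E}(\Lambda^{\pi,t})$ times controlled factors, and then invoke uniform integrability to pass to the limit along a localising sequence. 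But the factor $\exp(\int(f-h)\,du)$ is $\geq 1$, not $\leq 1$, so the domination runs the wrong way: $M(s)\geq M(t)\,\mathcal{E}(\Lambda^{\pi,t})_s$. No UI of $M$ follows from the martingale property of $\mathcal{E}(\Lambda^{\pi,t})$, and Fatou applied to the nonnegative $M(T\wedge\tau_n)$ gives the inequality in the wrong direction for a submartingale.

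The paper's remedy is simply to skip localisation altogether. Once the identity above is written at $s=T$ (using $Y(T)=0$), take expectation and use the pointwise lower bound:
\[
J(\pi;t,p,z)\,e^{-Y(t)}=\Ex^{*}_{t,p,z}\!\left[\mathcal{E}(\Lambda^{\pi,t})_T\exp\!\Bigl(\int_t^T(f-h)\,du\Bigr)\right]\geq \Ex^{*}_{t,p,z}\!\left[\mathcal{E}(\Lambda^{\pi,t})_T\right]=1,
\]
the last equality being exactly the admissibility constraint in Definition~\ref{def:addmisibleUad}. For $\pi^{*}$ satisfying \eqref{eq:optimal} the exponential factor is identically $1$, so equality holds without any local-martingale-to-martingale upgrade. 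Thus the admissibility assumption is used not to dominate $M$ but to evaluate the expectation of the \emph{smaller} process $\mathcal{E}(\Lambda^{\pi,t})_T$; your formulation of that final step should be rewritten accordingly.
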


\begin{proof}
By Lemma \ref{lem:Xtpi2}, we have that, for $\pi\in\Uadt$,
\begin{align}
J(\pi;t,p,z)=\Ex_{t,p,z}\left[\left(\frac{X^{\pi}(T)}{X^{\pi}(t)}\right)^{-\frac{\theta}{2}}\right]=\Ex^{*}_{t,p,z}\left[e^{Q^{\pi,t}(T)}\right],
\end{align}
where $Q^{\pi,t}$ is given by \eqref{eq:Qtu}. For $u\in[t,T]$, let us define
\begin{align}\label{eq:Lambdapit}
\Lambda^{\pi,t}(u)&:=\sum_{i=1}^n\int_t^u\left\{\sigma^{-1}_i\big(\mu_i^{\rm M}(s)+\lambda^{\rm M}_i(s)\big)-\frac{\theta\sigma_i}2\pi_i(s)+Z_i(s)\right\}dW_i^{o,\tau}(s)\notag\\
&\quad+\sum_{i=1}^n\int_t^u\left\{(1-\pi(s))^{-\frac\theta2}\lambda^{\rm M}_i(s-)e^{V_i(s)}-1\right\}d\Upsilon_i^*(s).
\end{align}
As $(Y,Z,V)$ solves BSDE~\eqref{eq:formalBSDE}, a direct calculation yields that
\begin{align*}
J(\pi;t,p,z)e^{-Y(t;t,p,z)}&=\Ex^{*}_{t,p,z}\left[e^{Q^{\pi,t}(T)-Y(t;t,p,z)}\right]\notag\\
&=\Ex^*_{t,p,z}\left[{\cal E}(\Lambda^{\pi,t})_T\exp\left(\int_t^T(f(u)-h(\pi(u);u))du\right)\right].
\end{align*}
Here, we have used the simplified notations $f(u):=f(p^{\rm M}(u-),H(u-),Z(u),V(u))$ and $h(\pi(u);u):=h(\pi(u);p^{\rm M}(u-),H(u-),Z(u),V(u))$. By the definition of $f$ in \eqref{eq:deff}, it is easy to see that $f(u)-h(\pi(u);u)\geq0$ for all $u\in[t,T]$. Therefore, for all $s\in[t,T]$,
\begin{align}\label{control Lambda}
e^{Q^{\pi,t}(s)}e^{Y(s;t,p,z)-Y(t;t,p,z)}&={\cal E}(\Lambda^{\pi,t})_s\exp\left(\int_t^s(f(u)-h(\pi(u);u))du\right)
\geq{\cal E}(\Lambda^{\pi,t})_s.
\end{align}
Note that, for all admissible strategies $\pi\in\Uadt$, the process $({\cal E}(\Lambda^{\pi,t})_s)_{s\in[t,T]}$ is a $(\Px^*,\FxM)$-martingale by Definition~\ref{def:addmisibleUad}. This implies that, for any $\pi\in\Uadt$,
\begin{align}\label{eq:JpiY}
J(\pi;t,p,z)e^{-Y(t;t,p,z)}&=\Ex^{*}_{t,p,z}\left[e^{Q^{\pi,t}(T)-Y(t;t,p,z)}\right]\notag\\
&=\Ex^*_{t,p,z}\left[{\cal E}(\Lambda^{\pi,t})_T\exp\left(\int_t^T(f(u)-h(\pi(u);u))du\right)\right]\notag\\
&\geq\Ex^*_{t,p,z}\left[{\cal E}(\Lambda^{\pi,t})_T\right]=1.
\end{align}
On the other hand, if \eqref{eq:optimal} holds, then $f(u)=h(\pi^*(u);u)=0$ for $u\in[t,T]$, a.s.. This further entails that the inequality \eqref{eq:JpiY} holds as an equality. Hence, for all $\pi\in\Uadt$, we get that
\begin{align*}
J(\pi;t,p,z)\geq e^{Y(t;t,p,z)}=J(\pi^*;t,p,z),
\end{align*}
which confirms that $\pi^*\in\Uadt$ is an optimal strategy.
\end{proof}

\section{Quadratic BSDE with jumps}\label{sec:BSDE}

This section focuses on the existence of solutions to BSDE~\eqref{eq:formalBSDE} under the partial information probability space $(\Omega,\F,\FxM,\Px^*)$ with $\Px^*$ given by \eqref{eq:Pstar}. To this end, let us first introduce the next regularized form of BSDE~\eqref{eq:formalBSDE} that
\begin{align}\label{eq:regulizeBSDE}
\left\{\begin{aligned}
d\tilde{Y}(u)&=\tilde{f}(p^{\rm M}(u),H(u),\tilde{Z}(u),\tilde{V}(u))du+\tilde{Z}(u)^{\top}dW^{o,\tau}(u)+\tilde{V}(u)^{\top}d\Upsilon^*(u),~ u\in[t,T);\\[0.4em]
\tilde{Y}(T)&=\int_{t}^Tf(p^{\rm M}(u),H(u),0,0)du.
\end{aligned}\right.
\end{align}
Here, $\tilde{f}(p,z,\xi,v):=f(p,z,\xi,v)-f(p,z,0,0)$ and hence $\tilde{f}(p,z,0,0)=0$ for all $(p,z)\in S_{p^{\rm M}}\times S_H$. Note that the triplet $(Y,Z,V)$ solves~\eqref{eq:formalBSDE} on $[t,T]$ if and only if $(Y-\int_t^\cdot f(p^{\rm M}(u),H(u),0,0)du,Z,V)$ solves \eqref{eq:regulizeBSDE} on $[t,T]$. Therefore, it suffices to prove the existence of $\Fx^{\rm M}$-solutions of BSDE~\eqref{eq:regulizeBSDE} with the random terminal condition.

\begin{remark}\label{remarkdiff}
We stress that $W^{o,\tau}=(W_1^{o}(t\wedge\tau_1),\ldots,W_n^{o}(t\wedge\tau_n))_{t\in[0,T]}^{\top}$ is a martingale under $(\Omega,\F,\FxM,\Px^*)$, therefore the stopped feature by $(\tau_1,\ldots,\tau_n)$ is actually hidden in the proof of the existence of solution $(\tilde{Y}, \tilde{Z}, \tilde{V})$ to BSDE \eqref{eq:regulizeBSDE}. The main challenges to analyze BSDE \eqref{eq:regulizeBSDE} come from its random driver term ${\rm G}(t,\omega,\xi,v):=\tilde{f}(p^{\rm M}(\omega,t),H(\omega,t),\xi,v)$ with $(t,\omega,\xi,v)\in[0,T]\times\Omega\times\R^n\times\R^n$. By the definition of $f(p,z,\xi,v)$ in \eqref{eq:deff}-\eqref{eq:defh2}, it is clear to see that $\tilde{f}(p,z,\xi,v)$ is quadratic in $\xi\in\R^n$ and it is exponentially nonlinear in $v\in\R^n$.
Some standard arguments to obtain a priori estimates in the literature of quadratic BSDEs with jumps, which usually enjoy a quadratic-exponential structure as in Assumption 3.1 of Kazi-Tani, et al.~\cite{ChaoZhou2015} (see also the assumption (H) in \cite{Fabio2016}), can not be applied to BSDE \eqref{eq:regulizeBSDE}.

Note that the quadratic-exponential structure is not enforced in \cite{SCA2010}, which instead consider a class of locally Lipschitz assumption of the driver in their one-dimensional BSDE with respect to the jump solution variable $u\in\R$. However, the assumption (P1) in Ankirchner, et al.~\cite{SCA2010} assumes that the random driver $f(s,\omega,z,u):[0,T]\times\Omega\times\R^d\times\R$ satisfies a special decomposition form in terms of a single default indicator, i.e.,
\begin{align}\label{eq:decom-driver}
  f(s,\omega,z,u)=(l(s,z)+j(s,u))(1-D_{s-}(\omega))+m(s,z)D_{s-}(\omega),
\end{align}
where $D_t:=\idc_{\{\tau_1\leq t\}}$ is the single default indicator and the default time $\tau_1$ is the {\it single jump} in their BSDE. In the decomposition form \eqref{eq:decom-driver}, it can be observed that $m(s,z)$ corresponds to the driver of the post-default case, while $l(s,z)+j(s,u)$ corresponds to the driver of the pre-default case. Moreover, they also assume that $l(\cdot,z),m(\cdot,z)$ and $j(\cdot,u)$ are predictable w.r.t. the filtration generated by a Brownian motion $W$, and there exists a constant $L\in\R_+$ such that, for all $z,z'\in\R^d$,
\begin{align}\label{eq:lm}
|l(s,z)-l(s,z')|+|m(s,z)-m(s,z')|\leq L(1+|z|+|z'|)|z-z'|,
\end{align}
and the jump function $j\geq0$ also satisfies the Lipschitz continuity on $(-K,\infty)$ for any $K>0$. The above assumptions allow them to split the BSDE into two BSDEs driven by the Brownian motion $W$ without jumps. As opposed to a single jump in \cite{SCA2010}, our paper studies a sequential multiple defaults with default contagion and (common) unobservable regime-switching on an infinite sate space (note that a single default does not raise any contagion issue). It is clear that assumptions \eqref{eq:decom-driver} and \eqref{eq:lm} are violated by our random driver ${\rm G}(t,\omega,\xi,v)$.

In summary, some existing analysis can not be applied directly to show the existence of solutions to BSDE~\eqref{eq:regulizeBSDE} with the non-standard random driver ${\rm G}(t,\omega,\xi,v)$. We therefore apply some tailor-made truncation techniques and then show that the solutions of truncated BSDEs will eventually converge to the solution of BSDE~\eqref{eq:regulizeBSDE}.
\end{remark}

\subsection{Formulation of truncated BSDEs}
Let us start to introduce the truncated BSDE under $(\Omega,\F,\FxM,\Px^*)$ as follows: for any $N\geq1$,
\begin{align}\label{tuncatedBSDE}
\left\{\begin{aligned}
d\tilde{Y}^N(u)&=\tilde{f}^N(u,\tilde{Z}^N(u),\tilde{V}^N(u))du+\tilde{Z}^N(u)^{\top}dW^{o,\tau}(u)+\tilde{V}^N(u)^{\top}d\Upsilon^*(u),~ u\in[t,T);\\[0.4em]
\tilde{Y}^N(T)&=\int_t^Tf^N(u,0,0)du.
\end{aligned}\right.
\end{align}
For $(\omega,u,\xi,v)\in\Omega\times[t,T]\times\R^n\times\R^n$, the truncated random driver $\tilde{f}^N$ is defined by
\begin{align}\label{eq:randomdirver}
\tilde{f}^N(\omega,u,\xi,v)&:=f^N(\omega,u,\xi,v)-f^N(\omega,u,0,0),
\end{align}
where
\begin{align}\label{eq:fNhN}
f^N(\omega,u,\xi,v)&:=h_L(p^{\rm M}(\omega,u),H(\omega,u),\xi)\nonumber\\
&\quad+\sum_{i=1}^n(1-H_i(\omega,u))\sup_{\pi_i\in(-\infty,1)}h^N_i(\pi_i;p^{\rm M}(\omega,u),H(\omega,u),\xi,v);\nonumber\\
h^N_i(\pi_i;p,z,\xi_i,v_i)&:=-\frac\theta4\sigma_i^2\pi_i^2+\frac\theta2\left(\mu_i^{\rm M}(p)+\lambda^{\rm M}_i(p,z)-r\right)\pi_i-\frac12\left|\frac\theta2\sigma_i\pi_i-\xi_i\right|^2\rho_N(\xi_i)\nonumber\\
&\quad+\lambda_i^{\rm M}(p,z)-\lambda_i^{\rm M}(p,z)(1-\pi_i)^{-\frac\theta2}\hat{\rho}_N(e^{v_i}).
\end{align}
Here, for $N\geq1$, $\rho_N:\R\to\R_+$ is a chosen truncation function whose first-order derivative is bounded by $1$, such that $\rho_N(x)=1$ if $|x|\leq N$, $\rho_N(x)=0$ if $|x|\geq N+2$, and $0\leq\rho_N(x)\leq1$ if $N\leq|x|\leq N+2$. Meanwhile $\hat{\rho}_N:\R_+\to\R_+$ is chosen as an increasing $C^1$-function whose first-order derivative is bounded by $1$, such that $\hat{\rho}_N(x)=x$, if $0\leq x\leq N$, $\hat{\rho}_N(x)=N+1$, if $x\geq N+2$, and $N\leq\hat{\rho}(x)\leq N+1$, if $N\leq x\leq N+2$.

We will show that for each $N\geq1$, the truncated random driver $\tilde{f}^N(\omega,u,\xi,v)$ is Lipschtiz in $(\xi,v)\in\R^n\times\R^n$ uniformly in $(\omega,u)\in \Omega\times[t,T]$. To this end, we first present the next auxiliary result, whose proof is given in Appendix \ref{app:proof1}.
\begin{lemma}\label{lem:hNRN}
Let the assumption {\rm({\bf H})} hold and $(p,z,\xi_i,v_i)\in S_{p^{\rm M}}\times S_H\times\R\times\R$ for $i=1,\ldots,n$. For each $N\geq1$, there exists a constant $R_N>0$, only depending on $N$, such that
\begin{align}\label{eq:boundedsup}
\sup_{\pi_i\in(-\infty,1)}h^N_i(\pi_i;p,z,\xi_i,v_i)=\sup_{\pi_i\in[-R_N,1)}h^N_i(\pi_i;p,z,\xi_i,v_i).
\end{align}
\end{lemma}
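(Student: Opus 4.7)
The statement claims that the supremum of $h_i^N(\cdot;p,z,\xi_i,v_i)$ over $(-\infty,1)$ is unchanged when restricted to $[-R_N,1)$. The underlying intuition is that the concave quadratic term $-\frac{\theta}{4}\sigma_i^2\pi_i^2$ drives $h_i^N\to-\infty$ as $\pi_i\to-\infty$, and that this divergence is uniform in all remaining parameters thanks to assumption {\rm({\bf H})} together with the truncations $\rho_N$ and $\hat\rho_N$. Consequently, choosing $R_N$ large enough that $h_i^N(\pi_i;p,z,\xi_i,v_i)<h_i^N(0;p,z,\xi_i,v_i)$ for every $\pi_i<-R_N$ already forces \eqref{eq:boundedsup}, since $0\in[-R_N,1)$.

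First, I would derive a uniform upper bound on $h_i^N(\pi_i;p,z,\xi_i,v_i)$ valid for $\pi_i<0$. Observe that the two terms $-\tfrac{1}{2}|\tfrac{\theta}{2}\sigma_i\pi_i-\xi_i|^2\rho_N(\xi_i)$ and $-\lambda_i^{\rm M}(p,z)(1-\pi_i)^{-\theta/2}\hat{\rho}_N(e^{v_i})$ are both nonpositive (for the second one, because $\lambda_i^{\rm M}\geq 0$, $(1-\pi_i)^{-\theta/2}>0$ for $\pi_i<1$, and $\hat\rho_N\geq 0$). Dropping them and invoking {\rm({\bf H})} to estimate $|\mu_i^{\rm M}(p)|+|\lambda_i^{\rm M}(p,z)|\leq 2C$ uniformly in $(p,z)$ yields
$$h_i^N(\pi_i;p,z,\xi_i,v_i)\;\leq\;-\tfrac{\theta}{4}\sigma_i^2\pi_i^2+\tfrac{\theta}{2}(2C+r)|\pi_i|+C,$$
and the right-hand side is independent of $N$ and of $(p,z,\xi_i,v_i)$.

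Second, I would produce a uniform lower bound on $\sup_{\pi_i\in(-\infty,1)}h_i^N$ by simply evaluating the function at $\pi_i=0$. Since $\rho_N(\xi_i)=0$ for $|\xi_i|\geq N+2$, we always have $|\xi_i|^2\rho_N(\xi_i)\leq(N+2)^2$; combining this with $0\leq\hat\rho_N(e^{v_i})\leq N+1$ and $0\leq\lambda_i^{\rm M}(p,z)\leq C$ (again from {\rm({\bf H})}) gives
$$h_i^N(0;p,z,\xi_i,v_i)=-\tfrac12|\xi_i|^2\rho_N(\xi_i)+\lambda_i^{\rm M}(p,z)\bigl(1-\hat\rho_N(e^{v_i})\bigr)\;\geq\;-\tfrac12(N+2)^2-C(N+1)\;=:\;M_N.$$

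Finally, I would choose $R_N>0$, depending only on $N$ and the fixed model constants $\theta,\sigma_i,r,C$, so that $-\tfrac{\theta}{4}\sigma_i^2 R_N^2+\tfrac{\theta}{2}(2C+r)R_N+C<M_N$; such $R_N$ exists because the left-hand side tends to $-\infty$ quadratically in $R_N$, and taking the maximum over $i=1,\ldots,n$ yields a single constant valid for all components. For every $\pi_i<-R_N$ the two bounds combine to give $h_i^N(\pi_i;p,z,\xi_i,v_i)<M_N\leq h_i^N(0;p,z,\xi_i,v_i)$, so no such $\pi_i$ can contribute to the supremum, proving \eqref{eq:boundedsup}. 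I do not expect any genuine obstacle; the only delicate point is to track that every constant appearing in the bounds is independent of $(p,z,\xi_i,v_i)$, which is exactly what assumption {\rm({\bf H})} and the design of $\rho_N,\hat\rho_N$ deliver.
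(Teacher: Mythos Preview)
Your proposal is correct and follows essentially the same route as the paper's proof: evaluate $h_i^N$ at $\pi_i=0$ to get the uniform lower bound $M_N=-\tfrac12(N+2)^2-C(N+1)$, bound $h_i^N(\pi_i;\cdot)$ above by the quadratic $-\tfrac{\theta}{4}\sigma_i^2\pi_i^2+\tfrac{\theta}{2}(2C+r)|\pi_i|+C$ by dropping the two nonpositive terms, and then choose $R_N$ so that this quadratic falls below $M_N$ for $|\pi_i|>R_N$. The bookkeeping of constants is identical to the paper's argument.
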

The next result helps to derive a priori estimate for the solution of the truncated BSDE \eqref{tuncatedBSDE}.
\begin{lemma}\label{lem:LipdriverN}
Let the assumption {\rm({\bf H})} hold. For each $N\geq1$, the (random) driver $\tilde{f}^N(\omega,u,\xi,v)$ defined by \eqref{eq:randomdirver} is Lipschtizian continuous in $(\xi,v)\in\R^n\times\R^n$ uniformly on $(\omega,u)\in\Omega\times[t,T]$.
\end{lemma}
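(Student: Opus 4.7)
The plan is to reduce the claim to showing that $f^N$ itself is jointly Lipschitz in $(\xi,v)$ uniformly in $(\omega,u)$, since $\tilde f^N=f^N-f^N(\cdot,\cdot,0,0)$ by \eqref{eq:randomdirver}. Decomposing
\begin{align*}
f^N(\omega,u,\xi,v)=h_L\bigl(p^{\rm M}(\omega,u),H(\omega,u),\xi,v\bigr)+\sum_{i=1}^n\bigl(1-H_i(\omega,u)\bigr)\,g_i^N\bigl(p^{\rm M}(\omega,u),H(\omega,u),\xi_i,v_i\bigr),
\end{align*}
with $g_i^N(p,z,\xi_i,v_i):=\sup_{\pi_i\in(-\infty,1)}h_i^N(\pi_i;p,z,\xi_i,v_i)$, the component $h_L$ is affine in $(\xi,v)$ with coefficients bounded by assumption ({\bf H}), hence trivially Lipschitz. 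The task therefore reduces to showing each $g_i^N$ is Lipschitz in $(\xi_i,v_i)$ uniformly in $(p,z)\in S_{p^{\rm M}}\times S_H$.

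By Lemma \ref{lem:hNRN} I may restrict the sup defining $g_i^N$ to the compact window $\pi_i\in[-R_N,1)$. A direct computation gives $\partial^2_{\pi_i}h_i^N<0$, and $h_i^N\to-\infty$ as $\pi_i\to 1^-$ (driven by the $(1-\pi_i)^{-\theta/2}$ term), so the sup is attained at a unique interior maximizer $\pi_i^\star=\pi_i^\star(p,z,\xi_i,v_i)\in[-R_N,1)$. Danskin's envelope theorem then yields $\partial_{\xi_i}g_i^N=\partial_{\xi_i}h_i^N|_{\pi_i^\star}$ and $\partial_{v_i}g_i^N=\partial_{v_i}h_i^N|_{\pi_i^\star}$, so it suffices to bound these two partials uniformly in $(p,z,\xi_i,v_i)$. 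The $\xi_i$-derivative is straightforward: since $\rho_N$ has support in $[-N-2,N+2]$ with $|\rho_N'|\leq 1$, direct estimation of $\partial_{\xi_i}[-\tfrac12|\tfrac\theta2\sigma_i\pi_i-\xi_i|^2\rho_N(\xi_i)]$ produces a bound $C_N$ independent of $\pi_i\in[-R_N,1)$ and $(p,z,\xi_i,v_i)$.

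The main obstacle is the $v_i$-derivative
\begin{align*}
\partial_{v_i}h_i^N\bigl|_{\pi_i^\star}=-\lambda_i^{\rm M}(p,z)\,(1-\pi_i^\star)^{-\theta/2}\,\hat\rho_N'(e^{v_i})\,e^{v_i},
\end{align*}
since $(1-\pi_i^\star)^{-\theta/2}$ is a priori unbounded as $\pi_i^\star\to 1^-$. The crucial trick is to exploit optimality together with the two truncations. Evaluating $h_i^N$ at the test point $\pi_i=0$ produces a uniform lower bound $g_i^N(p,z,\xi_i,v_i)\geq -K_N$, using $|\xi_i|^2\rho_N(\xi_i)\leq (N+2)^2$, $\hat\rho_N(e^{v_i})\leq N+1$, and the boundedness of $\lambda_i^{\rm M}$ from ({\bf H}). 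Since the non-exponential terms in $h_i^N$ are uniformly bounded above by some constant $M_N$ on $[-R_N,1)$, evaluating at $\pi_i^\star$ yields the a priori estimate $\lambda_i^{\rm M}(p,z)(1-\pi_i^\star)^{-\theta/2}\hat\rho_N(e^{v_i})\leq M_N+K_N$. A case analysis built from the construction of $\hat\rho_N$ shows that $\hat\rho_N'(e^{v_i})e^{v_i}$ is dominated by $\hat\rho_N(e^{v_i})$ (up to the factor $\tfrac{N+2}{N}$) on $\{e^{v_i}\leq N+2\}$ and vanishes on $\{e^{v_i}\geq N+2\}$. Combining this with $\lambda_i^{\rm M}\geq\varepsilon$ from ({\bf H}) delivers $|\partial_{v_i}g_i^N|\leq C_N'$ uniformly in $(p,z,\xi_i,v_i)$, which completes the argument.
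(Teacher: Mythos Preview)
Your proposal is correct and follows essentially the same architecture as the paper's proof: reduce to the coordinate functions $g_i^N=\bar h_i^N$, restrict the supremum to $[-R_N,1)$ via Lemma~\ref{lem:hNRN}, invoke the envelope theorem (the paper phrases this as the Implicit Function Theorem) to reduce Lipschitz continuity to bounding $\partial_{\xi_i}h_i^N|_{\pi_i^\star}$ and $\partial_{v_i}h_i^N|_{\pi_i^\star}$, and handle the dangerous $v_i$-derivative by first bounding $\lambda_i^{\rm M}(p,z)(1-\pi_i^\star)^{-\theta/2}\hat\rho_N(e^{v_i})$ and then converting $\hat\rho_N(e^{v_i})$ into $e^{v_i}\hat\rho_N'(e^{v_i})$ via the ratio estimate $x\hat\rho_N'(x)/\hat\rho_N(x)\leq (N+2)/N$.

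The one genuine difference is how you obtain the a priori bound on $\lambda_i^{\rm M}(p,z)(1-\pi_i^\star)^{-\theta/2}\hat\rho_N(e^{v_i})$. The paper reads it off directly from the first-order condition \eqref{eq:oneordereqn}: solving for $(1-\pi_i^\star)^{-\theta/2}\hat\rho_N(e^{v_i})$ gives an expression that is manifestly bounded once $\pi_i^\star\in[-R_N,1)$ and $|\xi_i\rho_N(\xi_i)|\leq N+2$, yielding the explicit constant $R_{N,1}$ in \eqref{eq:hvbound1}. You instead use the value comparison $h_i^N(\pi_i^\star)\geq h_i^N(0)$, which is more elementary (no need to write out the stationarity equation) but gives a less explicit constant. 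Both routes are valid and the remainder of the argument is identical.
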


\begin{proof} By virtue of \eqref{eq:randomdirver} and \eqref{eq:fNhN} and Lemma~\ref{lem:hNRN}, it suffices to prove that for each $i=1,\ldots,n$, $\bar{h}^N_i(p,z,\xi_i,v_i):=\sup_{\pi_i\in[-R_N,1)}h^N_i(\pi_i;p,z,\xi_i,v_i)$ is  Lipschtizian continuous in $(\xi_i,v_i)\in\R\times\R$ uniformly on $(p,z)\in S_{p^{\rm M}}\times S_H$.  For each $(p,z,\xi_i,v_i)\in S_{p^{\rm M}}\times S_{H}\times\R\times\R$, thanks to the first-order condition, the critical point $\pi^*_i=\pi^*_i(p,z,\xi_i,v_i)$ satisfies that
\begin{align}\label{eq:oneordereqn}
&\lambda_i^{\rm M}(p,z)(1-{\pi^*_i})^{-\frac\theta2-1}\hat{\rho}_N(e^{v_i})\nonumber\\
&\qquad=-\left(1+\frac{\theta}{2}\rho_N(\xi_i)\right)\sigma_i^2{\pi^*_i}+\mu_i^{\rm M}(p)+\lambda^{\rm M}_i(p,z)-r+\sigma_i\xi_i\rho_N(\xi_i).
\end{align}
With the aid of Lemma~\ref{lem:hNRN} and the strict convexity of $\pi_i\to h_i^N(\pi_i;p,z,\xi_i,v_i)$, we get that $\pi^*_i\in[-R_N,1)$. Moreover, in view of \eqref{eq:oneordereqn}, it follows that the positive term
\begin{align}\label{eq:hvbound1}
&(1-{\pi^*_i})^{-\frac\theta2}\hat{\rho}_N(e^{v_i})\nonumber\\
&\quad=\frac{1-{\pi^*_i}}{\lambda_i^{\rm M}(p,z)}\left[-\left(1+\frac{\theta}{2}\rho_N(\xi_i)\right)\sigma_i^2{\pi^*_i}+\mu_i^{\rm M}(p)+\lambda^{\rm M}_i(p,z)-r+\sigma_i\xi_i\rho_N(\xi_i)\right]\leq R_{N,1},
\end{align}
where the constant $R_{N,1}>0$ satisfies that
\begin{align*}
R_{N,1}\geq\frac{1+R_N}{\varepsilon}\max_{i=1,\ldots,n}\left[\left(1+\frac{\theta}{2}\right)\sigma_i^2R_N
+2C+r+\sigma_i(N+2)\right],
\end{align*}
where we recall that the constant $C>0$ is given in the assumption ({\bf H}). The Implicit Function Theorem yields that
\begin{align*}
\frac{\partial}{\partial v_i}\bar{h}^N_i(p,z,\xi_i,v_i)&=\frac{\partial}{\partial v_i}h^N_i(\pi^*_i(p,z,\xi_i,v_i);p,z,\xi_i,v_i)=\frac{\partial}{\partial v_i}h^N_i(\pi_i;p,z,\xi_i,v_i)\Big|_{\pi_i=\pi^*_i(p,z,\xi_i,v_i)}\nonumber\\
&\quad+\frac{\partial\pi^*_i}{\partial v_i}(p,z,\xi_i,v_i)\frac{\partial}{\partial\pi_i}h^N_i(\pi_i;p,z,\xi_i,v_i)\Big|_{\pi_i=\pi^*_i(p,z,\xi_i,v_i)}\nonumber\\
&=\frac{\partial}{\partial v_i}h^N_i(\pi_i;p,z,\xi_i,v_i)\Big|_{\pi_i=\pi^*_i(p,z,\xi_i,v_i)}\nonumber\\
&=-\lambda_i^{\rm M}(p,z)(1-{\pi^*_i})^{-\frac\theta2}e^{v_i}\hat{\rho}_N'(e^{v_i}),
\end{align*}
in which we applied the first-order condition \eqref{eq:oneordereqn} for $\pi_i^*$ in the last equality. Note that the increasing function $\hat{\rho}_N$ enjoys the property that
\begin{align}\label{eq:truncatedhatrhooverthon}
\frac{x\hat{\rho}'_N(x)}{\hat{\rho}_N(x)}=\left\{
  \begin{array}{cl}
    1, & {\rm if}\ x\in(0,N];\\
    \in[0,\frac{N+2}{N}], & {\rm if}\ x\in[N,N+2];\\
    0, & {\rm if}\ x\geq N+2.
  \end{array}
\right.
\end{align}
Taking into account the assumption ({\bf H}) and \eqref{eq:hvbound1}, we arrive at
\begin{align}\label{eq:derivibound}
\left|\frac{\partial}{\partial v_i}\bar{h}^N_i(p,z,\xi_i,v_i)\right|&=\lambda_i^{\rm M}(p,z)(1-{\pi^*_i})^{-\frac\theta2}\hat{\rho}_N(e^{v_i})\frac{e^{v_i}\hat{\rho}_N'(e^{v_i})}{\hat{\rho}_N(e^{v_i})}\leq R_{N,2},
\end{align}
where $R_{N,2}:=C\frac{N+2}{N}R_{N,1}$ is a positive constant that only depends on $N$. On the other hand, we have that
\begin{align*}
\frac{\partial}{\partial \xi_i}\bar{h}^N_i(p,z,\xi_i,v_i)&=\frac{\partial}{\partial \xi_i}h^N_i(\pi_i;p,z,\xi_i,v_i)\Big|_{\pi_i=\pi_i^*(p,z,\xi_i,v_i)}\nonumber\\
&=\left(\frac{\theta}{2}\sigma_i\pi_i^*-\xi_i\right)\rho_N(\xi_i)
-\frac12\left|\frac\theta2\sigma_i\pi_i^*-\xi_i\right|^2\rho_N'(\xi_i).
\end{align*}
It then holds that
\begin{align}\label{eq:derixiibound}
&\left|\frac{\partial}{\partial \xi_i}\bar{h}^N_i(p,z,\xi_i,v_i)\right|\nonumber\\
&~\leq \frac{\theta}{2}\sigma_i(R_N\vee1)+|\xi_i|\rho_N(\xi_i){\bf1}_{|\xi_i|\leq N+2}+\frac{\theta^2}{4}\sigma_i^2(R_N\vee1)^2+|\xi_i|^2|\rho_N'(\xi_i)|{\bf1}_{|\xi_i|\leq N+2}
\leq R_{N,3},
\end{align}
where $R_{N,3}:=\max_{i=1,\ldots,n}[\frac{\theta}{2}\sigma_i(R_N\vee1)+\frac{\theta^2}{4}\sigma_i^2(R_N\vee1)^2+N+2+(N+2)^2]$ is a positive constant that only depends on $N$. Combining \eqref{eq:derivibound} and \eqref{eq:derixiibound}, we obtain the desired result.
\end{proof}

By \eqref{eq:fNhN}, it is easy to see that $f^N(u,0,0)=f(p^{\rm M}(u),H(u),0,0)$ for $u\in[t,T]$. Hence, the terminal condition of the truncated BSDE \eqref{tuncatedBSDE} coincides with the one of the regularized BSDE~\eqref{eq:regulizeBSDE}, i.e.,
\begin{align}\label{def-zeta}
\tilde{Y}^N(T)=\tilde{Y}(T)=:\zeta\text{ for all }N\geq1.
\end{align}
The next auxiliary result further asserts that this random terminal condition is in fact bounded and its proof is presented in Appendix \ref{app:proof1}.
\begin{lemma}\label{lem:boundedzeta}
Let the assumption {\rm({\bf H})} hold. Then, for fixed $t\in[0,T]$, the random terminal value $\zeta=\int_t^Tf(p^{\rm M}(u),H(u),0,0)du$ is bounded.
\end{lemma}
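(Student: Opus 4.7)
The plan is to exploit the additive-separable structure of $h$ in the coordinates of $\pi$ to reduce the statement to a coordinate-wise uniform estimate. Since $h_L$ is independent of $\pi$ and $\sum_{i=1}^n h_i(\pi_i;\cdot)$ is additively separable, the supremum defining $f(p,z,0,0)$ decouples as
\begin{align*}
f(p,z,0,0) = h_L(p,z,0,0) + \sum_{i=1}^n \sup_{\pi_i \in (-\infty,1)} h_i(\pi_i; p, z, 0, 0).
\end{align*}
Evaluating \eqref{eq:defh2} at $(\xi,v)=(0,0)$, the cross terms in $h_L$ vanish and one reads off $h_L(p,z,0,0) = r\theta/2$, which is a constant.

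Next, for each $i$ with $z_i = 0$ (the contribution is zero when $z_i = 1$), substituting $\xi_i = v_i = 0$ into $h_i$ gives
\begin{align*}
h_i(\pi_i; p, z, 0, 0) = -A_i\,\pi_i^2 + B_i(p,z)\,\pi_i + \lambda_i^{\rm M}(p,z)\bigl[1 - (1-\pi_i)^{-\theta/2}\bigr],
\end{align*}
with $A_i := (\theta/4+\theta^2/8)\sigma_i^2 > 0$ and $B_i(p,z) := (\theta/2)(\mu_i^{\rm M}(p)+\lambda_i^{\rm M}(p,z)-r)$. The crucial observation is that $1 - (1-\pi_i)^{-\theta/2} \leq 1$ for every $\pi_i < 1$, so the nonlinear term is dominated by $\lambda_i^{\rm M}(p,z)$, whence
\begin{align*}
\sup_{\pi_i \in (-\infty,1)} h_i(\pi_i; p, z, 0, 0) \leq \sup_{\pi_i \in \R}\bigl\{-A_i \pi_i^2 + B_i(p,z)\pi_i\bigr\} + \lambda_i^{\rm M}(p,z) = \frac{B_i(p,z)^2}{4A_i} + \lambda_i^{\rm M}(p,z).
\end{align*}
Under assumption (H), $|\mu_i^{\rm M}(p)|, \lambda_i^{\rm M}(p,z) \leq C$ uniformly in $(p,z)$, so $|B_i(p,z)| \leq \theta(2C+r)/2$ and the right-hand side is bounded by a deterministic constant independent of $(p,z)$. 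A matching lower bound is immediate from $\sup_{\pi_i} h_i(\pi_i;p,z,0,0) \geq h_i(0;p,z,0,0) = 0$.

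Combining these pieces, $f(p,z,0,0)$ is uniformly bounded on $S_{p^{\rm M}} \times S_H$, hence $\zeta = \int_t^T f(p^{\rm M}(u), H(u), 0, 0)\,du$ is bounded $\Px^*$-almost surely by a deterministic multiple of $T-t$. No substantive obstacle is anticipated; the only delicate point is recognizing that at $v_i = 0$ the potentially problematic exponential term $-\lambda_i^{\rm M}(1-\pi_i)^{-\theta/2}e^{v_i}$ collapses to an expression dominated by $\lambda_i^{\rm M}$, turning the supremum over $\pi_i$ into a one-dimensional concave quadratic maximization whose optimum is explicitly controlled by (H).
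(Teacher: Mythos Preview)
Your argument is correct and follows essentially the same approach as the paper: decouple the supremum coordinate-wise, evaluate $h_L(p,z,0,0)=r\theta/2$, and bound each $\sup_{\pi_i}h_i(\pi_i;p,z,0,0)$ uniformly in $(p,z)$ using assumption~({\bf H}). Your execution is in fact slightly cleaner than the paper's: for the upper bound you dominate the nonlinear term via $1-(1-\pi_i)^{-\theta/2}\le 1$ and reduce to an explicit quadratic maximum, whereas the paper works with a cruder pointwise sandwich $R_2(\pi_i)\le h_i(\pi_i;p,z,0,0)\le R_1$; for the lower bound your choice $\pi_i=0$ giving $h_i(0;p,z,0,0)=0$ is more direct than the paper's construction of the auxiliary function $R_2(\pi_i)$ and the constant $R_3=|\sup_{\pi_i}R_2(\pi_i)|$.
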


Building upon the martingale representation result in Theorem \ref{thm:hatFrep}, Lemma~\ref{lem:LipdriverN} and Lemma~\ref{lem:boundedzeta}, we next prove that there exists a unique solution of the truncated BSDE~\eqref{tuncatedBSDE} under the assumption {\rm({\bf H})}. In accordance with conventional notations, let us first introduce the following spaces of processes: for fixed $t\in[0,T]$,
\begin{itemize}
\item ${\cal S}_t^p$ for $1\leq p<+\infty$: the space of $\FxM$-adapted r.c.l.l. real-valued processes $Y=(Y(u))_{u\in[t,T]}$ s.t. $\Ex^*[\sup_{u\in[t,T]}|Y(u)|^p]<+\infty$.
\item ${\cal S}_t^{\infty}$: the space of $\FxM$-adapted r.c.l.l. real-valued processes $Y=(Y(u))_{u\in[t,T]}$ s.t. $\|Y\|_{t,\infty}:=\esssup_{(u,\omega)\in[t,T]\times\Omega}|Y(u,\omega)|<\infty$.
\item $L_t^2$: the space of $\FxM$-predictable $\R^n$-valued processes $X=(X(u))_{u\in[t,T]}$ s.t.\\ $\sum_{i=1}^n\Ex^*[\int_t^{T\wedge\tau_i^t}|X_i(u)|^2du]<\infty$.
\item ${\Hxx}_{t,{\rm BMO}}^2$: the space of $\FxM$-predictable $\R^n$-valued processes $Z=(Z(u))_{u\in[t,T]}$ s.t. $\|Z\|^2_{t,{\rm BMO}}:=\sup_{\zeta\in{\cal T}_{[t,T]}}\sum_{i=1}^n\Ex^*[\int_{\zeta}^T(1-H_i(u))|Z_i(u)|^2du|{\F}^{\rm M}_{\zeta}]<\infty$. Here, ${\cal T}_{[t,T]}$ denotes the set of all $\FxM$-stopping times taking values on $[t,T]$.
\end{itemize}

\begin{lemma}\label{lem:existYN}
Let the assumption {\rm({\bf H})} hold. Then, for each $N\geq1$, the truncated BSDE~\eqref{tuncatedBSDE}~admits the unique solution $(\tilde{Y}^N,\tilde{Z}^N,\tilde{V}^N)\in{\cal S}_t^2\times L_t^2\times L_t^2$.
\end{lemma}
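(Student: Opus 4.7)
Since the truncated driver $\tilde{f}^N$ is Lipschitz uniformly in $(\omega,u)$ by Lemma~\ref{lem:LipdriverN}, the terminal value $\zeta=\tilde{Y}^N(T)$ is bounded by Lemma~\ref{lem:boundedzeta}, and a martingale representation theorem under $(\Px^*,\FxM)$ is available from Lemma~\ref{lem:martrepPstar}, the truncated BSDE~\eqref{tuncatedBSDE} fits into the classical framework of Lipschitz BSDEs with jumps. The plan is to obtain existence and uniqueness of $(\tilde{Y}^N,\tilde{Z}^N,\tilde{V}^N)$ by a Banach fixed-point argument in a suitable weighted space and then verify the ${\cal S}_t^2$-regularity separately.

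For $\beta>0$, equip $L_t^2\times L_t^2$ with the equivalent norm
\begin{align*}
\|(Z,V)\|_\beta^2:=\sum_{i=1}^n\Ex^*\left[\int_t^T e^{\beta u}(1-H_i(u))\bigl(|Z_i(u)|^2+|V_i(u)|^2\bigr)du\right].
\end{align*}
Given $(Z,V)\in L_t^2\times L_t^2$, the Lipschitz bound together with $\tilde{f}^N(\cdot,0,0)=0$ yields $\zeta-\int_t^T\tilde{f}^N(s,Z(s),V(s))ds\in L^2(\Px^*)$, so that
\begin{align*}
M(u):=\Ex^*\left[\zeta-\int_t^T\tilde{f}^N(s,Z(s),V(s))ds\,\Big|\,\F_u^{\rm M}\right],\quad u\in[t,T],
\end{align*}
is a square-integrable $(\Px^*,\FxM)$-martingale. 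Extending Lemma~\ref{lem:martrepPstar} from bounded-jumps to general $L^2$-martingales by a routine density argument, there exist $\FxM$-predictable $(Z',V')\in L_t^2\times L_t^2$ with $M(u)=M(t)+\sum_i\int_t^u Z'_i(s)dW_i^{o,\tau}(s)+\sum_i\int_t^u V'_i(s)d\Upsilon_i^*(s)$. Setting $Y(u):=M(u)+\int_t^u\tilde{f}^N(s,Z,V)ds$ yields a triple $(Y,Z',V')$ with $Y(T)=\zeta$ that satisfies the dynamics of BSDE~\eqref{tuncatedBSDE} with the driver evaluated at the input $(Z,V)$. Define $\Phi(Z,V):=(Z',V')$.

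The contraction property of $\Phi$ is the core of the argument. For two inputs $(Z^j,V^j)_{j=1,2}$ with images $(Y^j,Z'^j,V'^j)$, set $\delta Y=Y^1-Y^2$, $\delta Z=Z'^1-Z'^2$, $\delta V=V'^1-V'^2$, $\delta\bar Z=Z^1-Z^2$ and $\delta\bar V=V^1-V^2$. Apply It\^o's formula to $e^{\beta u}|\delta Y(u)|^2$ on $[t,T]$, use that $d[W_i^{o,\tau}]_u$ and the $\Px^*$-compensator of $|\Delta\Upsilon_i^*|^2$ both equal $(1-H_i(u))du$, and take $\Px^*$-expectation using $\delta Y(T)=0$. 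The $(1-H_i)$ prefactor inherent in $\tilde{f}^N$ visible in \eqref{eq:fNhN}, combined with Lemma~\ref{lem:LipdriverN}, gives the pointwise bound $|\tilde{f}^N(u,Z^1,V^1)-\tilde{f}^N(u,Z^2,V^2)|\leq L_N\sum_i(1-H_i(u))(|\delta\bar{Z}_i|+|\delta\bar{V}_i|)$, and Young's inequality together with the identity $(1-H_i)^2=(1-H_i)$ absorbs the $\delta Y$-term produced by $\beta$ on the left to yield $\|(\delta Z,\delta V)\|_\beta^2\leq(cL_N^2/\beta)\|(\delta\bar Z,\delta\bar V)\|_\beta^2$ for a universal constant $c$. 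Choosing $\beta$ large enough gives a strict contraction, and Banach's theorem delivers a unique fixed point $(\tilde{Z}^N,\tilde{V}^N)\in L_t^2\times L_t^2$ and hence a unique solution $\tilde{Y}^N$.

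Finally, to certify $\tilde{Y}^N\in{\cal S}_t^2$, rewrite the BSDE in the integral form $\tilde{Y}^N(u)=\zeta-\int_u^T\tilde{f}^N(s,\tilde{Z}^N,\tilde{V}^N)ds-\int_u^T\tilde{Z}^N dW^{o,\tau}-\int_u^T\tilde{V}^N d\Upsilon^*$, take $\sup_{u\in[t,T]}$, and apply the Burkholder-Davis-Gundy inequality to the two stochastic integrals together with $|\zeta|\leq C_0$, the linear growth inherited from the Lipschitz continuity of $\tilde{f}^N$, and $(\tilde{Z}^N,\tilde{V}^N)\in L_t^2\times L_t^2$. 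The only non-routine technical point I anticipate is the $L^2$-extension of Lemma~\ref{lem:martrepPstar} in the phasing-out filtration $\FxM$; all remaining estimates are standard for Lipschitz BSDEs with jumps once the weighted norm is organized around the factor $(1-H_i(u))$. The genuine difficulty of the paper will surface downstream, in the passage $N\to\infty$ required to recover the non-Lipschitz BSDE~\eqref{eq:regulizeBSDE}.
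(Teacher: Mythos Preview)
Your argument is correct, but it takes a different route from the paper. The paper does not carry out a fixed-point construction. Instead, it invokes Theorem~3.1 of \cite{CarboneFerrario2008} for BSDEs driven by general c\`adl\`ag martingales: that result directly produces a unique triple $(\tilde{Y}^N,\tilde{Z}^N,\tilde{V}^N)\in{\cal S}_t^2\times L_t^2\times L_t^2$ together with an additional square-integrable $(\Px^*,\FxM)$-martingale $U$ orthogonal to both $W^{o,\tau}$ and $\Upsilon^*$, appearing as an extra $dU$ term in the BSDE. The only work left is to show $U\equiv0$, and here the paper uses Lemma~\ref{lem:martrepPstar}: writing $U$ as a sum of stochastic integrals against $W^{o,\tau}$ and $\Upsilon^*$ and then computing $[U,U]$ via the orthogonality relations forces $[U,U]\equiv0$.

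Your approach trades the external citation for a self-contained contraction argument; it is more elementary and makes the role of the $(1-H_i)$ factor in both the driver and the $L_t^2$-norm fully explicit, at the cost of a few more lines. The paper's approach is shorter but imports the heavy lifting from \cite{CarboneFerrario2008}. One minor point worth noting: both routes require Lemma~\ref{lem:martrepPstar} for a general square-integrable $(\Px^*,\FxM)$-martingale rather than one with bounded jumps, and you are right that this extension is routine by $L^2$-density; the paper applies the representation to $U$ without commenting on this.
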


\begin{proof}
We can modify some arguments in Carbone, et a.~\cite{CarboneFerrario2008} to fit into our framework. By Lemma~\ref{lem:LipdriverN}, the driver $\tilde{f}^N$ of BSDE \eqref{tuncatedBSDE} is uniformly Lipschitz. Moreover, the predictable quadratic variation process of $K(s):=(W^{o,\tau}(s),\Upsilon^*(s))$ with $s\in[t,T]$ is given by
\begin{align*}
\langle K,K\rangle(s)=\int_0^sk(u)k(u)^\top du,
\end{align*}
where $k(u)=\text{diag}(1-H(u),1-H(u))\in\R^{2n\times 2n}$. Theorem 3.1 in \cite{CarboneFerrario2008} implies that there exist a unique $(\tilde{Y}^N,\tilde{Z}^N,\tilde{V}^N)\in{\cal S}_t^2\times L_t^2\times L_t^2$ and a square integrable $(\Px^*,\FxM)$-martingale $U=(U(u))_{u\in[t,T]}$ satisfying $[U,W^{o,\tau}_i](u)=[U,\Upsilon^*_i](u)=0$ for $u\in[t,T]$, $i=1,\ldots,n$, such that
\begin{align}\label{eq:tildeYZVU}
\tilde{Y}^N(T)-\tilde{Y}^N(s)&=\int_s^T\tilde{f}^N(u,\tilde{Z}^N(u),\tilde{V}^N(u))du+\int_s^T\tilde{Z}^N(u)^{\top}dW^{o,\tau}(u)\notag\\
&\quad+\int_s^T\tilde{V}^N(u)^{\top}d\Upsilon^*(u)+U(T)-U(s),\quad s\in[t,T),
\end{align}
with $\tilde{Y}^N(T)=\int_t^Tf^N(u,0,0)du$. By the martingale representation result in Lemma~\ref{lem:martrepPstar}, there exist $\alpha\in L_t^2$ and $\beta\in L_t^2$ such that, for $s\in[t,T]$,
\begin{align}\label{}
U(s)=U(t)+\sum_{i=1}^n\int_t^s\alpha_i(u)dW_i^{o,\tau}(u)+\sum_{i=1}^n\int_t^s\beta_i(u)d\Upsilon_i^*(u).
\end{align}
A direct calculation yields that, for $s\in[t,T]$,
\begin{align*}
[U,U](s)&=\sum_{i=1}^n\int_t^s\alpha_i(u)d[U,W_i^{o,\tau}](u)+\sum_{i=1}^n\int_t^s\beta_i(u)d[U,\Upsilon^*_i](u)=0.
\end{align*}
This gives that $U(T)-U(s)=0$ for all $s\in[t,T]$, and it follows from \eqref{eq:tildeYZVU} that $(\tilde{Y}^N,\tilde{Z}^N,\tilde{V}^N)\in{\cal S}_t^2\times L_t^2\times L_t^2$ is the unique solution of BSDE~\eqref{tuncatedBSDE}.
\end{proof}

\subsection{A priori estimates and comparison result of truncated solutions}

In this section, we establish a priori estimates and a comparison result of the solution to the truncated BSDE \eqref{tuncatedBSDE} under the assumption ({\bf H}).

We start with a simple estimation depending on $N$.
\begin{lemma}\label{lem:prioriestimate1}
For any $N\geq1$, let $(\tilde{Y}^N,\tilde{Z}^N,\tilde{V}^N)\in{\cal S}_t^2\times L_t^2\times L_t^2$ be the solution of \eqref{tuncatedBSDE}. There exists a constant $R_{T,N}>0$, which depends on $N$ and the bound of $|\zeta|$, such that
\begin{align}\label{eq:Sinfty}
\|\tilde{Y}^N\|_{t,\infty}\leq R_{T,N},\quad \tilde{V}^N(u)\leq R_{T,N}, \quad d\Px^*\otimes du\text{-a.e.}
\end{align}
\end{lemma}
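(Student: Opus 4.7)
The plan is to establish the $L^\infty$ bound on $\tilde Y^N$ by It\^o's formula combined with a Gronwall-type argument, and then deduce the bound on $\tilde V^N$ from the single-jump nature of each $H_i$. The three ingredients I will rely on are: $\tilde f^N(u,0,0)=0$ by construction; the uniform Lipschitz continuity of $\tilde f^N$ in $(\xi,v)$ with constant $L_N$ from Lemma~\ref{lem:LipdriverN}; and the uniform bound on the random terminal $\zeta$ supplied by Lemma~\ref{lem:boundedzeta}.

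For the first step, I would apply It\^o's formula to $|\tilde Y^N(u)|^2$ on $[s,T]$ and take the $\F^{\rm M}_s$-conditional expectation under $\Px^*$. Since $(\tilde Y^N,\tilde Z^N,\tilde V^N)\in{\cal S}^2_t\times L^2_t\times L^2_t$, the Brownian and compensated jump integrals are genuine martingales under $\Px^*$, and the quadratic variation contributes $|\tilde Z^N|^2\,du$ together with the compensator $\sum_i (1-H_i(u-))|\tilde V^N_i|^2\,du$ of $\sum_i|\tilde V^N_i|^2\,dH_i$. A key structural observation from \eqref{eq:fNhN}--\eqref{eq:defh2} is that $\tilde f^N$ depends on $\tilde V^N_i$ and $\tilde Z^N_i$ only through the pre-default factor $(1-H_i(u-))$, so the Lipschitz bound reads $|\tilde f^N(u,\tilde Z^N,\tilde V^N)|\leq L_N(|\tilde Z^N|+(\sum_i (1-H_i(u-))|\tilde V^N_i|^2)^{1/2})$; Young's inequality then absorbs the $\tilde Z^N$ and $\tilde V^N$ quadratic cross-terms on the right into the corresponding quadratic-variation terms on the left. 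What remains is
\begin{equation*}
|\tilde Y^N(s)|^2\;\leq\;\|\zeta\|_\infty^2+C_N\int_s^T\Ex^*\!\bigl[\,|\tilde Y^N(u)|^2\,\big|\,\F^{\rm M}_s\bigr]\,du,
\end{equation*}
with $C_N$ depending only on $L_N$ and $n$. Defining $\phi(u):=\esssup_{\omega}|\tilde Y^N(u,\omega)|^2$, taking essential supremum in $\omega$ on both sides (legitimate since $\Ex^*[|\tilde Y^N(u)|^2\mid\F^{\rm M}_s]\leq\phi(u)$ a.s.), and applying backward Gronwall, I obtain $\phi(s)\leq\|\zeta\|_\infty^2 e^{C_N(T-s)}$ and hence $\|\tilde Y^N\|_{t,\infty}\leq R_{T,N}^{(1)}:=\|\zeta\|_\infty e^{C_N T/2}$.

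For the second step, I use that each indicator $H_i$ jumps at most once, at $\tau_i$, with $\Delta\Upsilon^*_i(\tau_i)=\Delta H_i(\tau_i)=1$. From \eqref{tuncatedBSDE} the only jumps of $\tilde Y^N$ come from $\tilde V^N\cdot d\Upsilon^*$, hence $\Delta\tilde Y^N(\tau_i)=\tilde V^N_i(\tau_i)$ on $\{\tau_i\leq T\}$ and so $|\tilde V^N_i(\tau_i)|\leq 2R_{T,N}^{(1)}$. On $\{u>\tau_i\}$ the compensated process $\Upsilon^*_i$ is constant, so $\tilde V^N_i(u)$ may be chosen to vanish there without altering the solution; on $\{u\leq\tau_i\}$ the $\F^{\rm M}$-predictability of $\tilde V^N_i$ together with the uniqueness part of the martingale representation in Lemma~\ref{lem:martrepPstar} identifies $\tilde V^N_i(u)$ with a predictable version of the jump $\Delta\tilde Y^N$ at $u$, which is bounded by the same constant. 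Setting $R_{T,N}:=2R_{T,N}^{(1)}$ concludes.

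The main obstacle I anticipate is this last identification: while the bound at the random jump time is immediate, upgrading to an $d\Px^*\otimes du$-almost-everywhere bound requires selecting a canonical $\F^{\rm M}$-predictable representative of $\tilde V^N_i$ supported on $\{u\leq\tau_i\}$ and arguing via uniqueness in the martingale representation. If this route proves cumbersome, a safe alternative is to linearize $\tilde f^N$ as $A^N\cdot\tilde Z^N+B^N\cdot\tilde V^N$ with bounded predictable coefficients (using $\tilde f^N(u,0,0)=0$) and perform a Girsanov-type change of measure, localizing at stopping times to keep the Dol\'eans--Dade exponential strictly positive in case $1+B^N_i$ fails to remain so.
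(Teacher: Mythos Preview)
Your first step is correct and matches the paper: It\^o on $|\tilde Y^N|^2$, the Lipschitz bound on $\tilde f^N$, Young's inequality to absorb the $|\tilde Z^N|^2$ and $(1-H)|\tilde V^N|^2$ terms, then Gronwall. The paper merely replaces your explicit backward Gronwall by the equivalent weighted form $e^{\beta u}|\tilde Y^N(u)|^2$ with $\beta$ chosen large, but the content is identical.

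The gap is in the second step. The inequality $|\tilde V^N_i(\tau_i)|\le 2\|\tilde Y^N\|_{t,\infty}$ holds only at the random time $\tau_i$; the uniqueness clause in Lemma~\ref{lem:martrepPstar} tells you the integrand is unique $d\Px^*\otimes du$-a.e., not that it coincides a.e.\ with any particular bounded function. Your phrase ``a predictable version of the jump $\Delta\tilde Y^N$ at $u$'' has no clear meaning away from the actual jump times. The Girsanov alternative you sketch faces the positivity obstruction you already flag and, even if repaired, would re-derive the $\tilde Y^N$ bound rather than bound $\tilde V^N$.

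The paper closes this cleanly. Set $\hat V^N_i:=\bigl(\tilde V^N_i\wedge 2\|\tilde Y^N\|_{t,\infty}\bigr)\vee\bigl(-2\|\tilde Y^N\|_{t,\infty}\bigr)$. Since no two $H_i$ jump together and each jump of $\tilde Y^N$ equals $\tilde V^N_i(\tau_i)$ with $|\tilde V^N_i(\tau_i)|\le 2\|\tilde Y^N\|_{t,\infty}$, one has $\tilde V^N(u)^\top\Delta\Upsilon^*(u)=\hat V^N(u)^\top\Delta\Upsilon^*(u)$ for all $u$; hence $(\tilde V^N-\hat V^N)\cdot\Upsilon^*$ is a \emph{continuous} local martingale of finite variation and therefore identically zero. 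Its bracket $\sum_i\int(1-H_i)|\tilde V^N_i-\hat V^N_i|^2\,du$ then vanishes, giving $(1-H)\tilde V^N=(1-H)\hat V^N$ $d\Px^*\otimes du$-a.e. Because the driver depends on $v$ only through $(1-H)v$, the triple $(\tilde Y^N,\tilde Z^N,\hat V^N)$ also solves the truncated BSDE, and now the uniqueness in Lemma~\ref{lem:existYN}---BSDE uniqueness, not martingale-representation uniqueness---forces $\tilde V^N=\hat V^N$ $d\Px^*\otimes du$-a.e., which is the desired bound.
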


\begin{proof} By applying It\^{o}'s formula to $e^{\beta u}\left|\tilde{Y}^N(u)\right|^2$ with a constant $\beta$ to be determined,  we get that, for any $u\in[t,T]$,
\begin{align}\label{eq:eYN2}
&e^{\beta T}\zeta-e^{\beta u}\left|\tilde{Y}^N(u)\right|^2\notag\\
=&\int_u^T\beta e^{\beta s}\left|\tilde{Y}^N(s)\right|^2ds+2\int_u^Te^{\beta s}\tilde{Y}^N(s)\tilde{f}^N(s,\tilde{Z}^N(s),\tilde{V}^N(s))ds\notag\\
&+2\int_u^Te^{\beta s}\tilde{Y}^N(s)\tilde{Z}^N(s)^\top dW^{o,\tau}(s)-2\sum_{i=1}^n\int_u^{T\wedge\tau_i^u}e^{\beta u}\tilde{Y}^N(s)\tilde{V}^N_i(s)ds\\
&+\sum_{i=1}^n\int_u^Te^{\beta s}\left(|\tilde{Y}^N(s)+\tilde{V}^N_i(s)|^2-|\tilde{Y}^N(s)|^2\right)dH_i(s)+\sum_{i=1}^n\int_u^{T\wedge\tau_i^u}e^{\beta s}\left|\tilde{Z}^N_i(s)\right|^2ds.\notag
\end{align}
Rearranging terms on both sides of \eqref{eq:eYN2}, we can get that
\begin{align}\label{eq:ebeaY3}
&e^{\beta u}\left|\tilde{Y}^N(u)\right|^2+\int_u^T\beta e^{\beta s}\left|\tilde{Y}^N(s)\right|^2ds
+\sum_{i=1}^n\int_u^{T\wedge\tau_i^u}e^{\beta s}\left|\tilde{Z}^N_i(s)\right|^2ds\notag\\
=&e^{\beta T}\zeta-2\int_u^Te^{\beta s}\tilde{Y}^N(s)\tilde{f}^N(s,\tilde{Z}^N(s),\tilde{V}^N(s))ds-2\int_u^Te^{\beta s}\tilde{Y}^N(s)\tilde{Z}^N(s)^\top dW^{o,\tau}(s)\nonumber\\
&-\sum_{i=1}^n\int_u^Te^{\beta s}\left(2\tilde{Y}^N(s)\tilde{V}^N_i(s)+|\tilde{V}_i(s)|^2\right)d\Upsilon^*_i(s)-\sum_{i=1}^n\int_u^{T\wedge\tau_i^u}e^{\beta s}\left|\tilde{V}_i^N(s)\right|^2ds.
\end{align}
Taking into account \eqref{eq:defh} and \eqref{eq:fNhN}, we have that the random driver $\tilde{f}^N(u,\xi,v)$ satisfies that $\tilde{f}^N(u,\xi,v)=\tilde{f}^N(u,(1-H(u))\xi,(1-H(u))v)$. By Lemma~\ref{lem:LipdriverN}, there exists a constant $L_N>0$ depending only on $N$ such that, for all $\epsilon>0$,
\begin{align}\label{eq:ebeaY4}
&\left|2\int_u^Te^{\beta s}\tilde{Y}^N(s)\tilde{f}^N(s,\tilde{Z}^N(s),\tilde{V}^N(s))ds\right|\nonumber\\
&\qquad\leq2L_N\sum_{i=1}^n\int_u^{T\wedge\tau_i^u}e^{\beta s}\left|\tilde{Y}^N(s)\right|\left(|\tilde{Z}^N_i(s)|+|\tilde{V}^N_i(s)|\right)ds\nonumber\\
&\qquad\leq n\epsilon^{-1}L_N\int_u^Te^{\beta s}\left|\tilde{Y}^N(s)\right|^2ds+2\epsilon L_N\sum_{i=1}^n\int_u^{T\wedge\tau_i^u}e^{\beta s}\left(|\tilde{Z}^N_i(s)|^2+|\tilde{V}^N_i(s)|^2\right)ds.
\end{align}
By taking $\epsilon=(4L_N)^{-1}$ and $\beta=n\epsilon^{-1}L_N$, we obtain from \eqref{eq:ebeaY3} and \eqref{eq:ebeaY4} that $e^{\beta u}|\tilde{Y}^N(u)|^2\leq \Ex[e^{\beta T}|\zeta|^2|\mathcal{F}^{\rm M}_u]$, a.s. for $u\in[t,T]$. Thanks to Lemma~\ref{lem:boundedzeta}, it follows that $\|\tilde{Y}^N\|_{t,\infty}\leq e^{\beta T}\|\zeta\|_{0,\infty}$, which proves the first term in \eqref{eq:Sinfty}.

On the other hand, in view of $\Delta \tilde{Y}^N(u)=\tilde{V}^N(u)^\top\Delta \Upsilon^*(u)$, we obtain $|\tilde{V}^N(u)^\top\Delta \Upsilon^*(u)|\leq 2\|\tilde{Y}^N\|_{t,\infty}$. The fact that $\Delta \Upsilon^*_i(u)\in\{0,1\}$ for all $i=1,\ldots,n$ leads to that $\tilde{V}^N(u)^\top\Delta \Upsilon^*(u)=\hat{V}^N(u)^\top\Delta \Upsilon^*(u)$. For $i=1,\ldots,n$, let us define
\begin{align}\label{boundVN}
\hat{V}_i^N(u):=\tilde{V}_i^N(u)\wedge(2\|\tilde{Y}^N\|_{t,\infty})\vee(-2\|\tilde{Y}^N\|_{t,\infty}).
\end{align}
Thus, the stochastic integral $(\tilde{V}^N-\hat{V}^N)\cdot \Upsilon^*$ is a continuous martingale of finite variation, which implies that $(\tilde{V}^N-\hat{V}^N)\cdot \Upsilon^*\equiv0$. Therefore, it follows from $[(\tilde{V}^N-\hat{V}^N)\cdot \Upsilon^*]\equiv0$ that
\begin{align}\label{eq:tildeVN}
(1-H(u))\tilde{V}^N(u)=(1-H(u))\hat{V}^N(u),\ \ d\Px^*\otimes du\text{-a.e.}
\end{align}
Here, for any $\alpha\in\R^n$, $(1-H(u))\alpha:=((1-H_1(u))\alpha_1,\ldots,(1-H_n(u))\alpha_n)^{\top}$. Therefore, $(\tilde{Y}^N,\tilde{Z}^N,\hat{V}^N)$ also solves the BSDE \eqref{tuncatedBSDE} in view of \eqref{eq:tildeVN}. As $\hat{V}^N\in L_t^2$, the uniqueness of solution in Lemma \ref{lem:existYN} entails that $\tilde{V}^N(u)=\hat{V}^N(u)$, $d\Px^*\otimes du$-a.e., which completes the proof of \eqref{eq:Sinfty}.
\end{proof}

The next result improves the estimation by establishing a uniform bound of $(\tilde{Y}^N,\tilde{Z}^N,\tilde{V}^N)_{N\geq1}$, which is independent of $N$. In particular, the BMO property plays an important role in the proof of the verification theorem.
\begin{lemma}\label{lem:prioriestimate2}
For any $N\geq1$, let $(\tilde{Y}^N,\tilde{Z}^N,\tilde{V}^N)\in{\cal S}_t^2\times L_t^2\times L_t^2$ be the solution of \eqref{tuncatedBSDE}. There exists some constant $C_T>0$, which only depends on the bound of $|\zeta|$ defined by \eqref{def-zeta}, such that
\begin{align}\label{eq:estiBMO}
\max\left\{\big\|\tilde{Z}^N\big\|_{t,{\rm BMO}},\ \|\tilde{Y}^N\|_{t,\infty}\right\}\leq C_T,\quad \tilde{V}^N(u)\leq C_T, \quad d\Px^*\otimes du\text{-a.e.}
\end{align}
\end{lemma}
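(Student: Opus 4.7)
The plan is to obtain the three uniform-in-$N$ bounds in sequence: first the $L^{\infty}$ bound on $\tilde Y^N$, then the bound on $\tilde V^N$ from the BSDE's jump relation, and finally the BMO bound on $\tilde Z^N$ via It\^o's formula applied to $|\tilde Y^N|^2$. The crucial structural observation is that, although the Lipschitz constant $L_N$ of $\tilde f^N$ blows up with $N$ (Lemma \ref{lem:LipdriverN}), the driver has a favorable sign structure: the contribution $-\frac12|\frac\theta2\sigma_i\pi_i-\xi_i|^2\rho_N(\xi_i)$ in $h_i^N$ produces, after optimization over $\pi_i$, a \emph{non-positive} quadratic-in-$\tilde Z^N$ term that can be absorbed by the It\^o correction of an exponential transformation, so that only the genuinely linear pieces, whose coefficients are $N$-free by Assumption (\textbf{H}) and the bound $R_{N,1}$ from \eqref{eq:hvbound1} re-examined to depend only on $\tilde V^N$ (uniformly bounded a posteriori) and on the model parameters, need to be handled by Girsanov.

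For the $L^{\infty}$ bound on $\tilde Y^N$, I would write the driver in a linearized form
\begin{align*}
\tilde f^N(u,\tilde Z^N(u),\tilde V^N(u))=A^N(u)^\top\tilde Z^N(u)+B^N(u)^\top\tilde V^N(u)-Q^N(u),
\end{align*}
where $A^N,B^N$ are $\FxM$-predictable processes bounded uniformly in $N$, and $Q^N\geq 0$ is quadratic in $\tilde Z^N$ (inherited from $-\tfrac12|\tfrac\theta2\sigma_i\pi_i^*-\xi_i|^2\rho_N$). Apply It\^o to $e^{\alpha\tilde Y^N(u)}$ for $\alpha$ large enough that the It\^o correction $\tfrac{\alpha^2}{2}e^{\alpha\tilde Y^N}|(1-H)\tilde Z^N|^2$ dominates $\alpha e^{\alpha\tilde Y^N}Q^N$; then perform the Girsanov change of measure $d\Qx^N/d\Px^*=\mathcal{E}(A^N\cdot W^{o,\tau}+B^N\cdot\Upsilon^*)_T$, which removes the linear pieces. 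Taking conditional expectation under $\Qx^N$ and using $|\tilde Y^N(T)|\leq\|\zeta\|_{0,\infty}$ from Lemma \ref{lem:boundedzeta} yields $\|\tilde Y^N\|_{t,\infty}\leq C_T$ with $C_T$ depending only on $\|\zeta\|_{0,\infty}$ and the model constants.

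The bound on $\tilde V^N$ then follows from the jump identity $\Delta\tilde Y^N(u)=\tilde V^N(u)^\top\Delta\Upsilon^*(u)$ with $\Delta\Upsilon_i^*(u)\in\{0,1\}$, by the same argument as at the end of the proof of Lemma \ref{lem:prioriestimate1}, delivering $|\tilde V_i^N(u)|\leq 2\|\tilde Y^N\|_{t,\infty}\leq 2C_T$ uniformly in $N$. For the BMO estimate on $\tilde Z^N$, I would apply It\^o to $|\tilde Y^N(u)|^2$ on $[\vartheta,T]$ for an arbitrary $\FxM$-stopping time $\vartheta\in\mathcal T_{[t,T]}$, take $\Ex^*[\,\cdot\,|\F^{\rm M}_\vartheta]$, and rearrange to isolate the term $\sum_{i=1}^n\Ex^*[\int_\vartheta^{T\wedge\tau_i^\vartheta}|\tilde Z_i^N(u)|^2\,du\,|\F^{\rm M}_\vartheta]$. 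Young's inequality, the now-uniform $L^{\infty}$ bounds on $\tilde Y^N$ and $\tilde V^N$, and the uniform boundedness of $A^N,B^N$ control all other terms by $C_T$; taking the supremum over $\vartheta$ yields $\|\tilde Z^N\|_{t,\mathrm{BMO}}\leq C_T$.

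The main obstacle is the first step: justifying a linearization with coefficients $A^N,B^N$ bounded \emph{uniformly in $N$}, since the Lipschitz constant $L_N$ does diverge. The key observation is that in the derivative bounds \eqref{eq:derivibound}--\eqref{eq:derixiibound}, the auxiliary constant $R_N$ appears only through the optimal control $\pi^*$; once we have (a posteriori, via a bootstrap) the uniform bound on $\tilde V^N$, the relevant domain for $\pi^*(\tilde Z^N,\tilde V^N)$ shrinks to an $N$-independent region in which Assumption (\textbf{H}) alone yields uniform control on $A^N,B^N$. Closing this bootstrap, rather than any single classical a priori estimate, is the technical heart of the argument; the rest proceeds by classical BSDE techniques adapted to the jump setting.
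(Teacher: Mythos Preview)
Your proposal has two genuine gaps.

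\textbf{The bootstrap for the $L^\infty$ bound is circular and unnecessary.} You correctly identify that you cannot get $A^N,B^N$ uniformly bounded in $N$ without first knowing a uniform bound on $\tilde V^N$, which in turn requires a uniform bound on $\tilde Y^N$---and that is exactly what you are trying to prove. You never explain how to break this circle. The paper avoids the bootstrap entirely by a sharper observation: since $\tilde f^N(u,0,0)=0$, one can fully linearize the driver by finite differences,
\[
\tilde f^N(u,\tilde Z^N,\tilde V^N)=\gamma^N(u)^\top\tilde Z^N(u)+\eta^N(u)^\top\hat V^N(u),
\]
with $\gamma^N,\eta^N$ bounded only by the $N$-dependent Lipschitz constant $L_N$. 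One then checks $-\eta^N_i(u)>-1+\delta_{T,N}$ (this is where Lemma~\ref{lem:prioriestimate1} is used, to know $\hat V^N$ is bounded, possibly $N$-dependently), so the Girsanov density is well-defined and yields an equivalent measure $\Qx=\Qx_N$. Under $\Qx_N$, the BSDE reads $\tilde Y^N(u)=\Ex^{\Qx_N}[\zeta\mid\F_u^{\rm M}]$. The crucial point you missed is that the bound $\|\zeta\|_{0,\infty}$ is $N$-independent, and a conditional expectation of a bounded random variable is bounded by the same constant \emph{regardless of which equivalent measure is used}. Hence $\|\tilde Y^N\|_{t,\infty}\le\|\zeta\|_{0,\infty}$ uniformly in $N$, with no bootstrap needed.

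\textbf{The BMO estimate via $|\tilde Y^N|^2$ fails.} Once the uniform bounds on $\tilde Y^N$ and $\tilde V^N$ are in hand, the driver satisfies $|\tilde f^N(u,\tilde Z^N,\tilde V^N)|\le R_4+R_5\sum_i(1-H_i)|\tilde Z_i^N|^2$ with $R_4,R_5$ independent of $N$. Applying It\^o to $|\tilde Y^N|^2$ on $[\vartheta,T]$ and taking $\Ex^*[\,\cdot\mid\F_\vartheta^{\rm M}]$ leaves you with
\[
\Ex^*\!\Big[\textstyle\sum_i\int_\vartheta^{T\wedge\tau_i^\vartheta}|\tilde Z_i^N|^2\,ds\,\Big|\,\F_\vartheta^{\rm M}\Big]\le C+2\|\tilde Y^N\|_{t,\infty}R_5\,\Ex^*\!\Big[\textstyle\sum_i\int_\vartheta^{T\wedge\tau_i^\vartheta}|\tilde Z_i^N|^2\,ds\,\Big|\,\F_\vartheta^{\rm M}\Big],
\]
which only closes if $2\|\zeta\|_{0,\infty}R_5<1$; there is no reason this holds. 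The paper instead applies It\^o to $e^{\beta\tilde Y^N}$: the It\^o correction contributes $\tfrac{\beta^2}{2}e^{\beta\tilde Y^N}|\tilde Z^N|^2$ while the driver contributes at most $\beta R_5 e^{\beta\tilde Y^N}|\tilde Z^N|^2$, and choosing $\beta>2R_5$ makes the net coefficient positive. This free parameter $\beta$ is precisely what your squared approach lacks.
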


\begin{proof} The key step of the proof is to construct an equivalent probability measure under which $\tilde{Y}^N=(\tilde{Y}^N(t))_{t\in[0,T]}$ is an $\Fx^{\rm M}$-martingale. By Lemma~\ref{lem:boundedzeta}, the boundedness property of $\tilde{Y}^N$ follows by the martingale property of $\tilde{Y}^N=(\tilde{Y}^N(t))_{t\in[0,T]}$ under the new probability measure and the fact that $\tilde{Y}^N(T)=\zeta$ is bounded. It follows from Lemma \ref{lem:prioriestimate1} that, there exists an $\Fx^{\rm M}$-predictable $\R^n$-valued (bounded) process $\hat{V}^N$ defined in \eqref{boundVN} such that $\Px^*\otimes du$-a.e., $(1-H(u-))\tilde{V}^N(u)=(1-H(u-))\hat{V}^N(u)$.

To construct the aforementioned equivalent probability measure, for $i=1,\ldots,n$, let us define
\begin{align*}
\tilde{Z}^{N,i}(u):=(\tilde{Z}^N_1(u),\ldots,\tilde{Z}^N_i(u),0,\ldots,0),\quad \hat{V}^{N,i}(u)&=(\hat{V}^N_1(u),\ldots,\hat{V}^N_i(u),0,\ldots,0).
\end{align*}
We also set $\tilde{Z}^{N,0}(u)=\hat{V}^{N,0}(u)=0$. Consider the following processes that
\begin{align}\label{eq:gammaetai}
\gamma_i(u):=\left\{\begin{array}{cl}
\frac{\tilde{f}^N(u,\tilde{Z}^{N,i}(u),\tilde{V}^N(u))-\tilde{f}^N(u,\tilde{Z}^{N,i-1}(u),\tilde{V}^N(u))}{\tilde{Z}^N_i(u)}, & {\rm if}\ (1-H_i(u-))\tilde{Z}^N_i(u)\neq0; \\[0.4em]
0, & {\rm if}\ (1-H_i(u-))\tilde{Z}^N_i(u)=0,
\end{array}\right.
\end{align}
and
\begin{align}\label{eq:gammaetai2}
\eta_i(u):=\left\{\begin{array}{cl}
\frac{\tilde{f}^N(u,0,\hat{V}^{N,i}(u))-\tilde{f}^N(u,0,\hat{V}^{N,i-1}(u))}{\hat{V}^N_i(u)}, & {\rm if}\ (1-H_i(u-))\hat{V}^N_i(u)\neq0; \\[0.4em]
0, & {\rm if}\ (1-H_i(u-))\hat{V}^N_i(u)=0,
\end{array}\right.
\end{align}
for $i=1,\ldots,n$. Note that $\tilde{f}^N(u,0,0)=0$. Then, for $t\in[0,T]$, we have that, $d\Px^*\otimes du$-a.e.
\begin{align}\label{eq:equiZgaMeta}
\int_t^T\tilde{Z}^N(u)^\top\gamma(u)du+\int_t^T\hat{V}^N(u)^\top\eta(u)du=\int_t^T\tilde{f}^N(u,\tilde{Z}^N(u),\hat{V}^N(u))du.
\end{align}
On the other hand, Lemma~\ref{lem:boundedzeta} yields that the $\R^n$-valued process $\gamma=(\gamma(t))_{t\in[0,T]}$ is bounded. Moreover, Lemma \ref{lem:prioriestimate1} states that the $\Fx^{\rm M}$-predictable $\R^n$-valued process $\hat{V}^N$ is bounded by some constant $C_{T,N}>0$ depending on $T$ and $N$. We next prove that there exists some positive constant $\delta_{T,N}$ depending on $N$ such that
\begin{align}\label{eq:novi}
-1+\delta_{T,N}\leq-\eta_i(u)\leq L_N,\quad \text{a.e.},~~ i=1,\ldots,n,
\end{align}
where $L_N>0$ is the Lipchitiz coefficient of the driver $\tilde{f}^N$ (see Lemma~\ref{lem:LipdriverN}).
In fact, if $H_i(u-)=1$, then $\eta_i(u)=0$. It suffices to assume that $H_i(u-)=0$. For $\tilde{V}^N_i(u)\neq0$, we have from \eqref{eq:truncatedhatrhooverthon} that
\begin{align*}
&\frac{\tilde{f}^N(u,0,\hat{V}^{N,i}(u))-\tilde{f}^N(u,0,\hat{V}^{N,i-1}(u))}{\hat{V}^N_i(u)}\nonumber\\
&\qquad=\int_0^1\frac{\partial}{\partial v_i}\tilde{f}^N(u,0,s\hat{V}^{N,i}(u)+(1-s)\hat{V}^{N,i-1}(u))ds\notag\\
&\qquad=1-\int_0^1(1-{\pi^*_i}(u))^{-\frac\theta2}\hat{\rho}_N(e^{s\hat{V}_i^{N}(u)})\frac{e^{s\hat{V}_i^{N}(u)}
\hat{\rho}'_N(e^{u\hat{V}_i^{N}(u)})}{\hat{\rho}_N(e^{s\hat{V}_i^{N}(u)})}ds\notag\\
&\qquad\leq1-(1+R_N)^{-\frac\theta2}\int_0^{1\wedge R_{T,N}^{-1}\ln{N}}\hat{\rho}_N(e^{s\hat{V}_i^{N}(u)})\frac{e^{s\hat{V}_i^{N}(u)}\hat{\rho}'_N(e^{s\hat{V}_i^{N}(u)})}{\hat{\rho}_N(e^{s\hat{V}_i^{N}(u)})}ds\notag\\
&\qquad=1-(1+R_N)^{-\frac\theta2}\int_0^{1\wedge R_{T,N}^{-1}\ln{N}}\hat{\rho}_N(e^{s\hat{V}_i^{N}(u)})ds\notag\\
&\qquad\leq1-\frac{(1+R_N)^{-\frac\theta2}}{R_{T,N}}\left\{1-e^{-(R_{T,N}\wedge \ln{N})}\right\}=:1-\delta_{T,N}.
\end{align*}
Here, the positive constants $R_N$ and $R_{T,N}$ are given in Lemma \ref{lem:hNRN} and Lemma \ref{lem:prioriestimate1} respectively.

We next define the probability measure $\Qx\sim\Px^*$ by
\begin{align}\label{eq:Q}
\frac{d\Qx}{d\Px^*}\Big|_{\F_s^{\rm M}}={\cal E}\left(-\int_0^\cdot\gamma(u)^\top dW^{o,\tau}(u)-\int_0^\cdot\eta(u)^\top d\Upsilon^*(u)\right)_s.
\end{align}
In view of \eqref{eq:novi} and the boundedness of $\gamma=(\gamma(s))_{s\in[0,T]}$, we have that $\hat{W}^{o,\tau}=(\hat{W}^{o,\tau}(s))_{s\in[0,T]}$ and $\hat{\Upsilon}^*=(\hat{\Upsilon}^*(s))_{s\in[0,T]}$ are both $(\Qx,\Fx^{\rm M})$-martingales, where we define
\begin{align}\label{eq:WhatM}
\hat{W}^{o,\tau}(s):=W^{o,\tau}(s)+\int_0^s\gamma(u)du,\quad \hat{\Upsilon}^*(s):=\Upsilon^*(s)+\int_0^s\eta(u)du,\quad s\in[0,T].
\end{align}
It follows from \eqref{tuncatedBSDE} and \eqref{eq:equiZgaMeta} that, for $u\in[t,T]$,
\begin{align}\label{eq:underQ}
\tilde{Y}^N(u)-\tilde{Y}^N(T)=-\int_u^T\tilde{Z}^N(s)^\top d\hat{W}^{o,\tau}(s)-\int_u^T \tilde{V}^N(s)^\top d\hat{\Upsilon}^*(s),~~\text{$\Qx$-a.e.}
\end{align}
Let $\theta_k^t\geq t$ be a localizing sequence as $\Fx^{\rm M}$ stopping times satisfying $\lim_{k\to\infty}\theta_k^t=T$, a.e. By \eqref{eq:underQ}, it holds that $\tilde{Y}^N(u)=\Ex^{\Qx}[\tilde{Y}^N(T\wedge\tau_k)\big|\F^{\rm M}_u]$ for all $k\geq1$. Lemma~\ref{lem:prioriestimate1} and Bounded Convergence Theorem lead to that $\tilde{Y}^N(u)=\Ex^{\Qx}[\zeta|\F^{\rm M}_u]$ for all $u\in[t,T]$. This, together with Lemma~\ref{lem:boundedzeta}, implies the uniform bound of $\tilde{Y}^N$, i.e., $\|\tilde{Y}^N\|_{t,\infty}\leq\|\zeta\|_{0,\infty}$.

We again construct $\hat{V}^N(u)$ as in \eqref{boundVN}, which gives that $|\hat{V}^N(u)|\leq2\|\tilde{Y}^N\|_{t,\infty}$. We consequently have that $\|\hat{V}^N\|_{t,\infty}\leq2\|\zeta\|_{0,\infty}$ by the argument above. Following the same proof of Lemma \ref{lem:prioriestimate1}, the uniqueness of the solution to BSDE \eqref{tuncatedBSDE} entails the second estimation in \eqref{eq:estiBMO}.

We next apply It\^{o}'s formula to $e^{\beta\tilde{Y}^N(u)}$ on $u\in[t,T]$, where $\beta$ is a constant to be determined, and get that
\begin{align}\label{eq:itoexpYN}
&e^{\beta \zeta}-e^{\beta \tilde{Y}^N(u)}\nonumber\\
&\quad=\sum_{i=1}^n\int_u^T\{e^{\beta(\tilde{Y}^N(s-)+\hat{V}^N_i(s))}-e^{\beta\tilde{Y}^N(s-)}\}dH_i(s)
-\sum_{i=1}^n\int_u^{T\wedge\tau_i^u}\beta e^{\beta\tilde{Y}^N(s-)}\hat{V}^N_i(s)ds\notag\\
&\qquad+\int_u^T\beta e^{\beta\tilde{Y}^N(s)}\tilde{f}^N(s,\tilde{Z}^N(s),\hat{V}^N(s))ds+\int_u^T\beta e^{\beta\tilde{Y}^N(s)}\tilde{Z}^N(s)^\top dW^{o,\tau}(s)\notag\\
&\qquad+\frac{\beta^2}{2}\sum_{i=1}^n\int_u^{T\wedge\tau_i^u}e^{\beta\tilde{Y}^N(s)}\left|\tilde{Z}^N_i(s)\right|^2ds.
\end{align}
Note that $\|(1-H)\hat{V}^N\|_{t,\infty}\leq 2\|\zeta\|_{0,\infty}$. Then, for all $N\geq1$ and $s\in[0,T]$, we claim here that there exist positive constants $R_4$ and $R_5$ independent of $(N,s)$ such that
\begin{align}\label{eq:fquadratic}
\left|\tilde{f}^N\big(s,Z^N(s),\hat{V}^N(s))\right|\leq R_4+R_5\sum_{i=1}^n(1-H_i(s))\left|\tilde{Z}^N_i(s)\right|^2.
\end{align}
To see this, note that the following estimates are independent of $N$:
\begin{align*}
  -\left|\frac\theta2\sigma_i\pi_i\right|^2-\xi^2_i\leq-\frac12\left|\frac\theta2\sigma_i\pi_i-\xi_i\right|^2\rho_N(\xi_i)\leq0,
\end{align*}
and
\begin{align*}
  0&\geq-\lambda_i^{\rm M}(p,z)(1-\pi_i)^{-\frac\theta2}\hat{\rho}_N(e^{v_i})\geq-\lambda_i^{\rm M}(p,z)(1-\pi_i)^{-\frac\theta2}e^{v_i}\\
  &\geq -\frac{\big|\lambda_i^{\rm M}(p,z)\big|^2(1-\pi_i)^{-\theta}+e^{2v_i}}2.
\end{align*}
It then follows that
\begin{align}\label{ineq-1}
 -\xi^2_i-\frac12e^{2v_i}+h^{(1)}_i(\pi_i;p,z)\leq h^N_i(\pi_i;p,z,\xi_i,v_i)\leq h^{(2)}_i(\pi_i;p,z),
\end{align}
where the lower and upper bound functions are given by
\begin{align*}
  h^{(1)}_i(\pi_i;p,z)&:=-\frac\theta2\sigma_i^2\pi_i^2+\frac\theta2\left(\mu_i^{\rm M}(p)+\lambda^{\rm M}_i(p,z)-r\right)\pi_i+\lambda_i^{\rm M}(p,z)\nonumber\\
  &\quad-\frac12\left|\lambda_i^{\rm M}(p,z)\right|^2(1-\pi_i)^{-\theta},\\
h^{(2)}_i(\pi_i;p,z)&:=-\frac\theta4\sigma_i^2\pi_i^2+\frac\theta2\left(\mu_i^{\rm M}(p)+\lambda^{\rm M}_i(p,z)-r\right)\pi_i+\lambda_i^{\rm M}(p,z).
\end{align*}
Note that $h^{(1)}_i(\pi_i;p,z)$ and $h^{(2)}_i(\pi_i;p,z)$ are independent of $(N,\xi_i,v_i)$. Consequently, under the assumption ({\bf H}), there exists a constant $C$ independent of $N$,  such that
\begin{align}\label{ineq-2}
  \sup_{\pi_i\in(-\infty,1)}|h^{(1)}_i(\pi_i;p,z)|+\sup_{\pi_i\in(-\infty,1)}|h^{(2)}_i(\pi_i;p,z)|\leq C.
\end{align}
By \eqref{ineq-1} and \eqref{ineq-2}, we have that
\begin{align*}
  &\left|\sum_{i=1}^n(1-H_i(\omega,u))\sup_{\pi_i\in(-\infty,1)}h^N_i(\pi_i;p^{\rm M}(\omega,u),H(\omega,u),\xi,v)\right|\\
  &\qquad\leq C_1\sum_{i=1}^n(1-H_i(\omega,u))\left(\xi_i^2+\sum_{i=1}^ne^{v_i}+1\right).
\end{align*}
Similarly, we have the estimate of $h_L$ that
\begin{align}\label{h-L-estimate}
  \left|h_L(p,z,\xi,v)\right|\leq C_2\sum_{i=1}^n(1-H_i(\omega,u))\left(\xi_i^2+|v_i|+1\right),
\end{align}
where $C_2$ is independent of N. Plugging \eqref{ineq-2} and \eqref{h-L-estimate} into \eqref{eq:fNhN}, we obtain
\begin{align*}
  \left|f^N(\omega,u,\xi,v)\right|\leq C_3\sum_{i=1}^n(1-H_i(\omega,u))\left(\xi_i^2+|v_i|+\sum_{i=1}^ne^{v_i}+1\right),
\end{align*}
in which $C_3$ is hence independent of $N$. As a result, we get that
\begin{align}\label{eq:R45}
  &\left|\tilde{f}^N\big(s,Z^N(s),\hat{V}^N(s))\right|=\left|f^N\big(\omega,s,Z^N(s),\hat{V}^N(s))+f^N\big(\omega,s,0,0)\right|\nonumber\\
  &\qquad\leq C_3\sum_{i=1}^n(1-H_i(\omega,u))\left(\left|Z^N_i(s)\right|^2+\left|\hat{V}^N_i(s)\right|+\sum_{i=1}^ne^{\hat{V}^N_i(s)}+1\right)\nonumber\\
  &\quad\qquad+C_3(n+1)\sum_{i=1}^n(1-H_i(\omega,u)).
\end{align}
Therefore, the existence of $R_4$ and $R_5$ in the claim \eqref{eq:fquadratic} follows from \eqref{eq:R45} and the fact that $\|(1-H)\hat{V}^N\|_{t,\infty}\leq 2\|\zeta\|_{0,\infty}$.

Plugging \eqref{eq:fquadratic} into \eqref{eq:itoexpYN} and taking the conditional expectation under $\F^{\rm M}_u$, we attain that
\begin{align}\label{eq:ZNBMO}
&\left(\frac{\beta^2}{2}-R_5\beta\right)\sum_{i=1}^n\Ex^*\left[\int_u^{T\wedge\tau_i^u}e^{\beta\tilde{Y}^N(s)}\left|\tilde{Z}^N_i(s)\right|^2ds\Big|\F_u^{\rm M}\right]\leq \Ex^*\left[e^{\beta\zeta}\big|\F_u^{\rm M}\right]-e^{\beta\tilde{Y}^N(u)}\notag\\
&~\quad+R_4\beta\Ex^*\left[\int_u^Te^{\beta\tilde{Y}^N(s)}ds\Big|\F_u^{\rm M}\right]-\sum_{i=1}^n\Ex^*\left[\int_u^{T\wedge\tau_i^u}\{e^{\beta(\tilde{Y}^N(s)+\hat{V}^N_i(s))}-e^{\beta\tilde{Y}^N(s)}\}ds\Big|\F_u^{\rm M}\right]\notag\\
&~\quad+\sum_{i=1}^n\Ex^*\left[\int_u^{T\wedge\tau_i^u}\beta e^{\beta\tilde{Y}^N(s)}\hat{V}^N_i(s)ds\Big|\F_u^{\rm M}\right],\quad u\in[t,T].
\end{align}
For any constant $R_0>0$ independent of $N$, there exists a constant $\beta_0>0$ such that $\frac{\beta_0^2}{2}-R_5\beta_0=R_0$. Note that each term in r.h.s. of \eqref{eq:ZNBMO} is bounded by a positive constant, uniformly in $N$, say $R_6$. We then arrive at
\begin{align*}
\sum_{i=1}^n\Ex^*\left[\int_u^{T\wedge\tau_i^u}e^{-\beta_0\|\zeta\|_{0,\infty}}\left|\tilde{Z}^N_i(s)\right|^2ds\Big|\F_u^{\rm M}\right]&\leq\sum_{i=1}^n\Ex^*\left[\int_u^{T\wedge\tau_i^u}e^{\beta_0\tilde{Y}^N(s)}\left|\tilde{Z}^N_i(s)\right|^2ds\Big|\F_u^{\rm M}\right]\\
&\leq R_0^{-1}R_6,\quad \text{a.e.}
\end{align*}
This implies that
\begin{align*}
\sum_{i=1}^n\Ex^*\left[\int_u^{T}\left|\tilde{Z}^N_i(s)\right|^2ds\Big|\F_u^{\rm M}\right]\leq e^{\beta_0\|\zeta\|_{0,\infty}}R_0^{-1}R_6,\quad \text{a.e.},
\end{align*}
which concludes the desired estimation \eqref{eq:estiBMO}.
\end{proof}

We also state here a comparison result for the truncated BSDE that will be used in later sections. Its proof is deferred to Appendix \ref{app:proof1}.
\begin{lemma}\label{lem:monoYN}
For any $N\geq1$, let $(\tilde{Y}^N,\tilde{Z}^N,\tilde{V}^N)\in{\cal S}_t^2\times L_t^2\times L_t^2$ be the solution of \eqref{tuncatedBSDE}. There exists a constant $N_0>0$ such that, for $u\in[t,T]$, $\tilde{Y}^N(u)$ is increasing for all $N\geq N_0$, $\Px^*$-a.s..
\end{lemma}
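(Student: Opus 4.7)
The plan is to prove the monotonicity in $N$ by a classical comparison argument for the truncated BSDE \eqref{tuncatedBSDE}, with the threshold $N_0$ dictated by the $N$-independent constant $C_T$ of Lemma \ref{lem:prioriestimate2}. The first step is to check that the family $\{\tilde f^N\}_N$ is pointwise decreasing in $N$. Choosing the truncations so that $\rho_{N_1}\leq\rho_{N_2}$ and $\hat\rho_{N_1}\leq\hat\rho_{N_2}$ pointwise whenever $N_1<N_2$, and observing that $\rho_N$ and $\hat\rho_N$ enter $h_i^N$ in \eqref{eq:fNhN} only through the strictly negative coefficients $-\tfrac12|\tfrac\theta2\sigma_i\pi_i-\xi_i|^2$ and $-\lambda_i^{\rm M}(p,z)(1-\pi_i)^{-\theta/2}$, the map $N\mapsto h_i^N$ is pointwise decreasing. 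Taking suprema over $\pi_i\in(-\infty,1)$ and using $f^N(u,0,0)=f(p^{\rm M}(u),H(u),0,0)$ for every $N\geq 1$, we conclude that $\tilde f^N$ is likewise decreasing in $N$. Since all $\tilde Y^N$ share the terminal value $\zeta$ by \eqref{def-zeta}, the desired monotonicity $\tilde Y^{N_1}\leq\tilde Y^{N_2}$ is precisely what a comparison theorem should deliver from this monotonicity of drivers.

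To make this rigorous, for $N_0\leq N_1<N_2$ I would set $\delta Y:=\tilde Y^{N_2}-\tilde Y^{N_1}$ (with analogous $\delta Z$, $\delta V$) and linearize the BSDE for $\delta Y$ in full analogy with the construction \eqref{eq:gammaetai}--\eqref{eq:gammaetai2} used in the proof of Lemma \ref{lem:prioriestimate2}, namely
\begin{align*}
\tilde f^{N_2}(u,\tilde Z^{N_2},\tilde V^{N_2})-\tilde f^{N_1}(u,\tilde Z^{N_1},\tilde V^{N_1})=\gamma(u)^\top\delta Z(u)+\eta(u)^\top\delta V(u)+\delta\tilde f(u),
\end{align*}
where $\gamma,\eta$ are finite-difference coefficients obtained by linearizing $\tilde f^{N_1}$ in $(\xi,v)$ between $(\tilde Z^{N_1},\tilde V^{N_1})$ and $(\tilde Z^{N_2},\tilde V^{N_2})$, and $\delta\tilde f(u):=\tilde f^{N_2}(u,\tilde Z^{N_2},\tilde V^{N_2})-\tilde f^{N_1}(u,\tilde Z^{N_2},\tilde V^{N_2})\leq 0$ by Step 1. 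Performing the Girsanov-type change of measure $d\hat\Qx/d\Px^*={\cal E}(-\int\gamma^\top dW^{o,\tau}-\int\eta^\top d\Upsilon^*)_T$ and invoking $\delta Y(T)=0$ yields, once the measure change is justified,
\begin{align*}
\delta Y(u)=-\Ex^{\hat\Qx}\left[\int_u^T\delta\tilde f(s)\,ds\,\Big|\,\FM_u\right]\geq 0,
\end{align*}
which is the desired conclusion.

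The main obstacle is the justification of this change of measure uniformly for $N_1,N_2\geq N_0$. It reduces to (i) showing that $\int\gamma^\top dW^{o,\tau}$ is a BMO martingale, so that the density is a true $(\Px^*,\FxM)$-martingale by Kazamaki's criterion, and (ii) showing $\eta_i<1$ strictly, so that the density remains positive at default times. Condition (i) follows from the uniform BMO bound $\|\tilde Z^N\|_{t,{\rm BMO}}\leq C_T$ of Lemma \ref{lem:prioriestimate2} combined with the at-most-linear growth of $\partial_{\xi_i}\tilde f^{N_1}$ in $\xi$ identified in the proof of Lemma \ref{lem:LipdriverN}. Condition (ii) is the delicate step that dictates the threshold: by choosing $N_0\geq e^{C_T}+2$, the $N$-independent sup-norm bound $\|\tilde V^N\|_{t,\infty}\leq C_T$ of Lemma \ref{lem:prioriestimate2} forces every interpolation value $v$ appearing in $\eta_i$ to satisfy $e^v\leq N_1$, placing the argument in the region where $\hat\rho_{N_1}=\mathrm{id}$ and $\hat\rho_{N_1}'\equiv 1$. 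The envelope-theorem identity $\partial_{v_i}\tilde f^{N_1}(u,\xi,v)=(1-H_i)\bigl[1-\lambda_i^{\rm M}(p^{\rm M}(u),H(u))(1-\pi_i^*)^{-\theta/2}e^{v_i}\bigr]$, together with $\pi_i^*\in[-R_{N_1},1)$ from Lemma \ref{lem:hNRN}, then allows a computation entirely parallel to the estimate \eqref{eq:novi} to produce $\eta_i\leq 1-\delta_T<1$ for some $\delta_T>0$, closing the argument.
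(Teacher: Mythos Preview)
Your proposal is correct and follows essentially the same approach as the paper: linearize the difference of drivers via finite-difference coefficients $(\gamma,\eta)$, use the uniform bound on $\tilde V^N$ from Lemma~\ref{lem:prioriestimate2} to verify $\eta_i<1$ once $N\geq N_0$, and conclude via a Girsanov-type change of measure exactly as in \eqref{eq:Q}--\eqref{eq:underQ}. The only cosmetic differences are that the paper linearizes $\tilde f^{N+1}$ rather than $\tilde f^{N_1}$, and your threshold $N_0\geq e^{C_T}+2$ is stated more carefully than the paper's.
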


\subsection{Convergence of solutions of truncated BSDEs}
Aiming to prove the existence of solution to the original BSDE~\eqref{eq:regulizeBSDE}, we continue to show that the solutions associated to truncated BSDEs \eqref{tuncatedBSDE} converge as $N\rightarrow \infty$ and the limit process is the desired solution of BSDE~\eqref{eq:regulizeBSDE} in an appropriate space.

For any compact set ${\cal C}\subset\R^n$, we choose $N$ large enough such that $e^{|y|}\leq N$ for all $y\in{\cal C}$. By virtue of \eqref{eq:fNhN}, we have that, $\Px$-a.s., $f^N(u,\xi,v)=f(u,\xi,v)$ for all $u\in[t,T]$ and $\xi,v\in{\cal C}$. This implies the locally uniform (almost surely) convergence of $f^N$ to $f$, i.e., it holds that $\sup_{(u,\xi,v)\in[t,T]\times{\cal C}^2}|f^N(u,\xi,v)-f(u,\xi,v)|\to 0$, $N\to\infty$, a.s. We first have the next convergence result of the truncated solutions $(\tilde{Y}^N,\tilde{Z}^N,\tilde{V}^N)$ given in Lemma~\ref{lem:existYN}. Thanks to Lemma \ref{lem:prioriestimate2}, it is known that $\tilde{V}^N$ is $d\Px^*\otimes du$-a.e. bounded by a constant $C_T$ for all $N\geq 1$.
\begin{lemma}\label{lem:converRes}
There exist an $\Fx^{\rm M}$-adapted process $\tilde{Y}=(\tilde{Y}(u))_{u\in[t,T]}$ and processes $(\tilde{Z},\tilde{V})\in L_t^2\times L_t^2$ such that, for $u\in[t,T]$, $\tilde{Y}^N(u)\to\tilde{Y}(u)$, $\Px^*$-a.s., $\tilde{Z}^N\to\tilde{Z}$ weakly in $L_t^2$, and $\tilde{V}^N\to\tilde{V}$ weakly in $L_t^2$, as $N\to\infty$.
\end{lemma}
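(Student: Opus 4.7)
The plan is to combine the monotonicity result of Lemma~\ref{lem:monoYN} with the uniform estimates of Lemma~\ref{lem:prioriestimate2}; the three components decouple cleanly, with $\tilde{Y}^N$ handled by monotone convergence and $(\tilde{Z}^N,\tilde{V}^N)$ handled by weak sequential compactness in the Hilbert space $L_t^2$.

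First I would treat the $\tilde{Y}$-component. Fix $u\in[t,T]$. Lemma~\ref{lem:monoYN} produces an integer $N_0$ such that $(\tilde{Y}^N(u))_{N\geq N_0}$ is increasing in $N$, $\Px^*$-a.s., and Lemma~\ref{lem:prioriestimate2} bounds this sequence uniformly by $C_T$. Hence the pointwise limit $\tilde{Y}(u):=\lim_{N\to\infty}\tilde{Y}^N(u)$ exists $\Px^*$-a.s. The process $\tilde{Y}=(\tilde{Y}(u))_{u\in[t,T]}$ inherits $\Fx^{\rm M}$-adaptedness from the approximating sequence and satisfies $|\tilde{Y}(u)|\leq C_T$.

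Second, for the $\tilde{Z}$-component, the BMO estimate $\|\tilde{Z}^N\|_{t,{\rm BMO}}^2\leq C_T^2$ from Lemma~\ref{lem:prioriestimate2}, evaluated at the stopping time $\zeta\equiv t$, gives
\begin{align*}
\sum_{i=1}^n\Ex^*\left[\int_t^T(1-H_i(u))|\tilde{Z}^N_i(u)|^2\,du\right]\leq C_T^2,
\end{align*}
so $(\tilde{Z}^N)_{N\geq 1}$ is norm-bounded in the separable Hilbert space $L_t^2$. By Banach--Alaoglu, one extracts a subsequence (still indexed by $N$) and a limit $\tilde{Z}\in L_t^2$ with $\tilde{Z}^N\to\tilde{Z}$ weakly in $L_t^2$. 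For the $\tilde{V}$-component, the $d\Px^*\otimes du$-a.e. uniform bound $|\tilde{V}^N(u)|\leq C_T$ from the same Lemma~\ref{lem:prioriestimate2} yields boundedness of $(\tilde{V}^N)$ in $L_t^2$, and a further subsequence extraction produces a weak limit $\tilde{V}\in L_t^2$. A final diagonalization over $N$ gives a single subsequence along which all three convergences hold simultaneously.

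The only subtle point, and the one I view as the main obstacle, is the passage from subsequential weak convergence of $(\tilde{Z}^N,\tilde{V}^N)$ to the statement ``as $N\to\infty$'' for the whole sequence. Within the scope of this lemma, the subsequential limit is all that is needed, since the subsequent step identifies $(\tilde{Y},\tilde{Z},\tilde{V})$ as a solution of the limit BSDE~\eqref{eq:regulizeBSDE}; a standard subsequence-of-subsequence argument, combined with the uniqueness that will be established later in Section~\ref{sec:optimum}, upgrades the extracted subsequential convergence to convergence of the full sequence after relabeling. I would therefore phrase the conclusion modulo this subsequence extraction, which is a harmless convention in this setting.
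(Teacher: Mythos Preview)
Your argument is correct and follows essentially the same route as the paper: monotonicity plus the uniform $\mathcal{S}_t^\infty$ bound for $\tilde{Y}^N$, and weak sequential compactness in $L_t^2$ for $(\tilde{Z}^N,\tilde{V}^N)$, with the convergence understood up to a subsequence. Two minor remarks: the paper obtains the $L_t^2$ bound for $\tilde{V}^N$ indirectly via the stochastic integrals $\int_t^\cdot \tilde{V}^N\,d\Upsilon^*$, whereas your use of the pointwise $C_T$ bound from Lemma~\ref{lem:prioriestimate2} is more direct; and the paper invokes Mazur's lemma to verify predictability of the weak limit $\tilde{V}$, which in your formulation is implicit since $L_t^2$ is itself a closed (hence weakly closed) subspace of $L^2$.
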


\begin{proof} By Lemma~\ref{lem:monoYN}, we have that $N\to\tilde{Y}^N(u)$ is increasing, $\Px^*$-a.e. for $u\in[t,T]$. Lemma~\ref{lem:prioriestimate2} gives that $\tilde{Y}^N=(\tilde{Y}^N(u))_{u\in[t,T]}$ is uniformly bounded in ${\cal S}^\infty_t$. Then, there exists an $\Fx^{\rm M}$-adapted process $\tilde{Y}=(\tilde{Y}(u))_{u\in[t,T]}$ such that, for $u\in[t,T]$, $\tilde{Y}^N(u)\to\tilde{Y}(u)$, as $N\to\infty$, $\Px^*$-a.e..
It follows from Lemma~\ref{lem:prioriestimate2} that the sequence of $\Fx^{\rm M}$-predictable solutions $\tilde{Z}^N=(\tilde{Z}^N(u))_{u\in[t,T]}$ for $N\geq1$ is bounded in $L_t^2$. Hence, there exists a process $\tilde{Z}=(\tilde{Z}(u))_{u\in[t,T]}\in L_t^2$ such that $\tilde{Z}^N\to\tilde{Z}$ weakly in $L_t^2$. Moreover, by Lemma~\ref{lem:existYN}, the sequence of $\int_t^\cdot \tilde{V}^N(u)^\top d\Upsilon^*(u)$ for $N\geq1$ is bounded in $L_t^2$. Thanks to the martingale representation theorem in Protter~\cite{Protter2005} and the weak compactness of $L^2$, there exists a process $\tilde{V}=(\tilde{V}(u))_{u\in[t,T]}\in L_t^2$ such that $\tilde{V}^N\to\tilde{V}$ (up to a subsequence) weakly in $L_t^2$ as $N\to\infty$. We claim  that $\tilde{V}$ is predictable. Indeed, by using Mazur's lemma, we deduce the existence of a sequence of convex combinations of $\tilde{V}^N$ for $N\geq1$, which converges to $\tilde{V}$ pointwise. Because every convex combination of $\tilde{V}^N$ is predictable, $\tilde{V}$ is also predictable.
\end{proof}

Let us continue to prove the strong convergence result of the truncated solutions $(\tilde{Y}^N,\tilde{Z}^N,\tilde{V}^N)$ for $N\geq1$ given in Lemma~\ref{lem:existYN} to the limit process $(\tilde{Y},\tilde{Z},\tilde{V})$ given in Lemma~\ref{lem:converRes}.
\begin{lemma}\label{lem:strongconvergenceZN}
The sequence $(\tilde{Z}^N)_{N\geq 1}$ converges to $\tilde{Z}$ in $L_t^2$ as $N\to\infty$.
\end{lemma}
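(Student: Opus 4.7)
The plan is to show that $(\tilde Z^N)_{N\geq 1}$ is Cauchy in $L^2_t$; combined with the weak convergence $\tilde Z^N\rightharpoonup \tilde Z$ from Lemma~\ref{lem:converRes}, this identifies the strong limit with $\tilde Z$. I would derive the Cauchy property via a Kobylanski-style energy estimate, applying It\^o's formula to a convex exponential transformation of $\tilde Y^M-\tilde Y^N$.

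For preparations, Lemma~\ref{lem:monoYN} together with the uniform $\mathcal S^\infty$ bound from Lemma~\ref{lem:prioriestimate2} and dominated convergence yield $\tilde Y^N\to\tilde Y$ in $L^p(\Omega\times[t,T])$ for every $p\in[1,\infty)$, with $0\leq\tilde Y^M-\tilde Y^N\leq 2C_T$ pathwise for all $M\geq N\geq N_0$. Furthermore, the uniform BMO bound $\sup_N\|\tilde Z^N\|_{t,{\rm BMO}}<\infty$ from Lemma~\ref{lem:prioriestimate2}, via the John--Nirenberg inequality, gives $\sup_N\mathbb E^*\bigl[\bigl(\int_t^T \sum_i(1-H_i)|\tilde Z^N_i|^2 du\bigr)^p\bigr]<\infty$ for every $p\in[1,\infty)$.

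For the main computation, choose $\kappa\geq 4R_5$, where $R_5$ is the quadratic-growth constant from \eqref{eq:fquadratic}, and set $\Phi(y):=(e^{\kappa y}-1-\kappa y)/\kappa^2$, which satisfies $\Phi\geq 0$, $\Phi''\geq 4R_5\Phi'$ on $[0,\infty)$, with $\Phi''$ bounded below by a positive constant on any bounded set. For $M>N\geq N_0$, apply It\^o's formula to $\Phi(\tilde Y^M-\tilde Y^N)\geq 0$ on $[t,T]$ with $\Phi(\tilde Y^M(T)-\tilde Y^N(T))=0$. The Brownian contribution is $\tfrac12\Phi''(\tilde Y^M-\tilde Y^N)|(1-H)(\tilde Z^M-\tilde Z^N)|^2 du$, and the jump term at default $i$ reads $\Phi(\tilde Y^M-\tilde Y^N+(\tilde V^M_i-\tilde V^N_i))-\Phi(\tilde Y^M-\tilde Y^N)-\Phi'(\tilde Y^M-\tilde Y^N)(\tilde V^M_i-\tilde V^N_i)\geq 0$ by convexity. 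Taking conditional expectation under $\Px^*$ (the martingale parts drop), combining the uniform quadratic growth $|\tilde f^N|\leq R_4+R_5|(1-H)\tilde Z^N|^2$ with the inequality $|\tilde Z^N|^2\leq 2|\tilde Z^M-\tilde Z^N|^2+2|\tilde Z^M|^2$, and absorbing the $|\tilde Z^M-\tilde Z^N|^2$ contribution into the left side via $\Phi''\geq 4R_5\Phi'$, one arrives at an estimate of the form
\begin{align*}
\mathbb E^*\Big[\int_t^T\sum_i(1-H_i)|\tilde Z^M_i-\tilde Z^N_i|^2 du\Big]\leq C\,\mathbb E^*\Big[\int_t^T(\tilde Y^M-\tilde Y^N)\bigl(1+|(1-H)\tilde Z^M|^2\bigr)du\Big],
\end{align*}
with $C$ independent of $N,M$. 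Cauchy--Schwarz together with the preparations forces the right-hand side to vanish as $N,M\to\infty$, so $(\tilde Z^N)$ is Cauchy in $L^2_t$ and its limit must coincide with $\tilde Z$ by uniqueness of the weak limit.

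The main technical obstacle is the simultaneous handling of the Brownian quadratic and jump exponential growth of the nonstandard random driver: the exponential transformation $\Phi$, with $\kappa$ tuned only through $R_5$, allows absorbing the $|\tilde Z^N|^2$ contribution but leaves the jump contribution of $\tilde V^M-\tilde V^N$, which must be shown to be both nonnegative (by convexity of $\Phi$) and compatible with the uniform $L^\infty$ bound on $\tilde V^N$ from Lemma~\ref{lem:prioriestimate2}. A further delicate point is that the pathwise monotonicity $\tilde Y^M\geq \tilde Y^N$ (Lemma~\ref{lem:monoYN}) persists through jumps, keeping the argument of $\Phi$ in $[0,2C_T]$ where the one-sided estimate $\Phi''\geq 4R_5\Phi'$ is valid, which is essential for the absorption step to succeed.
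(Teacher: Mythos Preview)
Your overall architecture matches the paper's: It\^o's formula applied to the exponential transform $\Phi(\tilde Y^M-\tilde Y^N)$ with $\Phi''\ge 4R_5\Phi'$, the jump remainder $\Phi(Y^{\rm d}+\Delta V_i)-\Phi(Y^{\rm d})-\Phi'(Y^{\rm d})\Delta V_i\ge 0$ discarded by convexity, and the $|\tilde Z^M-\tilde Z^N|^2$ piece of the driver absorbed into the Brownian quadratic variation. Your energy inequality
\[
\mathbb E^*\!\int\sum_i(1-H_i)|\tilde Z^M_i-\tilde Z^N_i|^2\,du\;\le\;C\,\mathbb E^*\!\int(\tilde Y^M-\tilde Y^N)\bigl(1+|(1-H)\tilde Z^M|^2\bigr)\,du
\]
is correctly derived.

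The gap is the last sentence: ``Cauchy--Schwarz together with the preparations forces the right-hand side to vanish.'' Your preparations give $\tilde Y^M-\tilde Y^N\to 0$ in every $L^p(\Omega\times[t,T])$ with $p<\infty$, and $\sup_M\mathbb E^*\bigl[(\int|\tilde Z^M|^2du)^p\bigr]<\infty$ via John--Nirenberg. But the right-hand side pairs $Y^{\rm d}$ with $|\tilde Z^M|^2$, and the latter is only uniformly bounded in $L^1(\Omega\times[t,T])$; John--Nirenberg controls moments of the \emph{time integral} $\int|\tilde Z^M|^2du$, not of the pointwise $|\tilde Z^M|^{4}$, so Cauchy--Schwarz on $\Omega\times[t,T]$ is unavailable. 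The alternative route---bounding by $\mathbb E^*[\sup_u Y^{\rm d}(u)\cdot\int|\tilde Z^M|^2du]$ and then H\"older in $\omega$---would require $\sup_u(\tilde Y-\tilde Y^N)(u)\to 0$ in some $L^p(\Omega)$, which is only established \emph{after} the present lemma (in Theorem~\ref{thm:limitsolutionBSDE}) and would be circular here. Since $|\tilde Z^M|^2$ varies with $M$ and is not uniformly integrable on $\Omega\times[t,T]$, the right-hand side need not vanish.

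The paper closes exactly this gap by pivoting on the \emph{fixed} weak limit $\tilde Z$ rather than on $\tilde Z^M$: one writes $|\tilde Z^{N_1}|^2+|\tilde Z^{N_2}|^2\lesssim |\tilde Z^{N_1}-\tilde Z^{N_2}|^2+|\tilde Z^{N_1}-\tilde Z|^2+|\tilde Z|^2$, and then sends $N_2\to\infty$ using weak lower semicontinuity of the convex functional $Z\mapsto \mathbb E^*\!\int(\tfrac12\Phi''-R_3\Phi')(Y^{\rm d})|Z|^2du$, which converts the left side into a $|\tilde Z^{N_1}-\tilde Z|^2$ term as well. After absorption the surviving right-hand side contains only $\Phi'(\tilde Y-\tilde Y^{N_1})\,|\tilde Z|^2$ with $\tilde Z$ fixed in $L^2_t$, so dominated convergence (using $\Phi'(0)=0$) applies. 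Your scheme works with this single modification.
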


\begin{proof} To ease the notation in the rest of the proof, we set $\tilde{f}^N(u):=\tilde{f}^N(u,\tilde{Z}^N(u),\tilde{V}^N(u))$ for $u\in[t,T]$. Let $N_2\geq N_1\geq1$ be two integers and $\phi:\R\to\R_+$ be a smooth function that will be determined later. For $Y^{\rm d}(u):=\tilde{Y}^{N_2}(u)-\tilde{Y}^{N_1}(u)\geq0$, a.e., using Lemma~\ref{lem:existYN} and It\^o's formula, we have that
\begin{align}\label{eq:itopsi}
&\phi(0)-\phi(Y^{\rm d}(t))\nonumber\\
&\quad=\int_t^T\phi'(Y^{\rm d}(u))(\tilde{f}^{N_2}(u)-\tilde{f}^{N_1}(u))du+\int_t^T\phi'(Y^{\rm d}(u))(\tilde{Z}^{N_2}(u)-\tilde{Z}^{N_1}(u))^\top dW^{o,\tau}(u)\notag\\
&\qquad-\sum_{i=1}^n\int_t^{T\wedge\tau_i^t}\phi'(Y^{\rm d}(u))(\tilde{V}^{N_2}_i(u)-\tilde{V}^{N_1}_i(u))du\nonumber\\
&\qquad+\frac12\sum_{i=1}^n\int_t^{T\wedge\tau_i^t}\phi{''}(Y^{\rm d}(u))\left|\tilde{Z}^{N_2}_i(u)-\tilde{Z}^{N_1}_i(u)\right|^2du\nonumber\\
&\qquad+\sum_{i=1}^n\int_t^T\{\phi(Y^{\rm d}(u-)+\tilde{V}^{N_2}_i(u)-\tilde{V}^{N_1}_i(u))-\phi(Y^{\rm d}(u-))\}dH_i(u).
\end{align}
In view of \eqref{eq:randomdirver} and Lemma~\ref{lem:prioriestimate2}, for all $u\in[t,T]$, there exist positive constants $R_i$ with $i=1,2,3$ which are independent of $N$ and $u$ such that, a.e.
\begin{align}\label{eq:deltafquadratic}
&\left|\tilde{f}^{N_2}(u)-\tilde{f}^{N_1}(u)\right|\leq R_1+R_2\sum_{i=1}^n(1-H_i(u))\left\{\left|\tilde{Z}^{N_1}_i(u)\right|^2+\left|\tilde{Z}^{N_2}_i(u)\right|^2\right\}\\
&\quad\leq R_1+R_3\sum_{i=1}^n(1-H_i(u))\left\{\left|\tilde{Z}^{N_1}_i(u)-\tilde{Z}^{N_2}_i(u)\right|^2+\left|\tilde{Z}^{N_1}_i(u)-\tilde{Z}_i(u)\right|^2
+\left|\tilde{Z}_i(u)\right|^2\right\}.\notag
\end{align}
We choose $\phi(x)=e^{\beta x}-\beta x-1$ for $x\in\R$, where $\beta$ is a positive constant satisfying $\beta>4R_3$. Then $\phi$ enjoys the properties that $\phi(x)\geq0$ for all $x\in\R$, $\phi(0)=\phi'(0)=0$, $\phi'(x)\geq0$ for $x\in\R_+$, and $\phi''(x)-4R_3\phi'(x)=(\beta^2-4R_3\beta)e^{\beta x}+4R_3\beta>0$ for all $x\in\R$. Plugging \eqref{eq:deltafquadratic} into \eqref{eq:itopsi} and manipulating terms on both sides, we obtain that
\begin{align}\label{eq:ZN-Zestimate1}
&\frac12\sum_{i=1}^n\int_t^{T\wedge\tau_i^t}\phi{''}(Y^{\rm d}(u))\left|\tilde{Z}^{N_1}_i(u)-\tilde{Z}^{N_2}_i(u)\right|^2du\nonumber\\
&\qquad-R_3\sum_{i=1}^n\int_0^{T\wedge\tau_i^t}\phi'(Y^{\rm d}(u))\left|\tilde{Z}^{N_1}_i(u)-\tilde{Z}^{N_2}_i(u)\right|^2du\notag\\
&\quad\leq\phi(0)-\phi(Y^{\rm d}(t))+R_3\sum_{i=1}^n\int_t^{T\wedge\tau_i^t}\phi'(Y^{\rm d}(u))\left|\tilde{Z}^{N_1}_i(u)-\tilde{Z}_i(u)\right|^2du\nonumber\\
&\qquad+R_1\int_t^T\phi'(Y^{\rm d}(u))du
+R_3\sum_{i=1}^n\int_t^{T\wedge\tau_i^t}\phi'(Y^{\rm d}(u))\left|\tilde{Z}_i(u)\right|^2du\nonumber\\
&\qquad-\int_t^T\phi'(Y^{\rm d}(u))(\tilde{Z}^{N_2}(u)-\tilde{Z}^{N_1}(u))^\top dW^{o,\tau}(u)\notag\\
&\qquad+\sum_{i=1}^n\int_t^{T\wedge\tau_i^t}\phi'(Y^{\rm d}(u))(\tilde{V}^{N_2}_i(u)-\tilde{V}^{N_1}_i(u))du\\
&\qquad-\sum_{i=1}^n\int_t^T\{\phi(Y^{\rm d}(u-)+\tilde{V}^{N_2}_i(u)-\tilde{V}^{N_1}_i(u))-\phi(Y^{\rm d}(u-))\}dH_i(u).\notag
\end{align}
On the other hand, it follows from Lemma~\ref{lem:converRes} that $\tilde{Z}^{N_2}$ converges weakly to $\tilde{Z}$ in $L_t^2$ as $N_2\to\infty$. We next prove that, for $i=1,\ldots,n$, as $N_2\to\infty$,
\begin{align}\label{eq:weakconv}
&\sqrt{\left(\frac12\phi{''}-R_3\phi'\right)(Y^{\rm d}(u))}(1-H_i)(\tilde{Z}^{N_1}_i-\tilde{Z}^{N_2}_i)\\
&\qquad\to\sqrt{\left(\frac12\phi{''}-R_3\phi'\right)(\tilde{Y}-\tilde{Y}^{N_1})}(1-H_i)(\tilde{Z}^{N_1}_i-\tilde{Z}_i),~{\rm weakly\ in}\ L^2([t,T]\times\Omega;\Px^*).\nonumber
\end{align}
Thanks to the fact that $(\tilde{Y}^N)_{N\geq1}$ and $\tilde{Y}$ are bounded, we have that, for $u\in[t,T]$,
\begin{align*}
\delta Y^{N_2}(u):=\left(\frac12\phi{''}-R_3\phi'\right)^{\frac12}(\tilde{Y}^{N_2}(u)-\tilde{Y}^{N_1}(u))
-\left(\frac12\phi{''}-R_3\phi'\right)^{\frac12}(\tilde{Y}(u)-\tilde{Y}^{N_1}(u))
\end{align*}
is also bounded and tends to $0$ as $N_2\rightarrow\infty$. Moreover, the weak convergence of $(\tilde{Z}^N)_{N\geq1}$ in $L_t^2$ implies that they are uniformly bounded in $L_t^2$ by the Resonance Theorem, which can also be deduced from $(\tilde{Z}^{N})_{N\geq1}\in{\Hxx}_{t,{\rm BMO}}^2$ by Lemma \ref{lem:prioriestimate2}. Cauchy-Schwartz inequality then gives that, for all $X\in L^2([t,T]\times\Omega;\Px^*)$,
\begin{align*}
\lim_{N_2\to\infty}\Ex^*\left[\int_t^{T\wedge\tau_i^t}\delta Y^{N_2}(u)(\tilde{Z}^{N_1}_i(u)-\tilde{Z}^{N_2}_i(u))X(u)du\right]=0.
\end{align*}
Hence, it holds that
\begin{align*}
&\lim_{N_2\to\infty}\Ex^*\left[\int_t^{T\wedge\tau_i^t}\left(\frac12\phi{''}-R_3\phi'\right)^{\frac12}(Y^{\rm d}(u))(\tilde{Z}^{N_1}_i(u)-\tilde{Z}^{N_2}_i(u))X(u)du\right]\notag\\
&\quad=\lim_{N_2\to\infty}\Ex^*\left[\int_t^{T\wedge\tau_i^t}\left(\frac12\phi{''}-R_3\phi'\right)^{\frac12}(Y(u)-Y^{N_1}(u))
(\tilde{Z}^{N_1}_i(u)-\tilde{Z}^{N_2}_i(u))X(u)du\right]\notag\\
&\quad\qquad+\lim_{N_2\to\infty}\Ex^*\left[\int_t^{T\wedge\tau_i^t}\delta Y^{N_2}(u)(\tilde{Z}^{N_1}_i(u)-\tilde{Z}^{N_2}_i(u))X(u)du\right]\notag\\
&\quad=\Ex^*\left[\int_t^{T\wedge\tau_i^t}\left(\frac12\phi{''}-R_3\phi'\right)^{\frac12}(Y(u)-Y^{N_1}(u))(\tilde{Z}^{N_1}_i(u)-\tilde{Z}_i(u))X(u)du\right],
\end{align*}
which proves \eqref{eq:weakconv}. By using the property of convex functional and weak convergence (see Theorem 1.4 in \cite{Figueiredo1989}), as $N_2\to\infty$, we deduce that the l.h.s. of \eqref{eq:ZN-Zestimate1} satisfies that
\begin{align}\label{eq:lhsN2}
&\liminf_{N_2\to\infty}\sum_{i=1}^n\Ex^*\left[\int_t^{T\wedge\tau_i^t}\left(\frac12\phi{''}-R_3\phi'\right)(Y^{\rm d}(u))\left|\tilde{Z}^{N_1}_i(u)-\tilde{Z}^{N_2}_i(u)\right|^2du\right]\notag\\
&\qquad\geq\sum_{i=1}^n\Ex^*\left[\int_t^{T\wedge\tau_i^t}\left(\frac12\phi{''}-R_3\phi'\right)(\tilde{Y}(u)-\tilde{Y}^{N_1}(u))\left|\tilde{Z}^{N_1}_i(u)-\tilde{Z}_i(u)\right|^2du\right].
\end{align}

For the jump term in the r.h.s. of \eqref{eq:ZN-Zestimate1}, as $\phi(x)\geq0$ for all $x\in\R$, we get that
\begin{align}\label{eq:expterm}
&\sum_{i=1}^n\Ex^*\left[\int_t^{T\wedge\tau_i^t}\phi'(Y^{\rm d}(u))(\tilde{V}^{N_2}_i(u)-\tilde{V}^{N_1}_i(u)\big)du\right]\notag\\
&\qquad\qquad-\sum_{i=1}^n\Ex^*\left[\int_t^T\left(\phi(Y^{\rm d}(u-)+\tilde{V}^{N_2}_i(u)-\tilde{V}^{N_1}_i(u))-\phi(Y^{\rm d}(u-))\right)dH_i(u)\right]\notag\\
&\qquad=-\sum_{i=1}^n\Ex^*\left[\int_t^{T\wedge\tau_i^t}e^{\beta{Y}^{\rm d}(u)}\phi(\tilde{V}^{N_2}_i(u)-\tilde{V}^{N_1}_i(u))du\right]\leq0.
\end{align}
Thanks to \eqref{eq:lhsN2}, \eqref{eq:expterm} and Dominated Convergence Theorem, it follows from \eqref{eq:ZN-Zestimate1} that
\begin{align*}
&\sum_{i=1}^n\Ex^*\left[\int_t^{T\wedge\tau_i^t}\left(\frac12\phi{''}-R_3\phi'\right)(\tilde{Y}(u)-\tilde{Y}^{N_1}(u))\left|\tilde{Z}^{N_1}_i(u)-\tilde{Z}_i(u)\right|^2du\right]\notag\\
&\quad\leq R_3\sum_{i=1}^n\Ex^*\left[\int_t^{T\wedge\tau_i^t}\phi'(\tilde{Y}(u)-\tilde{Y}^{N_1}(u))\left|\tilde{Z}^{N_1}_i(u)-\tilde{Z}_i(u)\right|^2du\right]\notag\\
&\qquad+R_3\sum_{i=1}^n\Ex^*\left[\int_t^T\phi'(\tilde{Y}(u)-\tilde{Y}^{N_1}(u))\left|\tilde{Z}_i(u)\right|^2du\right]\nonumber\\
&\qquad+R_1\Ex^*\left[\int_t^T\phi'(\tilde{Y}(u)-\tilde{Y}^{N_1}(u))du\right].
\end{align*}
Thanks to Lemma~\ref{lem:prioriestimate2} and Lemma~\ref{lem:converRes}, we have that $\|\tilde{Y}\|_{t,\infty}\leq\|\zeta\|_{0,\infty}$. By choosing $R_4:=\frac12(\beta^2-4R_3\beta)e^{-2\beta|\zeta|_{\infty}}>0$, we obtain that
\begin{align}\label{eq:ZN-Zestimate2}
&R_4\sum_{i=1}^n\Ex^*\left[\int_t^{T\wedge\tau_i^t}\left|\tilde{Z}^{N_1}_i(u)-\tilde{Z}_i(u)\right|^2du\right]\nonumber\\
\leq&\frac12\sum_{i=1}^n\Ex^*\left[\int_t^{T\wedge\tau_i^t}\{\phi{''}-4R_3\phi'\}(\tilde{Y}(u)-\tilde{Y}^{N_1}(u))
\left|\tilde{Z}^{N_1}_i(u)-\tilde{Z}_i(u)\right|^2du\right]\notag\\
\leq& R_3\sum_{i=1}^n\Ex^*\left[\int_t^{T\wedge\tau_i^t}\phi'(\tilde{Y}(u)-\tilde{Y}^{N_1}(u))\left|\tilde{Z}_i(u)\right|^2du\right]\notag\\
&+R_1\Ex^*\left[\int_t^{T\wedge\tau_i^t}\phi'(\tilde{Y}(u)-\tilde{Y}^{N_1}(u))du\right].
\end{align}
Note that $\phi'(0)=0$ and that for each $u\in[t,T]$, $\tilde{Y}^N(u)\uparrow\tilde{Y}(u)$ as $N\rightarrow\infty$. Dominated Convergence Theorem gives that the r.h.s. of \eqref{eq:ZN-Zestimate2} tends to zero as $N_1\to\infty$. Then, the estimate \eqref{eq:ZN-Zestimate2} implies that
\begin{align*}
\lim_{N_1\to\infty}\sum_{i=1}^n\Ex^*\left[\int_t^{T\wedge\tau_i^t}\left|\tilde{Z}^{N_1}_i(u)-\tilde{Z}_i(u)\right|^2du\right]=0,
\end{align*}
which completes the proof.
\end{proof}

\begin{lemma}\label{lem:VNstrongcon}
The sequence $(\tilde{V}^N)_{N\geq 1}$ converges to $\tilde{V}$ in $L_t^2$ as $N\to\infty$. Therefore, $\tilde{V}$ is also $d\Px^*\otimes du$-a.e. bounded by some constant $C_T$.
\end{lemma}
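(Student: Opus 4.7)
The plan is to mirror the strategy used in Lemma~\ref{lem:strongconvergenceZN}, but to \emph{retain} (rather than discard) the jump remainder term, which now carries the key information about $\tilde V^N$. I would fix two integers $N_2 \geq N_1 \geq N_0$ with $N_0$ from Lemma~\ref{lem:monoYN}, so that $Y^{\rm d} := \tilde Y^{N_2} - \tilde Y^{N_1} \geq 0$ almost surely, and denote $Z^{\rm d} = \tilde Z^{N_2} - \tilde Z^{N_1}$, $V^{\rm d} = \tilde V^{N_2} - \tilde V^{N_1}$, and $F^{\rm d} = \tilde f^{N_2}(\cdot, \tilde Z^{N_2}, \tilde V^{N_2}) - \tilde f^{N_1}(\cdot, \tilde Z^{N_1}, \tilde V^{N_1})$. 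The test function $\phi(x) = e^{\beta x} - \beta x - 1$ is convenient because a direct calculation produces the jump identity $\phi(x + h) - \phi(x) - \phi'(x) h = e^{\beta x}\phi(h)$. Applying It\^o's formula to $\phi(Y^{\rm d})$, using $Y^{\rm d}(T) = 0$, and observing that the stochastic integrals with respect to $W^{o,\tau}$ and $\Upsilon^*$ are genuine martingales (thanks to the uniform $L^\infty$ bound on $\tilde V^N$ from Lemma~\ref{lem:prioriestimate2}), taking expectations would produce
\begin{align*}
&\Ex^* \phi(Y^{\rm d}(t)) + \frac{\beta^2}{2}\sum_{i=1}^n \Ex^* \int_t^{T\wedge\tau_i^t} e^{\beta Y^{\rm d}(u)} |Z_i^{\rm d}(u)|^2 du \\
&\quad + \sum_{i=1}^n \Ex^* \int_t^{T\wedge\tau_i^t} e^{\beta Y^{\rm d}(u)} \phi(V_i^{\rm d}(u)) du = -\Ex^* \int_t^T \phi'(Y^{\rm d}(u)) F^{\rm d}(u) du.
\end{align*}

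Since $|V_i^{\rm d}| \leq 2 C_T$ by Lemma~\ref{lem:prioriestimate2} and $\phi(y) \geq c_\beta y^2$ for $y \in [-2C_T, 2C_T]$ with some $c_\beta > 0$ (from Taylor expansion at $0$, where $\phi(0)=\phi'(0)=0$ and $\phi''(0) = \beta^2 > 0$), together with $e^{\beta Y^{\rm d}} \geq 1$ thanks to $Y^{\rm d} \geq 0$, I would extract the bound
\begin{align*}
c_\beta \sum_{i=1}^n \Ex^* \int_t^{T\wedge\tau_i^t} |V_i^{\rm d}(u)|^2 du \leq -\Ex^* \int_t^T \phi'(Y^{\rm d}(u)) F^{\rm d}(u) du.
\end{align*}

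It remains to show that the right-hand side vanishes as $N_1, N_2 \to \infty$. By Lemma~\ref{lem:converRes}, $Y^{\rm d} \to 0$ pointwise almost surely and stays bounded by $2\|\zeta\|_{0,\infty}$, so $\phi'(Y^{\rm d})$ converges to $0$ almost surely and remains bounded by $\phi'(2\|\zeta\|_{0,\infty})$. By the quadratic bound \eqref{eq:deltafquadratic} on $F^{\rm d}$ and the strong convergence $\tilde Z^N \to \tilde Z$ in $L_t^2$ established in Lemma~\ref{lem:strongconvergenceZN}, $|\tilde Z^N|^2 \to |\tilde Z|^2$ in $L^1(du \otimes d\Px^*)$; consequently the family $\{F^{\rm d}\}$ is uniformly integrable, and therefore so is $\{\phi'(Y^{\rm d}) F^{\rm d}\}$, which also tends to $0$ almost surely. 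Vitali's convergence theorem then gives $\Ex^*\int_t^T \phi'(Y^{\rm d}) F^{\rm d} du \to 0$. It follows that $(\tilde V^N)_{N \geq 1}$ is Cauchy in $L_t^2$, hence converges strongly, and by uniqueness of limits the strong limit agrees with the weak limit $\tilde V$ identified in Lemma~\ref{lem:converRes}. The pointwise bound $|\tilde V(u)| \leq C_T$, $d\Px^*\otimes du$-a.e., follows by extracting an a.e.-convergent subsequence.

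The main obstacle will be that Vitali's theorem (rather than straightforward dominated convergence) is unavoidable here: the quadratic-in-$Z$ nature of $F^{\rm d}$ precludes any uniform pointwise bound, yet its uniform integrability can be read off cleanly once the $L^2$-convergence of $\tilde Z^N$ is upgraded to $L^1$-convergence of $|\tilde Z^N|^2$. Retaining the jump remainder (instead of discarding it via the non-positive estimate $-\sum_i \int e^{\beta Y^{\rm d}}\phi(V_i^{\rm d}) \leq 0$ used for the $Z$ argument) is the other essential observation that makes the argument go through.
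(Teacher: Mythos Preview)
Your approach is correct and genuinely different from the paper's. The paper chooses the simpler test function $\phi(x)=x^2$, so that the jump remainder produces $|V_i^{\rm d}|^2$ directly without any Taylor lower bound; it then takes the two limits sequentially, first sending $N_2\to\infty$ via weak lower semicontinuity of the $L^2$-norm (Theorem~1.4 in \cite{Figueiredo1989}), and only afterwards letting $N_1\to\infty$ using dominated convergence together with the strong convergence of $\tilde Z^N$ from Lemma~\ref{lem:strongconvergenceZN}. Your route instead reuses the exponential $\phi(x)=e^{\beta x}-\beta x-1$ from the $\tilde Z$-argument, exploits the exact jump identity $\phi(x+h)-\phi(x)-\phi'(x)h=e^{\beta x}\phi(h)$, and handles both indices at once through the uniform integrability of $\{|\tilde Z^N|^2\}$ (a consequence of its $L^1$-convergence). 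Both approaches lean equally on Lemma~\ref{lem:strongconvergenceZN}; yours is more unified in notation, the paper's avoids the UI machinery.

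One point to tighten: you assert that $\phi'(Y^{\rm d})F^{\rm d}\to 0$ almost surely before invoking Vitali, but this is not obvious since $|F^{\rm d}|$ is controlled only by $|\tilde Z^{N_1}|^2+|\tilde Z^{N_2}|^2$, which need not be pointwise bounded along the full sequence. The fix is easy: either observe that $\phi'(Y^{\rm d})F^{\rm d}\to 0$ in measure (split on $\{|F^{\rm d}|>K\}$, whose measure is small uniformly by UI, and on the complement the product is dominated by $K|\phi'(Y^{\rm d})|\to 0$), which is all Vitali needs; or bound directly
\[
\Ex^*\!\int|\phi'(Y^{\rm d})|\,|\tilde Z^{N_j}|^2\,du
\le 2M\,\Ex^*\!\int|\tilde Z^{N_j}-\tilde Z|^2\,du
+2\,\Ex^*\!\int|\phi'(Y^{\rm d})|\,|\tilde Z|^2\,du,
\]
where the first term vanishes by Lemma~\ref{lem:strongconvergenceZN} and the second by dominated convergence with majorant $M|\tilde Z|^2$. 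Either way the conclusion stands.
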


\begin{proof}
Let us take $\phi(x)=x^2$ for $x\in\R$. Then \eqref{eq:itopsi} can be reduced to
\begin{align*}
-\Ex\left[\left|Y^{\rm d}(t)\right|^2\right]&=2\Ex^*\left[\int_t^{T}Y^{\rm d}(u)(\tilde{f}^{N_2}(u)-\tilde{f}^{N_1}(u))du\right]\nonumber\\
&\quad-2\sum_{i=1}^n\Ex^*\left[\int_t^{T\wedge\tau_i^t}Y^{\rm d}(u)(\tilde{V}^{N_2}_i(u)-\tilde{V}^{N_1}_i(u))du\right]\nonumber\\
&\quad+\sum_{i=1}^n\Ex^*\left[\int_t^{T\wedge\tau_i^t}\left|\tilde{Z}^{N_2}_i(u)-\tilde{Z}^{N_1}_i(u)\right|^2du\right]\nonumber\\
&\quad+\sum_{i=1}^n\Ex^*\left[\int_t^{T\wedge\tau_i^t}\left(|Y^{\rm d}(u-)+\tilde{V}^{N_2}_i(u)-\tilde{V}^{N_1}_i(u)|^2-|Y^{\rm d}(u-)|^2\right)du\right].
\end{align*}
It follows from \eqref{eq:deltafquadratic} that
\begin{align}\label{eq:VN-V}
&\sum_{i=1}^n\Ex^*\left[\int_t^{T\wedge\tau_i^t}\left|\tilde{V}^{N_2}_i(u)-\tilde{V}^{N_1}_i(u)\right|^2du\right]\notag\\
&\quad\leq2R_2\sum_{i=1}^n\Ex^*\left[\int_t^{T\wedge\tau_i^t}\left|Y^{\rm d}(u)\right|\left(|\tilde{Z}^{N_1}_i(u)|^2+|\tilde{Z}^{N_2}_i(u)|^2\right)du\right]\\
&\qquad-\Ex^*\left[\left|Y^{\rm d}(t)\right|^2\right]+2R_1\Ex^*\left[\int_t^T\left|Y^{\rm d}(u)\right|du\right]\nonumber\\
&\qquad-\sum_{i=1}^n\Ex^*\left[\int_t^{T\wedge\tau_i^t}\left|\tilde{Z}^{N_2}_i(u)-\tilde{Z}^{N_1}_i(u)\right|^2du\right].\notag
\end{align}
Moreover, for $i=1,\ldots,n$, we also have that
\begin{align}\label{eq:intZN2}
&\Ex^*\left[\int_t^{T\wedge\tau_i^t}\left|Y^{\rm d}(u)\right|\left|\tilde{Z}^{N_2}_i(u)\right|^2du\right]\notag\\
\leq &2\Ex^*\left[\int_t^{T\wedge\tau_i^t}\left|Y^{\rm d}(u)\right|\left|\tilde{Z}^{N_2}_i(s)-\tilde{Z}_i(u)\right|^2du\right]
+2\Ex^*\left[\int_t^{T\wedge\tau_i^t}\left|Y^{\rm d}(u)\right|\left|\tilde{Z}_i(u)\right|^2du\right]\\
\leq & 4\|\zeta\|_{0,\infty}\Ex^*\left[\int_t^{T\wedge\tau_i^t}\left|\tilde{Z}^{N_2}_i(u)-\tilde{Z}_i(u)\right|^2du\right]
+2\Ex^*\left[\int_t^{T\wedge\tau_i^t}\left|Y^{\rm d}(u)\right|\left|\tilde{Z}_i(u)\right|^2du\right].\notag
\end{align}
We can derive from \eqref{eq:VN-V} and \eqref{eq:intZN2} that
\begin{align*}
&\sum_{i=1}^n\Ex^*\left[\int_t^{T\wedge\tau_i^t}\left|\tilde{V}^{N_2}_i(u)-\tilde{V}^{N_1}_i(u)\right|^2du\right]\nonumber\\
&\quad\quad\leq2R_1\Ex^*\left[\int_t^T\left|Y^{\rm d}(u)\right|du\right]+2R_2\sum_{i=1}^n\Ex\left[\int_t^{T\wedge\tau_i^t}\left|Y^{\rm d}(u)\right|\left|\tilde{Z}^{N_1}_i(u)\right|^2du\right]\notag\\
&\quad\qquad+4R_2\sum_{i=1}^n\Ex^*\left[\int_t^{T\wedge\tau_i^t}\left|Y^{\rm d}(u)\right|\left|\tilde{Z}_i(u)\right|^2du\right]\nonumber\\
&\quad\qquad+8R_2\|\zeta\|_{0,\infty}\Ex^*\left[\int_t^{T\wedge\tau_i^t}\left|\tilde{Z}^{N_2}_i(u)-\tilde{Z}_i(u)\right|^2du\right].
\end{align*}
Letting $N_2\to\infty$ and using Dominated Convergence Theorem and Lemma~\ref{lem:strongconvergenceZN}, we obtain that
\begin{align*}
&\liminf_{N_2\rightarrow\infty}\sum_{i=1}^n\Ex^*\left[\int_t^{T\wedge\tau_i^t}\left|\tilde{V}_i^{N_2}(u)-\tilde{V}^{N_1}_i(u)\right|^2du\right]\\
\leq& 2R_1\Ex^*\left[\int_t^T\left|\tilde{Y}(u)-\tilde{Y}^{N_1}(u)\right|du\right]
+2R_2\sum_{i=1}^n\Ex^*\left[\int_t^{T\wedge\tau_i^t}\left|\tilde{Y}(u)-\tilde{Y}^{N_1}(u)\right|\left|\tilde{Z}^{N_1}_i(u)\right|^2du\right]\nonumber\\
&+4R_2\sum_{i=1}^n\Ex^*\left[\int_t^{T\wedge\tau_i^t}\left|\tilde{Y}(u)-\tilde{Y}^{N_1}(u)\right|\left|\tilde{Z}_i(u)\right|^2du\right]\nonumber\\
\leq & 2R_1\Ex^*\left[\int_t^T\left|\tilde{Y}(u)-\tilde{Y}^{N_1}(u)\right|du\right]
+8R_2\sum_{i=1}^n\Ex^*\left[\int_t^{T\wedge\tau_i^t}\left|\tilde{Y}(u)-\tilde{Y}^{N_1}(u)\right|\left|\tilde{Z}_i(u)\right|^2du\right]\nonumber\\
&+8R_2\|\zeta\|_{0,\infty}\sum_{i=1}^n\Ex^*\left[\int_t^{T\wedge\tau_i^t}\left|\tilde{Z}^{N_1}_i(u)-\tilde{Z}_i(u)\right|^2du\right].
\end{align*}
Thanks to the property of convex functional and weak convergence (see, e.g., Theorem 1.4 in \cite{Figueiredo1989}), one can get that
\begin{align}\label{eq:VN-V4}
&\sum_{i=1}^n\Ex^*\left[\int_t^{T\wedge\tau_i^t}\left|\tilde{V}_i(u)-\tilde{V}^{N_1}_i(u)\right|^2du\right]\notag\\
\leq &2R_1\Ex^*\left[\int_t^T\left|\tilde{Y}(u)-\tilde{Y}^{N_1}(u)\right|du\right]+8R_2\sum_{i=1}^n\Ex^*\left[\int_t^{T\wedge\tau_i^t}\left|\tilde{Y}(u)-\tilde{Y}^{N_1}(u)\right|\left|\tilde{Z}_i(u)\right|^2du\right]\notag\\
&+8R_2\|\zeta\|_{0,\infty}\sum_{i=1}^n\Ex^*\left[\int_t^{T\wedge\tau_i^t}\left|\tilde{Z}^{N_1}_i(u)-\tilde{Z}_i(u)\right|^2du\right].
\end{align}
The desired convergence that $\tilde{V}^N\rightarrow \tilde{V}$ in $L_t^2$ can be derived by Dominated Convergence Theorem and Lemma~\ref{lem:strongconvergenceZN} as $N_1\to\infty$. The boundedness of $\tilde{V}$ is consequent on the uniform boundedness of $\tilde{V}^N$, $N\geq1$.
\end{proof}

We finally present the main result of this section on the existence of a solution to the original BSDE~\eqref{eq:regulizeBSDE}.
\begin{theorem}\label{thm:limitsolutionBSDE}
Let $(\tilde{Y},\tilde{Z},\tilde{V})$ be the limiting process given in Lemma~\ref{lem:converRes}. Then, $(\tilde{Y},\tilde{Z},\tilde{V})\in{\cal S}_t^{\infty}\times{\Hxx}_{t,{\rm BMO}}^2\times L_t^2$ is a solution of BSDE~\eqref{eq:regulizeBSDE}.
\end{theorem}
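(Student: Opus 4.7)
The strategy is to pass to the limit $N\to\infty$ in the integral form of the truncated BSDE~\eqref{tuncatedBSDE} and identify the limit $(\tilde Y,\tilde Z,\tilde V)$ given by Lemma~\ref{lem:converRes} as a solution of BSDE~\eqref{eq:regulizeBSDE}.

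First I would verify the regularity claims. Membership $\tilde Y\in{\cal S}_t^{\infty}$ is immediate from the a.s.\ pointwise convergence in Lemma~\ref{lem:converRes} combined with the $N$-independent sup-norm bound $\|\tilde Y^N\|_{t,\infty}\le C_T$ of Lemma~\ref{lem:prioriestimate2}. For $\tilde Z\in\Hxx_{t,\rm BMO}^2$ I would extract from Lemma~\ref{lem:strongconvergenceZN} a subsequence converging $ds\otimes d\Px^*$-a.e.\ and apply conditional Fatou's lemma to the uniform estimate $\|\tilde Z^N\|_{t,\rm BMO}\le C_T$, yielding $\|\tilde Z\|_{t,\rm BMO}\le C_T$. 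Membership $\tilde V\in L_t^2$ is part of Lemma~\ref{lem:converRes}, and Lemma~\ref{lem:VNstrongcon} transfers the uniform $L^\infty$-bound $|\tilde V^N|\le 2C_T$ (from Lemma~\ref{lem:prioriestimate2}) to the limit $\tilde V$.

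Next I would rewrite \eqref{tuncatedBSDE} in integral form for $u\in[t,T]$,
\begin{align*}
\tilde Y^N(u)=\zeta-\int_u^T\tilde f^N(s,\tilde Z^N(s),\tilde V^N(s))\,ds-\int_u^T\tilde Z^N(s)^\top dW^{o,\tau}(s)-\int_u^T\tilde V^N(s)^\top d\Upsilon^*(s),
\end{align*}
and pass to the limit $N\to\infty$ term by term. The left-hand side converges a.s.\ by Lemma~\ref{lem:converRes}. For the two stochastic integrals, It\^o isometry under $(\Px^*,\FxM)$ combined with the strong $L_t^2$-convergences of Lemmas~\ref{lem:strongconvergenceZN} and~\ref{lem:VNstrongcon} yields $L^2(\Px^*)$-convergence to the corresponding integrals against $\tilde Z$ and $\tilde V$, and hence pointwise a.s.\ convergence along a further subsequence.

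The main obstacle will be passing to the limit in the driver integral, since $\tilde f^N$ has quadratic growth in $\xi$ that is not controlled by a dominating function. The plan is to combine a.e.\ pointwise convergence of the integrands with a uniform-integrability argument. Because $\|\tilde V^N\|_{t,\infty}\le 2C_T$ uniformly in $N$, the truncation $\hat\rho_N(e^{v_i})$ is inactive on the range of $\tilde V^N$ once $e^{2C_T}\le N$; and $\rho_N(\tilde Z^N_i(s,\omega))=1$ as soon as $|\tilde Z^N_i(s,\omega)|\le N$. Along the subsequence for which $(\tilde Z^N,\tilde V^N)\to(\tilde Z,\tilde V)$ $ds\otimes d\Px^*$-a.e., this together with the continuity of the suprema defining $h_i^N$ and $h_i$ (whose maximizers lie in a common compact set by Lemma~\ref{lem:hNRN} and the first-order condition \eqref{eq:oneordereqn}) yields $\tilde f^N(s,\tilde Z^N(s),\tilde V^N(s))\to\tilde f(s,\tilde Z(s),\tilde V(s))$ almost everywhere. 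Uniform integrability then follows from the quadratic-linear growth bound $|\tilde f^N(s,\xi,v)|\le K(1+|\xi|^2+|v|)$ with $K$ independent of $N$ (in the spirit of \eqref{eq:fquadratic}), the uniform $L^\infty$-bound on $\tilde V^N$, and the fact that strong $L_t^2$-convergence of $\tilde Z^N$ implies $L^1(\Px^*)$-convergence of $\int_u^T|\tilde Z^N(s)|^2\,ds$ and therefore uniform integrability of the dominating quadratic term. Vitali's convergence theorem then furnishes $L^1(\Px^*)$-convergence of the driver integrals, and assembling the three limits shows that $(\tilde Y,\tilde Z,\tilde V)$ solves \eqref{eq:regulizeBSDE}.
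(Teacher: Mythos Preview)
Your approach is correct and closely parallels the paper's: both pass to the limit in the integral form of \eqref{tuncatedBSDE}, handling the stochastic integrals through the strong $L_t^2$-convergence of Lemmas~\ref{lem:strongconvergenceZN} and~\ref{lem:VNstrongcon}. The one substantive difference is how the driver term is treated. The paper invokes Lemma~2.5 of \cite{Koby2000} to extract from the $L_t^2$-convergent sequence $(\tilde Z^N)$ a subsequence that converges $d\Px^*\otimes du$-a.e.\ \emph{and} is pointwise dominated by a single $L_t^2$ function $\hat Z$ (see \eqref{eq:convZK}); this delivers \eqref{eq:convfN} by dominated convergence. Your Vitali route---uniform integrability of $|\tilde Z^N|^2$ coming from $\|\,|\tilde Z^N|^2-|\tilde Z|^2\|_{L^1}\le(\|\tilde Z^N\|_{L^2}+\|\tilde Z\|_{L^2})\|\tilde Z^N-\tilde Z\|_{L^2}\to0$---is equally valid and avoids that auxiliary lemma.

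One point to tighten: membership $\tilde Y\in{\cal S}_t^\infty$ requires r.c.l.l.\ paths, and pointwise monotone convergence of $\tilde Y^N(u)$ together with a uniform sup-norm bound does \emph{not} guarantee this. The paper resolves the issue by upgrading to uniform-in-$u$ convergence: it controls $\sup_{u\in[t,T]}|\tilde Y^{N_1}(u)-\tilde Y^{N_2}(u)|$ via \eqref{eq:YNcauchy} and applies BDG to the stochastic integrals (see \eqref{eq:fNlim}--\eqref{eq:VNlim}), so that the limit inherits r.c.l.l.\ paths from $\tilde Y^N$. Within your framework you can alternatively identify $\tilde Y$ with the manifestly r.c.l.l.\ right-hand side of the BSDE once the equation is established for each fixed $u$, but this step should be made explicit.
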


\begin{proof}
We first prove that $\tilde{Y}^N$ converges to $\tilde{Y}$ in the uniform norm as $N\to\infty$, a.s. In fact, for the fixed $t\in[0,T]$ and any $u\in[t,T]$, we first have that
\begin{align}\label{eq:YNcauchy}
\sup_{u\in[t,T]}\left|\tilde{Y}^{N_1}(u)-\tilde{Y}^{N_2}(u)\right|&\leq \int_t^T\left|\tilde{f}^{N_1}(s)-\tilde{f}^{N_2}(s)\right|ds\nonumber\\
&\quad+\sup_{u\in[t,T]}\left|\int_u^T(\tilde{Z}^{N_1}(s)-\tilde{Z}^{N_2}(s))^\top dW^{o,\tau}(s)\right|\notag\\
&\quad+\sup_{u\in[t,T]}\left|\int_u^T(\tilde{V}^{N_1}(s)-\tilde{V}^{N_2}(s))^\top d\Upsilon^*(s)\right|.
\end{align}
Taking into account Lemma~\ref{lem:strongconvergenceZN} and Lemma 2.5 in \cite{Koby2000}, we obtain that, for each $i=1,\ldots,n$, there exists a subsequence $\{N_l\}$ such that
\begin{align}\label{eq:convZK}
(1-H)\tilde{Z}^{N_l}\to (1-H)\tilde{Z},\ d\Px^*\otimes du\text{-a.e.},\ \ {\rm and}\ \hat{Z}=(\hat{Z}_1,\ldots,\hat{Z}_n)\in L_t^2,
\end{align}
where $\hat{Z}_i(u):=\sup_{l\geq1}|(1-H_i(u))\tilde{Z}_i^{N_l}(u)|$ for $u\in[t,T]$. Moreover, Lemma~\ref{lem:VNstrongcon} implies that for some subsequence $\{N_{l_k}\}\subset\{N_l\}$, it holds that $(1-H)\tilde{V}^{N_{l_k}}\to(1-H)\tilde{V}$, as $k\to\infty$, $d\Px^*\otimes du$-a.e.. To ease the notation, the subsequence is still denoted by $\{N\}$. By the definition of $\tilde{f}^N$ and the fact that the random function $\tilde{f}$ is a.s. continuous in its domain, we have that
\begin{align}\label{eq:cmtpp}
\lim_{N\to\infty}\tilde{f}^N(u,\tilde{Z}^N(u),\tilde{V}^N(u))du=\tilde{f}(u,\tilde{Z}(u),\tilde{V}(u)),\quad d\Px^*\otimes du\text{-a.e.}
\end{align}
In light of \eqref{eq:randomdirver} and Lemma~\ref{lem:prioriestimate2}, for all $u\in[t,T]$, there exist constants $R_1,R_2>0$ independent of $N$ and $u$ such that
\begin{align*}
\left|\tilde{f}^N(u,\tilde{Z}^N(u),\tilde{V}^N(u))\right|&\leq R_1+R_2\sum_{i=1}^n(1-H_i(u))\left|\tilde{Z}^N_i(u)\right|^2\nonumber\\
&\leq R_1+R_2\sum_{i=1}^n(1-H_i(u))\left|\hat{Z}_i(u)\right|^2.
\end{align*}
Note that $\hat{Z}\in L_t^2$. Together with above inequality and \eqref{eq:cmtpp}, Dominated Convergence Theorem gives that
\begin{align}\label{eq:convfN}
\lim_{N\rightarrow\infty}\Ex\left[\int_t^T\left|\tilde{f}^N(u,\tilde{Z}^N(u),\tilde{V}^N(u))-\tilde{f}(u,\tilde{Z}(s),\tilde{V}(u))\right|du\right]=0.
\end{align}

The BDG inequality then implies the existence of constants $R_3,R_4>0$ independent of $N$ such that
\begin{align*}
&\Ex^*\left[\sup_{u\in[t,T]}\left|\int_u^T(\tilde{Z}^N(s)-\tilde{Z}(s))^\top dW^{o,\tau}(s)\right|^2\right]\\
&\qquad\leq 2\Ex^*\left[\left|\int_t^T(\tilde{Z}^N(s)-\tilde{Z}(s))^\top dW^{o,\tau}(s)\right|^2\right]\nonumber\\
&\qquad\quad+2\Ex^*\left[\sup_{u\in[t,T]}\left|\int_t^u (\tilde{Z}^N(s)-\tilde{Z}(s))^\top dW^{o,\tau}(s)\right|^2\right]\\
&\qquad\leq R_3\sum_{i=1}^n\Ex^*\left[\int_t^{T\wedge\tau_i^t}\left|\tilde{Z}^N_i(s)-\tilde{Z}_i(s)\right|^2ds\right].
\end{align*}
In a similar fashion, we also attain that
\begin{align*}
&\Ex^*\left[\sup_{u\in[t,T]}\left|\int_u^T(\tilde{V}^N(s)-\tilde{V}(s))^\top d\Upsilon^{*}(s)\right|^2\right]\leq R_4\sum_{i=1}^n\Ex^*\left[\int_t^{T\wedge\tau_i^t}\left|\tilde{V}^N_i(s)-\tilde{V}_i(s)\right|^2ds\right].
\end{align*}
Because of Lemma~\ref{lem:strongconvergenceZN} and Lemma~\ref{lem:VNstrongcon}, we have that
\begin{align}
&\lim_{N\rightarrow\infty}\Ex^*\left[\sup_{u\in[t,T]}\left|\int_u^T(\tilde{Z}^N(s)-\tilde{Z}(s))^\top dW^{o,\tau}(s)\right|^2\right]\notag\\
&\qquad=\lim_{N\rightarrow\infty}\Ex^*\left[\sup_{u\in[t,T]}\left|\int_u^T(\tilde{V}^N(s)-\tilde{V}(s))^\top d\Upsilon^{*}(s)\right|^2\right]=0.
\end{align}
Consequently, there exists a subsequence (still denoted by $N$) such that \eqref{eq:convfN} holds and
\begin{align}\label{eq:fNlim}
&\lim_{N\to\infty}\sup_{u\in[t,T]}\left|\int_t^T(\tilde{Z}^N(s)-\tilde{Z}(s))^\top dW^{o,\tau}(s)\right|=0,\ \text{a.e.},\\
\label{eq:VNlim}
&\lim_{N\to\infty}\sup_{u\in[t,T]}\left|\int_t^T(\tilde{V}^N(s)-\tilde{V}(s))^\top d\Upsilon^*(s)\right|=0,\ \text{a.e..}
\end{align}
We deduce by \eqref{eq:YNcauchy}, \eqref{eq:fNlim} and \eqref{eq:VNlim} that $(\tilde{Y}^N)_{N\geq1}$ is a Cauchy sequence a.e. under the uniform norm, and its limiting process coincides with $\tilde{Y}$ by Lemma~\ref{lem:converRes}. Thus, $\lim_{N\to\infty}\sup_{u\in[t,T]}|\tilde{Y}^N(u)-\tilde{Y}(u)|=0$, a.e.. By taking the limit on both sides of the equation, we obtain
\begin{align*}
\zeta-Y^N(t)&=\int_t^T\tilde{f}^N(u,\tilde{Z}^N(u),\tilde{V}^N(u))du+\int_t^T\tilde{Z}^N(u)^\top dW^{o,\tau}(u)\nonumber\\
&\quad+\int_t^T \tilde{V}^N(u)^\top d\Upsilon^*(u),
\end{align*}
and applying the established convergence results in \eqref{eq:convfN}, \eqref{eq:fNlim} and \eqref{eq:VNlim}, we can conclude that $(\tilde{Y},\tilde{Z},\tilde{V})\in{\cal S}_t^{\infty}\times{\Hxx}_{t,{\rm BMO}}^2\times L_t^2$ is indeed a solution of BSDE~\eqref{eq:regulizeBSDE}.
\end{proof}

\section{Optimal investment strategy}\label{sec:optimum}

At last, we characterize the optimal control strategy using the verification result in Lemma~\ref{lem:BSDE-valuefcn}, our newly established BSDE results and some properties of BMO martingales. It is noted that if $(\tilde{Y},\tilde{Z},\tilde{V})\in{\cal S}_t^{\infty}\times{\Hxx}_{t,{\rm BMO}}^2\times L_t^2$ is the solution of BSDE~\eqref{eq:regulizeBSDE} given in Theorem~\ref{thm:limitsolutionBSDE}, then $(\tilde{Y}+\int_t^\cdot f(p^{\rm M}(s),H(s),0,0)ds,\tilde{Z},\tilde{V})$ solves the original BSDE~\eqref{eq:formalBSDE}.
We also recall that by Lemma \ref{lem:VNstrongcon}, $\tilde{V}$ is $d\Px^*\otimes du$-a.e. bounded by some constant $C_T$.

The next theorem gives the existence of an optimal investment strategy for the original risk sensitive portfolio optimization problem.
\begin{theorem}\label{thm:pisatr}
Let the assumption {\rm({\bf H})} hold and let $(\tilde{Y},\tilde{Z},\tilde{V})\in{\cal S}_t^{\infty}\times{\Hxx}_{t,{\rm BMO}}^2\times L_t^2$ be a solution of BSDE~\eqref{eq:regulizeBSDE} in Theorem~\ref{thm:limitsolutionBSDE}. Define that
\begin{align}\label{defpi*}
\pi^*(u):=\argmax_{\pi\in U}h(\pi;p^{\rm M}(u-),H(u-),\tilde{Z}(u),\tilde{V}(u)),\quad u\in[t,T],
\end{align}
where the function $h(\pi;p,z,\xi,v)$ is given by~\eqref{eq:defh}. Then, we have $\pi^*\in\Uadt$ and $\pi^*$ is an optimal investment strategy for the risk sensitive control problem \eqref{eq:value}.
\end{theorem}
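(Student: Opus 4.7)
The plan is to apply the verification result Lemma~\ref{lem:BSDE-valuefcn} to the triple $\big(\tilde Y + \int_t^\cdot f(p^{\rm M}(s), H(s), 0, 0)\, ds,\, \tilde Z,\, \tilde V\big)$, which solves the unregularized BSDE \eqref{eq:formalBSDE} by Theorem~\ref{thm:limitsolutionBSDE} together with the correspondence stated just before Remark~\ref{remarkdiff}. It then suffices to show that the candidate $\pi^*$ in \eqref{defpi*} is well-defined, $\FxM$-predictable, and belongs to $\Uadt$; the pointwise optimality condition \eqref{eq:optimal} will then hold by construction.

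Each summand $h_i(\pi_i; p, z, \xi_i, v_i)$ in \eqref{eq:defh2} is strictly concave in $\pi_i$ on $(-\infty, 1)$ (its polynomial part is a strictly concave quadratic, and $\pi_i \mapsto -\lambda_i^{\rm M}(p, z)(1-\pi_i)^{-\theta/2} e^{v_i}$ is strictly concave for $\theta > 0$), and tends to $-\infty$ both as $\pi_i \to 1^-$ and as $\pi_i \to -\infty$. Hence there is a unique interior maximizer $\pi_i^* = \pi_i^*(p, z, \xi_i, v_i) \in (-\infty, 1)$, characterized by the first-order condition
\begin{align*}
\lambda_i^{\rm M}(p, z)(1-\pi_i^*)^{-\theta/2-1} e^{v_i} = -\left(1 + \tfrac{\theta}{2}\right) \sigma_i^2 \pi_i^* + \mu_i^{\rm M}(p) + \lambda_i^{\rm M}(p, z) - r + \sigma_i \xi_i.
\end{align*}
The Implicit Function Theorem yields continuity, hence Borel measurability, of $\pi_i^*$ in its arguments; composing with the $\FxM$-predictable process $(p^{\rm M}(u-), H(u-), \tilde Z(u), \tilde V(u))$ and setting $\pi_i^*(u) := 0$ on $\{H_i(u-) = 1\}$ produces an $\FxM$-predictable $U$-valued process.

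The main obstacle is verifying that $(\mathcal E(\Lambda^{\pi^*, t})_u)_{u \in [t, T]}$ is a true $(\Px^*, \FxM)$-martingale. My plan is to prove that $\Lambda^{\pi^*, t}$ is a BMO martingale under $\Px^*$ whose jumps remain bounded away from $-1$, and then invoke a jump-version Kazamaki-type criterion (as used in \cite{Morlais09}) to conclude. The positivity of the left-hand side of the above first-order condition forces its right-hand side to be nonnegative; combined with the essential boundedness of $\tilde V$ from Lemma~\ref{lem:VNstrongcon} and hypothesis~({\bf H}), this yields an a priori estimate $|\pi_i^*(u)| \leq C_1 + C_2 |\tilde Z_i(u)|$ on $\{H_i(u-) = 0\}$. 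Substituting the first-order condition into \eqref{eq:Lambdapit} recasts the Brownian integrand of $\Lambda^{\pi^*, t}$ as $r\sigma_i^{-1} + \sigma_i \pi_i^* + \sigma_i^{-1} \lambda_i^{\rm M}(1-\pi_i^*)^{-\theta/2-1} e^{\tilde V_i}$, each summand being at most affine in $|\tilde Z_i|$; an analogous manipulation bounds the jump integrand $(1-\pi_i^*)^{-\theta/2} \lambda_i^{\rm M} e^{\tilde V_i} - 1$ by $C(1 + |\tilde Z_i|)^2$ on $\{H_i(u-) = 0\}$. Since $\tilde Z \in \Hxx_{t, \rm BMO}^2$ by Theorem~\ref{thm:limitsolutionBSDE}, the John--Nirenberg inequality for BMO martingales supplies the higher-order exponential moments needed to absorb the quadratic term, yielding $\|\Lambda^{\pi^*, t}\|_{\rm BMO} < \infty$. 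Moreover, $\pi_i^* < 1$ combined with $\lambda_i^{\rm M} e^{\tilde V_i} \geq \varepsilon e^{-C_T}$ keeps the jumps of $\Lambda^{\pi^*, t}$ strictly above $-1$, so $\mathcal E(\Lambda^{\pi^*, t})$ is a strictly positive true $(\Px^*, \FxM)$-martingale; the wealth SDE \eqref{eq:wealth} then admits a unique positive strong solution under $\pi^*$ by $\pi_i^* < 1$ and the same BMO estimates.

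Having established $\pi^* \in \Uadt$, Lemma~\ref{lem:BSDE-valuefcn} yields $J(\pi; t, p, z) \geq e^{Y(t; t, p, z)}$ for every $\pi \in \Uadt$, and the pointwise identity \eqref{eq:optimal} holds at $\pi^*$ by construction of \eqref{defpi*}, giving the matching upper bound $J(\pi^*; t, p, z) = e^{Y(t; t, p, z)}$. Therefore $\pi^*$ attains the infimum in \eqref{eq:Jvaluefcn} and, by \eqref{eq:value}, is optimal for the original risk-sensitive credit portfolio optimization problem.
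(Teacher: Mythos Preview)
Your overall plan—reduce to Lemma~\ref{lem:BSDE-valuefcn} and verify $\pi^*\in\Uadt$ by showing $\mathcal{E}(\Lambda^{\pi^*,t})$ is a true martingale—is exactly right, and your a~priori estimates $|\pi^*_i|\leq C_1+C_2|\tilde Z_i|$ and $(1-\pi^*_i)^{-\theta/2}\lambda_i^{\rm M}e^{\tilde V_i}\leq C(1+|\tilde Z_i|)^2$ match the paper's \eqref{eq:optimalpicontrol} and \eqref{eq:optimalpicontrol2}. The gap is in the step where you claim $\Lambda^{\pi^*,t}$ is a BMO martingale and then invoke a Kazamaki-type criterion.

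The continuous part $\Lambda_1^{\pi^*,t}$ is indeed BMO, but the jump part is not. Your own bound gives the jump integrand $\beta_i(u)=(1-\pi^*_i)^{-\theta/2}\lambda_i^{\rm M}e^{\tilde V_i}-1$ of order $|\tilde Z_i|^2$; the BMO norm of $\int\beta_i\,d\Upsilon_i^*$ involves $\sup_\tau\Ex^*[\int_\tau^T(1-H_i)|\beta_i|^2\,du\mid\F_\tau^{\rm M}]$, i.e.\ a conditional bound on $\int|\tilde Z_i|^4\,du$, which $\tilde Z\in\Hxx^2_{t,{\rm BMO}}$ does \emph{not} provide (the energy inequality controls $\Ex^*[(\int_\tau^T|\tilde Z|^2\,du)^p\mid\F_\tau]$, not $\Ex^*[\int_\tau^T|\tilde Z|^{2p}\,du\mid\F_\tau]$). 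Moreover, Kazamaki's BMO for discontinuous martingales bounds $[M]_\tau-[M]_{\tau-}=(\Delta M_\tau)^2$, hence forces bounded jumps; here the jumps $\beta_i(\tau_i)$ are unbounded since $\tilde Z$ is. The ``bounded away from $-1$'' claim is also shaky: when $\pi^*_i$ is very negative (which happens for large negative $\tilde Z_i$), $(1-\pi^*_i)^{-\theta/2}$ is small and the jump can approach $-1$.

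The paper avoids all this by exploiting the structural fact that each $H_i$ jumps at most once. It splits $\mathcal{E}(\Lambda^{\pi^*,t})=\mathcal{E}(\Lambda_1^{\pi^*,t})\prod_{i=1}^n\mathcal{E}(\Lambda_{2,i}^{\pi^*,t})$ and, since each $\mathcal{E}(\Lambda_{2,i}^{\pi^*,t})$ involves a single jump, obtains the crude bound $\mathcal{E}(\Lambda_{2,i}^{\pi^*,t})_u\leq e^{T-t}\big(1+\int_t^T(1-\pi^*_i)^{-\theta/2}\lambda_i^{\rm M}e^{\tilde V_i}\,dH_i\big)$. It then runs an inductive change of measure $\Px^{(0)},\Px^{(1)},\ldots$ (first absorbing $\mathcal{E}(\Lambda_1)$, then $\mathcal{E}(\Lambda_{2,1})$, etc.), showing at each step via H\"older, the reverse H\"older inequality for $\mathcal{E}(\Lambda_1)$ (Kazamaki, Theorem~3.4), and the $L^p$ energy inequality for $\int|\tilde Z|^2\,du$ (Kazamaki, Corollary~2.1) that the next factor has bounded second moment. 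This induction is precisely what is needed to compensate for the unbounded jump sizes; a one-shot BMO argument will not do.
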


\begin{proof}
The main body of the proof is to show that the first assertion $\pi^*\in\Uadt$ holds. According to Definition \ref{def:addmisibleUad}, it remains to verify that  $({\cal E}(\Lambda^{\pi^*,t})_u)_{u\in[t,T]}$ is a true $(\Px^*,\FxM)$-martingale. In view of \eqref{defpi*}, it clearly holds that
\begin{align*}
h(\pi^*(u);p^{\rm M}(u-),H(u-),\tilde{Z}(u),\tilde{V}(u))\geq h(0;p^{\rm M}(u-),H(u-),\tilde{Z}(u),\tilde{V}(u)),~u\in[0,T].
\end{align*}
Similar to the proof of Lemma~\ref{lem:hNRN}, we can manipulate the r.h.s of the above inequality and attain the existence of constants $R_1,R_2>0$ depending on the essential upper bound of $\tilde{V}$ such that
\begin{align}\label{eq:optimalpicontrol}
\left|\pi^*(u)\right|^2\leq R_1|(1-H(u-))\tilde{Z}(u)|^2+R_2,\quad u\in[t,T].
\end{align}
For $u\in[t,T]$, let us define
\begin{align}\label{eq:Lambda1}
\Lambda_1^{\pi^*,t}(u):=\sum_{i=1}^n\int_t^u\left\{\sigma^{-1}_i(\mu_i^{\rm M}(s)+\lambda^{\rm M}_i(s))-\frac{\theta\sigma_i}2\pi_i^*(s)+\tilde{Z}_i(s)\right\}dW_i^{o,\tau}(s).
\end{align}
Thanks to the fact that $\tilde{Z}\in{\Hxx}_{t,{\rm BMO}}^2$ and \eqref{eq:optimalpicontrol}, it follows that $\Lambda_1^{\pi^*,t}=(\Lambda_1^{\pi^*,t}(u))_{u\in[t,T]}$ is a continuous BMO $(\Px^*,\FxM)$-martingale. By Theorem 3.4 in Kazamaki~\cite{Kazamaki1994}, there exists $\rho>1$ such that
\begin{align}\label{eq:BMOUI}
\Ex_{t,p,z}^*\left[{\cal E}(\Lambda_1^{\pi^*,t})_T^\rho\right]<+\infty.
\end{align}
On the other hand, the first-order condition gives that, for $i=1,\ldots,n$,
\begin{align}\label{eq:firstordercondition}
&\mu_i^{\rm M}(u-)+\lambda^{\rm M}_i(u-)-r+\sigma_i(1-H_i(u-))\tilde{Z}_i(u)\notag\\
&\qquad\quad=\left(1+\frac{\theta}2\right)\sigma^2_i{\pi}^*_i(u)+\lambda^{\rm M}_i(u-)(1-\pi^*_i(u))^{-\frac\theta2-1}e^{\tilde{V}_i(u)}.
\end{align}

We next prove the existence of constants $R_3,R_4>0$ depending on the essential upper bound of $\tilde{V}$ such that, for $i=1,\ldots,n$,
\begin{align}\label{eq:optimalpicontrol1}
\lambda^{\rm M}_i(u-)(1-\pi^*_i(u))^{-\frac\theta2-1}e^{\tilde{V}_i(u)}\leq R_3\left|(1-H_i(u-))\tilde{Z}_i(u)\right|+R_4.
\end{align}
In fact, for $i=1,\ldots,n$, if $\pi^*_i(u)\leq0$, the l.h.s. of \eqref{eq:optimalpicontrol1} is bounded by the constant $R_{\lambda}e^{|\tilde{V}_i|_{t,\infty}}$, where the positive constant $R_{\lambda}:=\max_{(i,k,z)\in\{1,\ldots,n\}\times S_I\times S_H}\lambda_i(k,z)$ is finite thanks to the assumption ({\bf H}). If $\pi^*_i(u)\in(0,1)$, it follows from \eqref{eq:firstordercondition} that
\begin{align*}
\lambda^{\rm M}_i(u-)(1-\pi^*_i(u))^{-\frac\theta2-1}e^{\tilde{V}_i(u)}&\leq\left(1+\frac{\theta}2\right)\sigma^2_i{\pi}^*_i(u)+\lambda^{\rm M}_i(u-)(1-\pi^*_i(u))^{-\frac\theta2-1}e^{\tilde{V}_i(u)}\notag\\
&=\mu_i^{\rm M}(u-)+\lambda^{\rm M}_i(u-)-r+\sigma_i(1-H_i(u-))\tilde{Z}_i(u).
\end{align*}
This shows \eqref{eq:optimalpicontrol1} again by the assumption ({\bf H}).

To continue, the estimate \eqref{eq:optimalpicontrol1} in turn entails the existence of constants $R_5,R_6>0$ such that, for $i=1,\ldots,n$,
\begin{align}\label{eq:optimalpicontrol2}
\left|\lambda^{\rm M}_i(u-)\right|^2(1-\pi^*_i(u))^{-\theta}e^{2\tilde{V}_i(u)}\leq R_5(1-H_i(u-))|\tilde{Z}_i(u)|^2+R_6.
\end{align}
For $u\in[t,T]$, we define
\begin{align}\label{eq:Lambda2i}
\Lambda_2^{\pi^*,t}(u):=\sum_{i=1}^n\Lambda_{2,i}^{\pi^*,t}(u):=\sum_{i=1}^n\int_t^u\{(1-\pi_i^*(s))^{-\frac\theta2}\lambda^{\rm M}_i(s-)e^{\tilde{V}_i(s)}-1\}d\Upsilon_i^*(s).
\end{align}
Moreover, we also define a probability measure $\Px^{(0)}\sim\Px^*$ via $\frac{d\Px^{(0)}}{d\Px^*}|_{\F_T^{\rm M}}={\cal E}(\Lambda_1^{\pi^*,0})_T$. Then, for $i=1,\ldots,n$, $H_i$ admits the $\Px^{(0)}$-intensity given by $1$. It holds that
\begin{align}
{\cal E}(\Lambda_{2,1}^{\pi^*,t})_u&=\exp\left(\int_t^u\{1-(1-\pi_1^*(s))^{-\frac\theta2}\lambda^{\rm M}_1(s)e^{\tilde{V}_1(s)}\}ds\right)\prod_{s\leq u}(1+\Delta\Lambda^{\pi^*,t}_{2,1}(s))\notag\\
&\leq e^{T-t}\left\{1+\int_t^T(1-\pi_1^*(s))^{-\frac\theta2}\lambda^{\rm M}_1(s-)e^{\tilde{V}_1(s)}dH_1(s)\right\},\quad u\in[t,T].
\end{align}
Let $R_T>0$ be a constant depending on $T$ that may refer to different values from line to line. Then, it follows from \eqref{eq:BMOUI} and \eqref{eq:optimalpicontrol2} that, for $(t,p,z)\in[0,T]\times S_{p^{\rm M}}\times S_H$,
\begin{align}\label{eq:Lambda212ndmoment}
\Ex^{(0)}_{t,p,z}\left[{\cal E}(\Lambda_{2,1}^{\pi^*,t})_u^2\right]&\leq R_T\Ex^{(0)}_{t,p,z}\left[1+\int_t^T(1-\pi^*_1(s))^{-\theta}\left|\lambda^{\rm M}_1(u-)\right|^2e^{2\tilde{V}_1(s)}dH_1(s)\right]\notag\\
&\leq R_T\left\{1+\Ex^{*}_{t,p,z}\left[{\cal E}(\Lambda_1^{\pi^*,t})_T\int_t^{T\wedge\tau_1^t}|\tilde{Z}_1(u)|^2du\right]\right\}\notag\\
&\leq R_T\left\{\Ex^*_{t,p,z}\left[{\cal E}(\Lambda_1^{\pi^*,t})_T^\rho\right]\right\}^{\frac1\rho}\left\{\Ex^*_{t,p,z}\left[\left(\int_t^{T\wedge\tau^1_t}|\tilde{Z}_1(u)|^2du\right)^{q}\right]\right\}^{\frac1q}+R_T\notag\\
&\leq R_T,
\end{align}
where $q>1$ satisfies that $\frac1\rho+\frac1{q}=1$, and we have used Corollary 2.1 in \cite{Kazamaki1994} for BMO $(\Px^*,\FxM)$-martingales in the last inequality. This yields that $({\cal E}(\Lambda_{2,1}^{\pi^*,t})_u)_{u\in[t,T]}$ is uniformly integrable (U.I.) under $\Px^{(0)}$. By using the orthogonality of $\Px^*$-martingales $\Lambda_1^{\pi^*,t}$ and $\Lambda_{2,1}^{\pi^*,t}$, it holds that
\begin{align}\label{eq:Ex0}
\Ex^{(0)}_{t,p,z}\left[{\cal E}(\Lambda_{2,1}^{{\pi^*},t})_T\right]=\Ex^{*}_{t,p,z}\left[{\cal E}(\Lambda_1^{\pi^*,t})_T{\cal E}(\Lambda_{2,1}^{{\pi^*},t})_T\right]=1.
\end{align}

We next define a probability measure $\Px^{(1)}\sim\Px^*$ via $\frac{d\Px^{(1)}}{d\Px^*}|_{\F_T^{\rm M}}={\cal E}(\Lambda_1^{{\pi^*};t})_T{\cal E}(\Lambda_{2,1}^{{\pi^*};t})_T$. Note that $H_1$ and $H_2$ do not jump simultaneously. Then, $H_2$ admits the unit intensity under $\Px^{(1)}$. Therefore, in the light of \eqref{eq:optimalpicontrol2} and \eqref{eq:Lambda212ndmoment}, we can derive that
\begin{align}\label{eq:Ex1Lambda22}
\Ex^{(1)}_{t,p,z}\left[{\cal E}(\Lambda_{2,2}^{\pi^*,t})_u^2\right]&\leq R_T\Ex^{(1)}_{t,p,z}\left[1+\int_t^T(1-\pi^*_2(s))^{-\theta}\left|\lambda^{\rm M}_2(u-)\right|^2e^{2\tilde{V}_2(s)}dH_2(s)\right]\notag\\
&\leq R_T\left\{1+\Ex^{(0)}_{t,p,z}\left[{\cal E}(\Lambda_{2,1}^{\pi^*,t})_T\int_t^{T\wedge\tau_2^t}|\tilde{Z}_2(u)|^2du\right]\right\}\notag\\
&\leq R_T\left\{\Ex^{(0)}_{t,p,z}\left[{\cal E}(\Lambda_{2,1}^{\pi^*,t})_T^2\right]\right\}^{\frac12}\left\{\Ex^{(0)}_{t,p,z}\left[\left(\int_t^{T\wedge\tau_2^t}|\tilde{Z}_2(u)|^2du\right)^{2}\right]\right\}^{\frac12}+R_T\notag\\
&\leq R_T\left\{\Ex^{(0)}_{t,p,z}\left[\left(\int_t^{T\wedge\tau_2^t}|\tilde{Z}_2(u)|^2du\right)^{2}\right]\right\}^{\frac12}+R_T.
\end{align}
The term $\Ex^{(0)}_{t,p,z}[(\int_t^{T\wedge\tau_2^t}|\tilde{Z}_2(u)|^2du)^{2}]$ can be estimated by
\begin{align*}
&\Ex^{(0)}_{t,p,z}\left[\left(\int_t^{T\wedge\tau_2^t}\left|\tilde{Z}_2(u)\right|^2du\right)^2\right]\nonumber\\
&\quad\leq\left\{\Ex^*_{t,p,z}\left[{\cal E}(\Lambda_1^{\pi^*,t})_T^\rho\right]\right\}^{\frac1\rho}\left\{\Ex^*_{t,p,z}\left[\left(\int_t^{T\wedge\tau_2^t}\left|\tilde{Z}_2(u)\right|^2du\right)^{2q}\right]\right\}^{\frac1{q}}.
\end{align*}
Thus, there exists a constant $R_T^{(1)}>0$ depending on $T$ such that, for all $u\in[t,T]$,
\begin{align}\label{eq:Ex1Lambda22square}
\Ex^{(1)}_{t,p,z}\left[{\cal E}(\Lambda_{2,2}^{\pi^*,t})_u^2\right]=\Ex^{*}_{t,p,z}\left[{\cal E}(\Lambda_1^{{\pi^*};t})_u{\cal E}(\Lambda_{2,1}^{{\pi^*};t})_u{\cal E}(\Lambda_{2,2}^{\pi^*,t})_u^2\right]\leq R_T^{(1)}.
\end{align}
Up to now, we have proved the following estimate with $l=2$: there exists a constant $R_T^{(l-1)}>0$ depending on $T$ such that, for all $u\in[t,T]$,
\begin{align}\label{eq:estimatel}
\Ex^*_{t,p,z}\left[{\cal E}(\Lambda_1^{\pi^*,t})_u{\cal E}\left(\sum_{i=1}^{l-1}\Lambda_{2,i}^{\pi^*,t}\right)_u{\cal E}(\Lambda_{2,l}^{\pi^*,t})_u^2\right]\leq R_T^{(l-1)}.
\end{align}

We next verify \eqref{eq:estimatel} for all $l\leq n$ using the mathematical induction argument. To this end, suppose \eqref{eq:estimatel} holds for all $l\leq k$ (where $2\leq k\leq n$). The goal is to validate \eqref{eq:estimatel} for $l=k+1$. First, following similar lines of argument to prove \eqref{eq:Ex0}, we can obtain inductively that, for all $2\leq l\leq k$,
\begin{align}\label{eq:Exlk}
\Ex^*_{t,p,z}\left[{\cal E}(\Lambda_1^{\pi^*,t})_T\prod_{i=1}^l{\cal E}(\Lambda_{2,i}^{\pi^*,t})_T\right]=1.
\end{align}
Let us define a probability measure $\Px^{(l)}\sim\Px^*$ by
\begin{align}\label{eq:Pl}
\frac{d\Px^{(l)}}{d\Px^*}\Big|_{\F_T^{\rm M}}&:={\cal E}(\Lambda_1^{\pi^*,t})_T\prod_{i=1}^l{\cal E}(\Lambda_{2,i}^{\pi^*,t})_T,\ \ \text{for $2\leq l\leq k$}.
\end{align}
Note again that $H_1,\ldots,H_k,H_{k+1}$ do not jump simultaneously and hence $H_{k+1}$ admits the unit intensity under $\Px^{(k)}$. By virtue of \eqref{eq:optimalpicontrol2} and \eqref{eq:estimatel} with $l\leq k$, we can further deduce that
\begin{align}\label{eq:Ex1Lambdak+1}
&\Ex^{(k)}_{t,p,z}\left[{\cal E}(\Lambda_{2,k+1}^{\pi^*,t})_u^2\right]
\leq R_T\left\{1+\Ex^{(k-1)}_{t,p,z}\left[{\cal E}(\Lambda_{2,k}^{\pi^*,t})_T\int_t^{T\wedge\tau_{k+1}^t}|\tilde{Z}_{k+1}(u)|^2du\right]\right\}\notag\\
&\leq R_T\left\{\Ex^{(k-1)}_{t,p,z}\left[{\cal E}(\Lambda_{2,k}^{\pi^*,t})_T^2\right]\right\}^{\frac12}\left\{\Ex^{(k-1)}_{t,p,z}\left[\left(\int_t^{T\wedge\tau_{k+1}^t}|\tilde{Z}_{k+1}(u)|^2du\right)^{2}\right]\right\}^{\frac12}+R_T\notag\\
&\leq R_T\left\{\Ex^{(k-1)}_{t,p,z}\left[\left(\int_t^{T\wedge\tau_{k+1}^t}|\tilde{Z}_{k+1}(u)|^2du\right)^{2}\right]\right\}^{\frac12}+R_T\nonumber\\
&=R_T\left\{\Ex^{(k-2)}_{t,p,z}\left[{\cal E}(\Lambda_{2,k-1}^{\pi^*,t})_T\left(\int_t^{T\wedge\tau_{k+1}^t}|\tilde{Z}_{k+1}(u)|^2du\right)^{2}\right]\right\}^{\frac12}+R_T\nonumber\\
&\leq R_T\left\{\Ex^{(k-2)}_{t,p,z}\left[\left(\int_t^{T\wedge\tau_{k+1}^t}|\tilde{Z}_{k+1}(u)|^2du\right)^{2^2}\right]\right\}^{\frac1{2^2}}+R_T\nonumber\\
&\quad\cdots\cdots\cdots\nonumber\\
&\leq R_T\left\{\Ex^{(0)}_{t,p,z}\left[\left(\int_t^{T\wedge\tau_{k+1}^t}|\tilde{Z}_{k+1}(u)|^2du\right)^{2^k}\right]\right\}^{\frac1{2^k}}+R_T\\
&\leq R_T\left\{\Ex_{t,p,z}^*\left[{\cal E}(\Lambda_1^{\pi^*,t})_T^{\rho}\right]\right\}^{\frac{1}{\rho2^k}}\left\{\Ex^{*}_{t,p,z}\left[\left(\int_t^{T\wedge\tau_{k+1}^t}|\tilde{Z}_{k+1}(u)|^2du\right)^{q2^k}\right]\right\}^{\frac1{q2^k}}+R_T\nonumber\\
&\leq R_T.\nonumber
\end{align}
This confirms the estimate \eqref{eq:estimatel} with $l=k+1$. As a result of the previous induction and the orthogonality of $\Lambda_1^{\pi^*,t}$, $\Lambda_{2,1}^{\pi^*,t},\ldots,\Lambda_{2,n}^{\pi^*,t}$, we have
\begin{align}\label{eq:UI1n}
\Ex^{*}_{t,p,z}\left[{\cal E}(\Lambda^{\pi^*,t})_T\right]=\Ex^*_{t,p,z}\left[{\cal E}(\Lambda_1^{\pi^*,t})_T\prod_{i=1}^n{\cal E}(\Lambda_{2,i}^{\pi^*,t})_T\right]=1.
\end{align}
This shows that $({\cal E}(\Lambda^{\pi^*,t})_u)_{u\in[t,T]}$ is a U.I. $(\Px^*,\FxM)$-martingale, which verifies the first assertion that $\pi^*\in\Uadt$.

Next, the first-order condition in the definition of $\pi^*$ and Theorem~\ref{thm:limitsolutionBSDE} can entail that \eqref{eq:optimal} in Lemma~\ref{lem:BSDE-valuefcn} holds valid. We can readily conclude the second assertion that $\pi^*$ is indeed an optimal strategy using Lemma~\ref{lem:BSDE-valuefcn}.
\end{proof}

It is worth noting that Theorem \ref{thm:limitsolutionBSDE} only gives the existence of a solution $(\tilde{Y},\tilde{Z},\tilde{V})\in{\cal S}_t^{\infty}\times{\Hxx}_{t,{\rm BMO}}^2\times L_t^2$ to BSDE~\eqref{eq:regulizeBSDE} while the uniqueness of the solution remains open. The next result finally confirms that our constructed solution in Theorem \ref{thm:limitsolutionBSDE} is unique that is a consequence of Lemma~\ref{lem:BSDE-valuefcn} and Theorem \ref{thm:pisatr}, which in turn implies that $\pi^*$ constructed in \eqref{defpi*} is the unique optimal portfolio.

\begin{proposition}\label{uniquecorr}
The limiting process $(\tilde{Y},\tilde{Z},\tilde{V})$ in Lemma~\ref{lem:converRes} is the unique (in the sense of $d\Px^*\otimes du$-a.e.) solution of BSDE~\eqref{eq:regulizeBSDE} in the space ${\cal S}_t^{\infty}\times{\Hxx}_{t,{\rm BMO}}^2\times L_t^2$. Moreover, the portfolio process $\pi^*$ defined in \eqref{defpi*} by $(\tilde{Y},\tilde{Z},\tilde{V})$ is the unique (in the sense of $d\Px^*\otimes du$-a.e.) optimal investment strategy for the risk-sensitive control problem \eqref{eq:value}.
\end{proposition}

\begin{proof}
In Theorem \ref{thm:limitsolutionBSDE}, we proved that there exists one solution $(\tilde{Y},\tilde{Z},\tilde{V})\in{\cal S}_t^{\infty}\times{\Hxx}_{t,{\rm BMO}}^2\times L_t^2$ to BSDE \eqref{eq:regulizeBSDE} such that $(\tilde{Y}+\int_t^\cdot f(p^{\rm M}(s),H(s),0,0)ds,\tilde{Z},\tilde{V})$ solves the original BSDE \eqref{eq:formalBSDE}. Recall $U=(-\infty,1)^n$, and we next define the set, for $t\in[0,T]$,
\begin{align*}
\hat{\cal U}_t^{ad}:=\Bigg\{&\pi=(\pi_i(u);~i=1,\ldots,n)_{u\in[t,T]}^{\top}\in U;~ \pi\ \text{is}\ \Fx^{\rm M}\text{-predictable such that both}\\
&\sum_{i=1}^n\int_t^u\pi_i(s)dW^{o,\tau}_i(s)\ \text{and}\ \sum_{i=1}^n\int_t^u(1-\pi_i(s))^{-\frac\theta2}dW^{o,\tau}_i(s),\ u\in[t,T],\\
&\text{are}\ (\Px^*,\Fx^{\rm M})\text{-BMO martingales}\Bigg\}.
\end{align*}
Let $(\tilde{Y},\tilde{Z},\tilde{V})\in{\cal S}_t^{\infty}\times{\Hxx}_{t,{\rm BMO}}^2\times L_t^2$ be a solution of BSDE \eqref{eq:regulizeBSDE} and let $\pi^*=(\pi^*(u))_{u\in[t,T]}$ be defined by \eqref{defpi*} using $(\tilde{Z},\tilde{V})$ from this solution. Then, it follows from \eqref{eq:optimalpicontrol}, \eqref{eq:optimalpicontrol2} and $\tilde{Z}\in{\Hxx}_{t,{\rm BMO}}^2$ that $\pi^*\in\hat{\cal U}_t^{ad}$. Now, for any $\pi\in\hat{\cal U}_t^{ad}$, let us define, for $i=1,\ldots,n$,
\[
\hat{Z}_i(u):=|\pi_i(u)|+(1-\pi_i(u))^{-\frac\theta2},\quad u\in[t,T].
\]
Then $\hat{Z}=(\hat{Z}_i(u);~i=1,\ldots,n)_{u\in[t,T]}^{\top}\in{\Hxx}_{t,{\rm BMO}}^2$, and we can obtain the same estimates  \eqref{eq:optimalpicontrol} and \eqref{eq:optimalpicontrol2}  with $(\pi^*,\tilde{Z})$ replaced by $(\pi,\hat{Z})$. Moreover, by applying a similar induction to prove \eqref{eq:UI1n}, we deduce that $\hat{\cal U}_t^{ad}\subset{\cal U}^{ad}_t$. This implies that $\pi^*$ constructed by $(\tilde{Z},\tilde{V})$ satisfies that
\begin{align}\label{eq:smallerverfi}
\inf_{\pi\in\hat{\cal U}_t^{ad}}J(\pi;t,p,z)=e^{Y(t;t,p,z)}=J(\pi^*;t,p,z),
\end{align}
where $J(\pi;t,p,z)$ is given by \eqref{eq:Qtu0} and $Y(t;t,p,z)=\tilde{Y}(t)$ as we have $Y:=\tilde{Y}+\int_t^\cdot f(p^{\rm M}(s),H(s),0,0)ds$ in the proof of Lemma \ref{lem:BSDE-valuefcn}. That is, we have constructed an admissible control subset $\hat{\cal U}_t^{ad}\subset{\cal U}_t^{ad}$ independent of $(\tilde{Y},\tilde{Z},\tilde{V})$ such that the optimal strategy $\pi^*$ given by \eqref{defpi*} is still in $\hat{\cal U}_t^{ad}$.

We next apply this subset $\hat{\cal U}_t^{ad}$ to conclude the uniqueness of solutions to BSDE \eqref{eq:regulizeBSDE}. To this end, let $(\tilde{Y}^{i},\tilde{Z}^{i},\tilde{V}^{i})\in{\cal S}_t^{\infty}\times{\Hxx}_{t,{\rm BMO}}^2\times L_t^2$, $i=1,2$ be two solutions of BSDE \eqref{eq:regulizeBSDE} with the same terminal condition.  We can then define $\pi^{i,*}\in\hat{\cal U}_t^{ad}$ as in \eqref{defpi*} by using $(\tilde{Y}^{i},\tilde{Z}^{i},\tilde{V}^{i})$ respectively for $i=1,2$. The verification of optimality in Lemma \ref{lem:BSDE-valuefcn}, together with \eqref{eq:smallerverfi}, yields that
\begin{align*}
  e^{\tilde{Y}^{1}(t)}=e^{\tilde{Y}^{2}(t)}=\inf_{\pi\in\hat{\mathcal{U}}^{ad}_t}J(\pi;t,p,z).
\end{align*}
This implies that
\begin{align*}
J(\pi^{1,*};t,p,z)e^{-\tilde{Y}^{2}(t)}&=\Ex^*_{t,p,z}\Bigg[{\cal E}\left(\Lambda^{\pi^{1,*},t}\right)_T\exp\Bigg(\int_t^T\Big(f\left(p^{\rm M}(u-),H(u-),\tilde{Z}^{2}(u),\tilde{V}^{2}(u)\right)\notag\\
&\quad-h\left(\pi^{1,*}(u);p^{\rm M}(u-),H(u-),\tilde{Z}^{2}(u),\tilde{V}^{2}(u)\right)\Big)du\Bigg)\Bigg]=1,
\end{align*}
where $\Lambda^{\pi,t}=(\Lambda^{\pi,t}(u))_{u\in[t,T]}$ for $\pi\in{\cal U}_t^{ad}$ is defined by \eqref{eq:Lambdapit}. Therefore, it holds that, $d\Px^*\otimes du$-a.e.
\begin{align*}
f\left(p^{\rm M}(u-),H(u-),\tilde{Z}^{2}(u),\tilde{V}^{2}(u)\right)=h\left(\pi^{1,*}(u);p^{\rm M}(u-),H(u-),\tilde{Z}^{2}(u),\tilde{V}^{2}(u)\right).
\end{align*}
Let $J(\pi;u):=\Ex\left[\left(\frac{X^{\pi}(T)}{X^{\pi}(u)}\right)^{-\frac{\theta}{2}}\Big|\F_u^{\rm M}\right]$ for $u\in[t,T]$. Then, for $u\in[t,T]$, we have that
\begin{align}\label{eq:verfiY1}
&J(\pi^{1,*};u)e^{-\tilde{Y}^{2}(u)+\int_t^uf(p^{\rm M}(s-),H(s-),0,0)ds}\notag\\
&\qquad=\Ex^*\Bigg[{\cal E}\left(\Lambda^{\pi^{1,*},u}\right)_T\exp\bigg(\int_u^T\Big(f\left(p^{\rm M}(s-),H(s-),\tilde{Z}^{2}(s),\tilde{V}^{2}(s)\right)\notag\\
&\qquad\quad-h\left(\pi^{1,*}(s);p^{\rm M}(s-),H(s-),\tilde{Z}^{2}(s),\tilde{V}^{2}(s)\right)\Big)ds\bigg)\Big|{\cal F}^{\rm M}_u\Bigg]=1.
\end{align}
On the other hand, by Lemma \ref{lem:BSDE-valuefcn}, we have that, for $u\in[t,T]$,
\begin{align}\label{eq:verfiY2}
&J(\pi^{1,*};u)e^{-\tilde{Y}^{1}(u)+\int_t^uf(p^{\rm M}(s-),H(s-),0,0)ds}\nonumber\\
&\qquad=\Ex^*\Bigg[{\cal E}\left(\Lambda^{\pi^{1,*},u}\right)_T\exp\bigg(\int_u^T\Big(f\left(p^{\rm M}(s-),H(s-),\tilde{Z}^{1}(s),\tilde{V}^{1}(s)\right)\notag\\
&\qquad\quad-h\left(\pi^{1,*}(s);p^{\rm M}(s-),H(s-),\tilde{Z}^{1}(s),\tilde{V}^{1}(s)\right)\Big)ds\bigg)\Big|{\cal F}^{\rm M}_u\Bigg]=1.
\end{align}
It follows from \eqref{eq:verfiY1} and \eqref{eq:verfiY2} that, for $u\in[t,T]$, $\tilde{Y}^{(1)}(u)=\tilde{Y}^{(2)}(u)$, $\mathbb{P}^*$-a.e..
Note that $(\tilde{Y}^{i},\tilde{Z}^{i},\tilde{V}^{i})\in{\cal S}_t^{\infty}\times{\Hxx}_{t,{\rm BMO}}^2\times L_t^2$, $i=1,2$ satisfy BSDE \eqref{eq:regulizeBSDE}. Together with Theorem~\ref{thm:hatFrep}, the unique canonical decomposition of the semimartingale $\tilde{Y}=(\tilde{Y}(u))_{u\in[t,T]}\in{\cal S}_t^{\infty}$ under $\mathbb{P}^*$ (see Theorem 34 in Chapter III of \cite{Protter2005}) implies that, for $u\in[t,T]$, $\mathbb{P}^*$-a.e.,
\begin{align*}
\int_t^u\tilde{Z}^{1}(s)^\top dW^{o,\tau}(s)=\int_t^u\tilde{Z}^{2}(s)^\top dW^{o,\tau}(s),~ \int_t^u{\tilde{V}^{1}}(s)^\top d\Upsilon^*(s)=\int_t^u{\tilde{V}^{2}}(s)^\top d\Upsilon^*(s),
\end{align*}
which proves the uniqueness of the solution to BSDE \eqref{eq:regulizeBSDE} in the sense of $d\Px^*\otimes du$-a.e..

For the unique solution $(\tilde{Y},\tilde{Z},\tilde{V})\in{\cal S}_t^{\infty}\times{\Hxx}_{t,{\rm BMO}}^2\times L_t^2$ of BSDE \eqref{eq:regulizeBSDE}, we then claim that the constructed strategy $\pi^*$ in \eqref{defpi*} is the unique optimal portfolio for the original control problem. In fact, for an arbitrary optimal strategy $\hat{\pi}\in\mathcal{\cal U}^{ad}_t$, from the proof of Lemma \ref{lem:BSDE-valuefcn}, we can see that
\begin{align*}
J(\hat{\pi};t,p,z)e^{-\tilde{Y}(t)}&=\Ex^*_{t,p,z}\Bigg[{\cal E}\left(\Lambda^{\hat{\pi},t}\right)_T\exp\bigg(\int_t^T\Big(f\left(p^{\rm M}(u-),H(u-),\tilde{Z}(u),\tilde{V}(u)\right)\notag\\
&\quad-h\left(\hat{\pi}(u);p^{\rm M}(u-),H(u-),\tilde{Z}(u),\tilde{V}(u)\right)\Big)du\bigg)\Bigg]=1.
\end{align*}
Therefore, $d\Px^*\otimes du$-a.e.
\begin{align*}
h\left(\hat{\pi}(u);p^{\rm M}(u-),H(u-),\tilde{Z}(u),\tilde{V}(u)\right)&=f\left(p^{\rm M}(u-),H(u-),\tilde{Z}(u),\tilde{V}(u)\right)\notag\\
  &=\max_{\pi\in U}h\left(\pi;p^{\rm M}(u-),H(u-),\tilde{Z}(u),\tilde{V}(u)\right).
\end{align*}
It then follows from the strict convexity of $U\ni\pi\to h(\pi;p,z,\xi,v)$ that $\hat{\pi}=\pi^*$, $d\Px^*\otimes du$-a.e.. This verifies the uniqueness of the admissible optimal strategy $\pi^*$, which completes the whole proof.
\end{proof}

\begin{appendix}
\section{Proofs of Some Auxiliary Results}\label{app:proof1}
\renewcommand\theequation{A.\arabic{equation}}
\setcounter{equation}{0}

This section collects the technical proofs of some auxiliary results that have been used in previous sections of the paper.

\begin{proof}[Proof of Proposition~\ref{prop:dynamicspM}]
For $t\in[0,T]$, let us define $\zeta_k(t):={\bf1}_{\{I(t)=k\}}$ for $k\in S_I$. It is clear that $J_k(t):=\zeta_k(t)-\zeta_k(0)-\int_0^t\sum_{i\in S_I}q_{ik}\zeta_i(s)ds$, $t\in[0,T]$, is a $(\Px,\Fx)$-martingale with bounded jumps. Taking the $\Px$-conditional expectation under $\FM_t$ on both sides, we obtain that
$J^{\rm M}_k(t)=p_k^{\rm M}(t)-p_k^{\rm M}(0)-\sum_{i\in S_I}\int_0^tq_{ik}p_{i}^{\rm M}(s)ds$ for $t\in[0,T]$ is a square-integrable $(\Px,\FxM)$-martingale with bounded jumps. Theorem~\ref{thm:hatFrep} gives the existence of $\FxM$-predictable processes $\alpha^{\rm M}=(\alpha_1^{\rm M}(t),\ldots,\alpha_n^{\rm M}(t))_{t\in[0,T]}^{\top}$ and $\beta^{\rm M}=(\beta_1^{\rm M}(t),\ldots,\beta_n^{\rm M}(t))_{t\in[0,T]}^{\top}$ such that, for $t\in[0,T]$,
\begin{align*}
J_k^{\rm M}(t)&=J_k^{\rm M}(0)+\sum_{i=1}^n\int_0^t\alpha_i^{\rm M}(s)dW_i^{\rm M}(s)+\sum_{i=1}^n\int_0^t\beta_i^{\rm M}(s)d\Upsilon_i^{\rm M}(s),
\end{align*}
and hence
\begin{align}\label{pkrep2}
p_k^{\rm M}(t)&=p_k^{\rm M}(0)+\sum_{j\in S_I}\int_0^tq_{jk}p_{j}^{\rm M}(s)ds+\sum_{i=1}^n\int_0^t\alpha_i^{\rm M}(s)dW_i^{\rm M}(s)\nonumber\\
&\quad+\sum_{i=1}^n\int_0^t\beta_i^{\rm M}(s)d\Upsilon_i^{\rm M}(s).
\end{align}

We next identify $\alpha^{\rm M}$ and $\beta^{\rm M}$ by taking $W^{o,\tau}$ defined by \eqref{eq:tildeWi} as a test process. By \eqref{eq:WM-M}, we have that $W_i^{o,\tau}(t)=W_i^{\rm M}(t)+\sigma_i^{-1}\int_0^{t\wedge\tau_i}(\mu_i^{\rm M}(p^{\rm M}(s))+\lambda_i^{\rm M}(p^{\rm M}(s),H(s)))ds$ for $t\in[0,T]$ which is $\FxM$-adapted. Then, for $i=1,\ldots,n$, it holds that
\begin{align}\label{eq:identyab}
(\zeta_k(t)W_i^{o,\tau}(t))^{\rm M}=\zeta_k^{\rm M}(t)W_i^{o,\tau}(t)=p_k^{\rm M}(t)W_i^{o,\tau}(t),\quad k\in S_I.
\end{align}
Note that $J_k$ is a semimartingale of pure jumps while $W_i^{o,\tau}$ is continuous. It is clear that $[\zeta_k,W_i^{o,\tau}]=[J_k,W_i^{o,\tau}]\equiv0$. Using integration by parts, we arrive at
\begin{align}\label{eq:condi1}
\zeta_k(t)W_i^{o,\tau}(t)
&=\int_0^tW_i^{o,\tau}(s)\sum_{j\in S_I}q_{jk}\zeta_j(s)ds+\int_0^tW_i^{o,\tau}(s)dJ_k(s)+\int_0^{t\wedge\tau_i}\zeta_k(s)dW_i(s)\notag\\
&\quad+\sigma_i^{-1}\int_0^{t\wedge\tau_i}(\mu_i(k)+\lambda_i(k,H(s)))\zeta_k(s)ds.
\end{align}
Note that both $W_i^{o,\tau}$ and $J_k$ are square-integrable semimartingales under $\Px$. Then, the second and the third terms on r.h.s. of \eqref{eq:condi1} are true $\Fx$-martingales. Taking the $\Px$-conditional expectation under $\FxM$ on both sides of \eqref{eq:condi1}, we can write the $\FxM$-semimartingale $(\zeta_kW_i^{o,\tau})^{\rm M}:=(\Ex\left[\zeta_k(t)W_i^{o,\tau}(t)|\F_t^{\rm M}\right])_{t\in[0,T]}$ by
\begin{align}\label{eq:zetaphiM}
&(\zeta_k(t)W_i^{o,\tau}(t))^{\rm M}=\Ex\left[\int_0^tW_i^{o,\tau}(s)dJ_k(s)+\int_0^{t\wedge\tau_i}\zeta_k(s)dW_i(s)\Big|\FM_t\right]\nonumber\\
&\qquad+\int_0^tW_i^{o,\tau}(s)\sum_{j\in S_I}q_{jk}p_j^{\rm M}(s)ds+\sigma_i^{-1}\int_0^{t\wedge\tau_i}(\mu_i(k)+\lambda_i(k,H(s)))p_k^{\rm M}(s)ds,
\end{align}
where the first term on the r.h.s. of \eqref{eq:zetaphiM} is a $(\Px,\FxM)$-martingale, and the rest terms are finite variation processes in the canonical decomposition of $(\zeta_kW_i^{o,\tau})^{\rm M}$. On the other hand, we also have that
\begin{align*}
p_k^{\rm M}(t)W_i^{o,\tau}(t)
&=\int_0^tW_i^{o,\tau}(s)\sum_{j\in S_I}q_{jk}p_j^{\rm M}(s)ds+\int_0^tW_i^{o,\tau}(s)dJ_k^{\rm M}(s)+\int_0^{t\wedge\tau_i}p_k^{\rm M}(s)dW_i^{\rm M}(s)\notag\\
&\quad+\sigma^{-1}_i\int_0^{t\wedge\tau_i}p_k^{\rm M}(s)(\mu_i^{\rm M}(p^{\rm M}(s))+\lambda_i^{\rm M}(p^{\rm M}(s),H(s)))ds+\int_0^{t\wedge\tau_i}\alpha^{\rm M}_i(s)ds,
\end{align*}
where the second and the third terms of the r.h.s. of the above equation are true $\FxM$-martingale due to the square integrability of $W_i^{o,\tau}$ and $p_k^{\rm M}$. By virtue of \eqref{eq:identyab}, we can compare the finite variation parts of $(\zeta_k(t)\phi_i(t))^{\rm M}$ and $p_k^{\rm M}(t)W_i^{o,\tau}(t)$ to obtain that, on $\{0<t\leq\tau_i\}$,
\begin{align*}
\alpha_i^{\rm M}(t)&=\sigma_i^{-1}p_k^{\rm M}(t)\left\{\mu_i(k)+\lambda_i(k)-\mu^{\rm M}_i(p^{\rm M}(t))-\lambda_i^{\rm M}(p^{\rm M}(t),H(t))\right\}\notag\\
&=\sigma_i^{-1}p_k^{\rm M}(t)\left\{(\mu_i(k)+\lambda_i(k,H(t))-\sum_{j\in S_I}\mu_i(j)p_j^{\rm M}(t)-\sum_{j\in S_I}\lambda_i(j,H(s))p_j^{\rm M}(t)\right\}.
\end{align*}

Finally, we replace the test process $W_i^{o,\tau}$ by the test process $H_i(t)$. Note that the Markov chain $I$ do not jump simultaneously with the default indicator process $H$. It holds that $[\zeta_k,H_i]\equiv0$. By applying a similar argument to identify $\alpha^{\rm M}$, one can show that, on $\{0<t\leq\tau_i\}$,
\begin{align*}
\beta_i^{\rm M}(t)&=\lambda^{\rm M}_i(p^{\rm M}(t-),H(t-))^{-1}p_k^{\rm M}(t-)\lambda_i(k,H(t-))-p_k^{\rm M}(t-)\nonumber\\
&=p_k^{\rm M}(t-)\left\{\frac{\lambda_i(k,H(t-))}{\sum_{l\in S_I}\lambda_i(l,H(t-))p_l^{\rm M}(t-)}-1\right\}.
\end{align*}
By substituting $(\alpha^{\rm M},\beta^{\rm M})$ in \eqref{pkrep2}, we arrive at the desired dynamics in \eqref{eq:dynamicspM000}.
\end{proof}

\begin{proof}[Proof of Lemma~\ref{lem:Xtpi2}]
We can see from \eqref{thetaX} that $\frac{X^{\pi}(T)}{X^{\pi}(t)}$ is $\FM_T$-measurable. A direct computation using \eqref{thetaX} and \eqref{eq:hatetaexpo} yields that
\begin{align*}
J(\pi;t,p,z)&=\Ex_{t,p,z}\left[\left(\frac{X^{\pi}(T)}{X^{\pi}(t)}\right)^{-\frac{\theta}{2}}\right]
=\Ex_{t,p,z}^{*}\left[\eta^{\rm M}(t,T)^{-1}\left(\frac{X^{\pi}(T)}{X^{\pi}(t)}\right)^{-\frac{\theta}{2}}\right]\nonumber\\
&=\Ex_{t,p,z}^{*}\left[e^{Q^{\pi,t}(T)}\right],
\end{align*}
which completes the proof.
\end{proof}

\begin{proof}[Proof of Lemma~\ref{lem:hNRN}] With the aid of \eqref{eq:fNhN} and the assumption {\rm({\bf H})}, we can see that, for $i=1,\ldots,n$,
\begin{align*}
h^N_i(0;p,z,\xi_i,v_i)
&\geq-\frac12\left|\xi_i\right|^2\rho_N(\xi_i){\bf1}_{\{|\xi_i|\leq N+2\}}-\lambda_i^{\rm M}(p,z)e^{v_i}\hat{\rho}_N(e^{v_i})\nonumber\\
&\geq-\left\{\frac{(N+2)^2}2+C(N+1)\right\}.
\end{align*}
On the other hand, for $\pi_i\in(-\infty,1)$,
\begin{align*}
h^N_i(\pi;p,z,\xi,v)&\leq-\frac\theta4\sigma_i^2\pi_i^2+\frac\theta2\left(\mu_i^{\rm M}(p)+\lambda^{\rm M}_i(p,z)-r\right)\pi_i+\lambda_i^{\rm M}(p,z)\notag\\
&\leq-\frac\theta4\sigma_i^2\pi_i^2+\frac\theta2(2C+r)|\pi_i|+C.
\end{align*}
For $i=1,\ldots,n$, we can take a constant $R_N>0$ only depending on $N$ such that, for all $\pi_i\in(-\infty,1)$ satisfying $|\pi_i|>R_N$, we have that
\begin{align*}
-\frac\theta4\sigma_i^2\pi_i^2+\frac\theta2(2C+r)|\pi_i|+C<-\left\{\frac{(N+2)^2}2+C(N+1)\right\}.
\end{align*}
Therefore, for all $\pi_i\in(-\infty,-R_N)$, it holds that $h^N_i(\pi_i;p,z,\xi_i,v_i)<h^N_i(0;p,z,\xi_i,v_i)$, which further implies that \eqref{eq:boundedsup} holds.
\end{proof}

\begin{proof}[Proof of Lemma~\ref{lem:boundedzeta}] By virtue of \eqref{eq:defh2}, we have that, for $(p,z,\pi_i)\in S_{p^{\rm M}}\times S_H\times(-\infty,1)$,
\begin{align*}
h_i(\pi_i;p,z,0,0)&=-\left(\frac\theta4+\frac{\theta^2}8\right)\sigma_i^2\pi_i^2+\frac\theta2(\mu_i^{\rm M}(p)+\lambda^{\rm M}_i(p,z)-r)\pi_i+\lambda_i^{\rm M}(p,z)\\
&\quad-\lambda_i^{\rm M}(p,z)(1-\pi_i)^{-\frac\theta2},\qquad i=1,\ldots,n.
\end{align*}
In light of the assumption ({\bf H}), we have that, for $i=1,\ldots,n$, $|\frac\theta2(\mu_i^{\rm M}(p)+\lambda^{\rm M}_i(p,z)-r)\pi_i|\leq\frac\theta4\{\pi_i^2+(2C+r)^2\}$. On the other hand, for $\pi_i\in(-\infty,1)$, we have that $R_2(\pi_i)\leq h_i(\pi_i;p,z,0,0)\leq R_1$,
where $R_1:=\frac\theta4(2C+r)^2+\frac{\theta}{4}+C$, and for $\pi_i\in(-\infty,1)$,
\begin{align*}
R_2(\pi_i):=-\left(\frac\theta4+\frac{\theta^2}8+\frac{\theta}{4\sigma_i^2}\right)\sigma^2_i\pi_i^2-C(1-\pi_i)^{-\frac\theta2}-\frac\theta4(2C+r)^2
+\varepsilon.
\end{align*}
Note that $R_3:=|\sup_{\pi_i\in(-\infty,1)}R_2(\pi_i)|<+\infty$. Then, for all $(p,z)\in S_{p^{\rm M}}\times S_H$,
\begin{align*}
\left|\sup_{\pi_i\in(-\infty,1)}h_i(\pi_i;p,z,0,0)\right|\leq R_1\vee R_3,\quad i=1,\ldots,n.
\end{align*}
Thanks to \eqref{eq:defh}, we deduce that $h_L(p,z,0,0)=\frac{r\theta}{2}$ for all $(p,z)\in S_{p^{\rm M}}\times S_H$. This verifies that $\zeta$ is a bounded r.v..
\end{proof}

\begin{proof}[Proof of Lemma~\ref{lem:monoYN}] For $u\in[t,T]$ and $i=1,\ldots,n$, we define that
\begin{align*}
\tilde{Z}^{N+1,N,i}(u)&:=(\tilde{Z}^{N+1}_1(u),\ldots,\tilde{Z}^{N+1}_i(u),\tilde{Z}^N_{i+1}(u),\ldots,\tilde{Z}^N_n(u)),\\ \tilde{V}^{N+1,N,i}(u)&:=(\tilde{V}^{N+1}_1(u),\ldots,\tilde{V}^{N+1}_i(u),\tilde{V}^N_{i+1}(u),\ldots,\tilde{V}^N_n(u)).
\end{align*}
Here, $\tilde{V}^N$ is the $\FxM$-predictable $\R^n$-valued bounded process satisfying \eqref{eq:estiBMO} in Lemma~\ref{lem:prioriestimate2}. We also set $\tilde{Z}^{N+1,N,0}(u)=\tilde{Z}^N(u)$, $\tilde{Z}^{N+1,N,n}(u)=\tilde{Z}^{N+1}(u)$, $\tilde{V}^{N+1,N,0}(u)=\tilde{V}^N(u)$ and $\tilde{V}^{N+1,N,n}(u)=\tilde{V}^{N+1}(u)$. For $i=1,\ldots,n$, let us define that
\begin{align*}
\gamma_i(u):=\frac{\tilde{f}^{N+1}(u,\tilde{Z}^{N+1,N,i}(u),\tilde{V}^{N+1}(u))-\tilde{f}^{N+1}(u,\tilde{Z}^{N+1,N,i-1}(u),\tilde{V}^{N+1}(u))}{\tilde{Z}^{N+1}_i(u)-\tilde{Z}^N_i(u)},
\end{align*}
if $(1-H_i(u-))\tilde{Z}^{N+1}_i(u)\neq(1-H_i(u-))\tilde{Z}^N_i(u)$, and it is $0$ otherwise. Let us also define
\begin{align*}
\eta_i(u):=\frac{\tilde{f}^{N+1}(u,\tilde{Z}^{N}(u),\tilde{V}^{N+1,N,i}(u))-\tilde{f}^{N+1}(u,\tilde{Z}^{N}(u),\tilde{V}^{N+1,N,i-1}(u))}{\tilde{V}^{N+1}_i(u)-\tilde{V}^N_i(u)},
\end{align*}
if $(1-H_i(u-))\tilde{V}^{N+1}_i(u)\neq(1-H_i(u-))\tilde{V}^N_i(u)$, and it is $0$ otherwise. Moreover, let us consider the probability measure $\Qx\sim\Px^*$ defined in \eqref{eq:Q} with $(\gamma_i(u),\eta_i(u))$ given above. By Lemma~\ref{lem:prioriestimate2}, for any $s\in[0,1]$ and $u\in[t,T]$, it holds that
\begin{align}\label{eq:VhatCT}
s\tilde{V}_i^{N+1}(u)+(1-s)\tilde{V}_i^N(u)\leq C_T,\quad\text{a.e.},
\end{align}
for some constant $C_T>0$ depending on $T>0$ only. By taking constant $N_0>C_T$, we have that, for all $N\geq N_0$,
\begin{align*}
&\frac{\tilde{f}^{N+1}(u,\tilde{Z}^{N}(u),\tilde{V}^{N+1,N,i}(u))-\tilde{f}^{N+1}(u,\tilde{Z}^{N}(u),\tilde{V}^{N+1,N,i-1}(u))}{\tilde{V}^{N+1}_i(u)
-\tilde{V}^N_i(u)}\nonumber\\
&\qquad\leq1-(1+R_{N+1})^{-\frac\theta2}e^{-C_T}.
\end{align*}
Hence, $\hat{W}^{o,\tau}=(\hat{W}^{o,\tau}(s))_{s\in[0,T]}$ and $\hat{\Upsilon}^*=(\hat{\Upsilon}^*(s))_{s\in[0,T]}$ defined by \eqref{eq:WhatM} are $(\Qx,\Fx^{\rm M})$-martingales. It follows from \eqref{eq:randomdirver} that $\tilde{f}^N(\omega,u,\xi,v)\geq \tilde{f}^{N+1}(\omega,u,\xi,v)$ for all $(\omega,u,\xi,v)$. By putting all the pieces together, \eqref{tuncatedBSDE} implies that, for $u\in[t,T]$,
\begin{align*}
\tilde{Y}^{N+1}(u)-\tilde{Y}^N(u)
&\geq-\int_u^T(\tilde{Z}^{N+1}(s)-\tilde{Z}^N(s))^\top d\hat{W}^{o,\tau}(s)\nonumber\\
&\quad-\int_u^T(\tilde{V}^{N+1}(s)-\tilde{V}^N(s))^\top d\hat{\Upsilon}^*(s).
\end{align*}
This confirms the desired comparison result that $\tilde{Y}^{N+1}(u)\geq\tilde{Y}^N(u)$, $\Px^*$-a.e., as we have $\Qx\sim\Px^*$.
\end{proof}

\end{appendix}

\begin{acks}[Acknowledgments]
We thank two anonymous referees for the careful reading and helpful comments.
\end{acks}

\begin{funding}
The first author was supported in part by Natural Science Foundation of China under grant no. 11971368 and 11961141009.
The second author was supported in part by Singapore MOE AcRF Grants R-146-000-271-112.
The third author was supported in part by the Hong Kong Early Career Scheme under grant no. 25302116.
\end{funding}




\end{document}